\pgfplotsset{width=10cm,compat=1.9}
\DeclarePairedDelimiter\floor{\lfloor}{\rfloor} 
\DeclareMathOperator*{\argmin}{arg\,min}
\theoremstyle{plain}
\newtheorem{theorem}{Theorem}[section]
\newtheorem{corollary}[theorem]{Corollary}
\newtheorem{proposition}[theorem]{Proposition}
\newtheorem*{proposition*}{Proposition}
\newtheorem{lemma}[theorem]{Lemma}
\newtheorem{maintheorem}{Main Theorem}
\newtheorem{observation}[theorem]{Observation}
\theoremstyle{definition}
\newtheorem{definition}[theorem]{Definition}
{\textcolor{red}{Question}}
\declaretheorem[style=definition,qed=$\bigtriangleup$,sibling=theorem]{example}
\newtheorem*{theorem*}{Theorem}
\theoremstyle{remark}
\newtheorem*{remark}{\upshape\bfseries Remark}
\definecolor{cobalt}{rgb}{0.0, 0.28, 0.67}
\newcommand{\EF}[1]{\if\relax\detokenize\expandafter{\@firstofone#1{}}\relax EF\xspace\else EF#1\fi}
\newcommand{\goods}{M}
\newcommand{\agents}{N}
\newcommand{\partitions}{\Pi}
\newcommand{\alloc}{A}
\newcommand{\allocn}{\alloc = (A_1,\dots A_n)}
\newcommand{\bundle}[1]{A_{#1}}
\newcommand{\bundlei}{\bundle{i}}
\newcommand{\MMSpartition}[2]{P^{#1} = (P^{#1}_1,\dots, P^{#1}_{#2})} 
\newcommand{\MMSnamedBundle}[2]{P^{#1}_{#2}} 
\newcommand{\MMSpartitionAnonym}[1]{P = (P_1,\dots, P_{#1})} 
\newcommand{\MMSbundle}[1]{P_{#1}} 
\newcommand{\MMSagent}[1]{P^{#1}} 
\newcommand{\MMS}[3]{\mu_{#1}^{#2}({#3})}
\newcommand{\MMSi}{\MMS{i}{n}{\goods}}
\newcommand{\MMSinstance}[2]{\mu_{#1}(#2)}
\newcommand{\GenInstance}{\calI=\langle\agents,\goods,\{v_i\}_{i \in N}\rangle}
\newcommand{\matchnfill}{\mathsf{Match\mbox{-}n\mbox{-}Fill}}
\newcommand{\bagfillcopy}{\mathsf{BagFill\mbox{-}with\mbox{-}Copies}}
\newcommand{\bagfillRR}{\mathsf{BagFill\mbox{-}RoundRobin}}
\newcommand{\reals}{\mathbb{R}}
\DeclarePairedDelimiter{\ceil}{\lceil}{\rceil}
\DeclarePairedDelimiter\norm{\lVert}{\rVert}%
\let\oldnorm\norm
\def\norm{\@ifstar{\oldnorm}{\oldnorm*}}
\newcommand{\bbE}{\mathbb{E}}
\newcommand{\calA}{\mathcal{A}}
\newcommand{\calB}{\mathcal{B}}
\newcommand{\calE}{\mathcal{E}}
\newcommand{\calI}{\mathcal{I}}
\newcommand{\calX}{\mathcal{X}}
\newcommand{\calY}{\mathcal{Y}}
\definecolor{darkscarlet}{rgb}{0.34, 0.01, 0.8}
\definecolor{crimson}{rgb}{0.86, 0.08, 0.24}
\definecolor{cadmiumgreen}{rgb}{0.0, 0.42, 0.24}
\definecolor{amber}{rgb}{1.0, 0.49, 0.0}
\definecolor{darkpastelgreen}{rgb}{0.01, 0.75, 0.24}
\newcommand\blfootnote[1]{%
  \begingroup
  \renewcommand\thefootnote{}%
  \NoHyper\footnote{#1}\endNoHyper
  \addtocounter{footnote}{-1}%
  \endgroup
}
\title{Maximin-Share Fairness via Resource Augmentation}
\author{
Hannaneh Akrami\thanks{Bonn University and Max Planck Institute for Informatics, Germany. Email: \texttt{hakrami@uni-bonn.de}}
\quad
Siddharth Barman\thanks{Indian Institute of Science, Bangalore, India. Email: \texttt{barman@iisc.ac.in}}
\quad
Alon Eden\thanks{The Hebrew University, Jerusalem, Israel. Email: {\tt alon.eden@mail.huji.ac.il}. Incumbent of the Harry \& Abe Sherman Senior Lectureship at the School of Computer Science
and Engineering at the Hebrew University.} 
\quad 
Michal Feldman\thanks{Tel Aviv University, Israel. Email: \texttt{mfeldman@tauex.tau.ac.il}}
\quad 
Amos Fiat\thanks{Tel Aviv University, Israel. Email: \texttt{fiat@tau.ac.il}}
\quad 
Yoav Gal-Tzur\thanks{Tel Aviv University, Israel. Email: \texttt{yoavgaltzur@mail.tau.ac.il}}
\quad
Satyanand Rammohan\thanks{Technical University of Munich, Germany. Email: \texttt{satya.rammohan@tum.de}}
\quad
Aditi Sethia\thanks{Indian Institute of Science, Bangalore, India. Email: \texttt{aditisethia@iisc.ac.in}}
}
\date{\today}
\begin{document}
\maketitle
\blfootnote{This project has been partially funded by the European Research Council (ERC) under the Horizon Europe Program (grant agreement No.~101170373), by the Israel Science Foundation (grant No. 533/23), by an Amazon Research Award, by the Israel Science Foundation Breakthrough Program (grant No.~2600/24), and by a grant from TAU Center for AI and Data Science (TAD), and by the NSF-BSF (grant number 2020788). Siddharth Barman acknowledges the support of the Walmart Center for Tech Excellence (CSR WMGT-23-0001) and an Ittiam CSR Grant (OD/OTHR-24-0032).}

\begin{abstract}
We introduce and formalize the notion of resource augmentation for maximin share (MMS) fairness for the allocation of indivisible goods. 
Given an instance with $n$ agents and $m$ goods, we ask how many copies of the goods should be added in order to guarantee that each agent receives at least their original MMS value, or a meaningful approximation thereof.
\begin{itemize}
    \item For general monotone valuations, we establish a tight bound: an exact MMS allocation can be guaranteed using at most $\Theta(m/e)$ total copies, and this bound is tight even for XOS valuations. We further show that it is unavoidable to duplicate some goods $\Omega(\ln m / \ln \ln m)$ times, and provide matching upper bounds.
    \item For additive valuations, we show that at most $\min\{n-2,\floor*{\frac{m}{3}}(1+o(1))\}$ distinct copies suffice. This separates additive valuations from submodular valuations, for which we show that $n-1$ copies may be necessary.
    \item We also study approximate MMS guarantees for additive valuations and establish new tradeoffs between the number of copies needed and the approximation guaratee. In particular, we prove that $\floor{n/2}$ copies suffice to guarantee a $6/7$-approximation to the original MMS, and $\floor{n/3}$ copies suffice for a $4/5$-approximation. Both results improve upon the best-known approximation guarantees for additive valuations in the absence of copies.
    \item Finally, we relate MMS with copies to the relaxed notion of 1-out-of-$d$ MMS, showing that improvements in either framework translate directly to the other. In particular, we establish the first impossibility results for 1-out-of-$d$ MMS.
\end{itemize}
Our results highlight the power and limits of resource augmentation for achieving MMS fairness.
\end{abstract}

\thispagestyle{empty}

\newpage
\setcounter{page}{1}

\section{Introduction}\label{sec:intro} 

Fair Division dates back to antiquity, (Genesis 13), where two agents (Abraham and Lot) used cut and choose \cite{Brams_Taylor_95,Brams_Taylor_1996} to split a contested resource. 
A natural notion of fairness is \textit{proportionality}, which requires that each of the $n$ participating agents receives at least $1/n$-fraction of her value for the entire set of resources \cite{steinhaus1948problem}. Unfortunately, for indivisible goods an allocation that achieves proportionality simultaneously for all agents may not exist.

A prominent relaxation of proportionality is the \emph{maximin share} (MMS) notion \cite{Hill87,budish2011combinatorial}.
Given $n$ agents and a set $M$ of indivisible items, the maximin share of an agent is defined as the maximum value she can guarantee herself by partitioning the set of items $M$ into $n$ bundles and then receiving the least valuable bundle.
Alas, even when restricted to additive valuations, where the value of a bundle is the sum of the values of its items, an MMS allocation need not exist, even with only three agents \cite{Kurokawa18,feige2021tight}. Nonetheless, MMS allocations do exist in several natural special cases; see \cite{bouveret2016characterizing}.

\paragraph{Relaxations of MMS.}
Given that achieving the maximin share is impossible, several relaxations of MMS have been studied. Implicitly introduced in \cite{budish2011combinatorial}, a $1$-out-of-$d$ MMS is the maximum value each agent can guarantee while partitioning the set $M$ into $d > n$ bundles (instead of $n$ bundles in the original MMS definition), and receiving a minimum valued bundle.
Clearly, the larger $d$ is, the easier it is to achieve the benchmark. The question then becomes, what is the smallest value of $d$ such that there is an allocation for $n$ agents so that each agent achieves the $1$-out-of-$d$ maximin share. 
The current state of the art is that $1$-out-of-$4\lceil n/3\rceil$ MMS allocations exist~\cite{akrami2023improving}.

An $\alpha$-MMS allocation is one where every agent gets at least an $\alpha$ fraction of the maximin share. This MMS relaxation was defined by Procaccia and Wang \cite{Kurokawa18}.
They show that $\alpha=2/3$ is attainable for additive valuations. This was subsequently improved in a sequence of papers, culminating in the recent work of \cite{huang2025fptas79}, which achieved a $\frac{7}{9}$-MMS allocation, the best known bound thus far.

\paragraph{MMS via Resource Augmentation.}
In this work we take a resource augmentation approach to overcome the potential absence of MMS allocation. Namely, duplication of some of the existing goods.

The idea of using copies of goods to achieve the MMS follows naturally from the work of Budish \cite{budish2011combinatorial}, that considered a fair allocation of courses to students.
Budish introduced a mechanism with strong fairness guarantees, but with a (vanishing) ``market clearing error'', i.e., the mechanism may allocate more than the number of available seats in a class.

Resource augmentation is a central tool in computer science for beyond worst-case analysis \cite{sleator1985amortized, Roughgarden_2021}.
A prominent example is machine scheduling, where resource augmentation has been employed by increasing machine speed, adding machines, providing extra memory, and more, thereby improving performance relative to the benchmark of the non-augmented settings, or circumventing known impossibility results 
\cite{extraSpeedMahcines00,chekuri2004multi,Albers04}.

A canonical example of resource augmentation in the economic literature comes from \cite{bulowklemperer96} in the context of revenue maximization in auctions, where the augmented resources are additional bidders. They show that in a single-item auction with $n$ i.i.d.~bidders, adding just one more bidder ensures that the revenue of a simple auction exceeds the optimal revenue attainable in the original $n$-bidder market, by a complex and  arguably non-intuitive mechanism.
This result inspired a sequence of follow up works \cite{eden2017competition, liu2018competition} on what is known as ``competition complexity''.
In fact, the notion of 1-out-of-$d$ MMS is somewhat related to the competition complexity measure, where large $d$ reflects high competition, leading to lower (easier) benchmarks.

More recently, resource augmentation has been studied in problems at the intersection of computer science and economics, including
selfish routing
\cite{roughgarden2002bad,roughgarden2021beyond},
seeding in social networks  \cite{akbarpour2025just,akbarpour2018diffusion},
facility location \cite{caragiannis2016truthful},
and prophet inequalities \cite{Brustle0DEFV24,BrustleCDV24}.

\subsection{Our Contributions and Techniques}
\label{sec:full-m-over-3}

Our main contribution is the introduction of a formal framework for resource augmentation, via item duplication, for the maximin share (MMS) fairness notion in fair division. Within this framework, we present both positive and negative results for different classes of valuation functions, shedding light on the possibilities and limitations of resource augmentation for achieving MMS guarantees. Our main research question is the following:

\vspace{0.1in}
\emph{How many items must be duplicated to guarantee the existence of an allocation in which every agent receives value of at least her (original) MMS?}
\vspace{0.1in}

In our analysis, we consider two measures of duplication: (i) the total number of additional copies, and (ii) the maximum number of copies of any single item. When allowing at most one extra copy per item, we refer to this setting as having {\sl distinct} copies. Our results are presented below and partly summarized in Table~\ref{tab:results}.


\begin{table}[h]
\centering
\begin{tabular}{|c|c|c|c|}
\hline
 & \textbf{Additive} & \textbf{Submodular} & \textbf{General (Monotone)} \\
\hline
\textbf{Upper Bound} 
& 
\begin{tabular}{@{}c@{}}
$\min\{n-2,\floor*{\frac{m}{3}}(1+o(1))\}$ \\
(Thm.~\ref{thm:upperboundadditive})
\end{tabular}
& 
$m/e$ 
& 
\begin{tabular}{@{}c@{}}
$m/e$ \\
(Prop.~\ref{prop:upper_bound_t_monotone})
\end{tabular}
\\
\hline
\textbf{Lower Bound} 
& 
\begin{tabular}{@{}c@{}}
$1$ \\
\cite{Kurokawa18}
\end{tabular}
& 
\begin{tabular}{@{}c@{}}
$n-1$ \\
(Prop.~\ref{prop:submod-lb})
\end{tabular}
& 
\begin{tabular}{@{}c@{}}
$m/e$ \text{(even for XOS)} \\
(Thm.~\ref{thm:lowermontone})
\end{tabular}
\\
\hline
\end{tabular}
\caption{Bounds on the total number of copies required to achieve exact MMS. 
Results for additive valuations hold even for distinct copies.}
\label{tab:results}
\end{table}


\paragraph{General Monotone Valuations.} 
We establish the following bound on the number of copies sufficient to guarantee an MMS allocation for general monotone valuations. Interestingly, this bound is tight, even for XOS valuations.

\begin{maintheorem}[General monotone valuations, Theorem \ref{thm:lowermontone} and Proposition~\ref{prop:upper_bound_t_monotone}]
For any set of $n$ agents with monotone valuations $\{v_i\}_{i \in [n]}$, and any set of $m$ goods, one can find an MMS allocation with copies, while making at most $\left(\frac{n-1}{n}\right)^n \cdot m$ extra copies.
Moreover, this bound is tight, even for XOS valuations.
\end{maintheorem}

Our upper bounds are established via the probabilistic method, for both the total number of copies and per-item number of copies.
Our lower bound is based on a careful construction of XOS valuations in which each agent $i$ has only $n$ designated bundles $P^i_1,\dots,P^i_n$, with the property that any bundle guaranteeing her MMS value must contain one of these $P^i_j$ as a subset. 
Moreover, the intersection between any $P^i_j$ and $P^{i'}_{j'}$ for $i \ne i'$ is large, which in turn forces the use of a large number of copies.





In addition to bounding the total number of copies, we show that the number of copies of any individual good is at most $O(\ln m/ \ln \ln m)$, and that this bound is tight; see Theorem \ref{thm:alg_for_monotone} and Lemma \ref{lem:bound_indivcopies}.\footnote{The theorem also shows that the number of copies of an individual good can be as high as $n$ (in an instance where $m$ is sufficiently large). 
Note that this result does not conflict with the above-mentioned lower bound of $n$, since the lower bound instance has $m=n^n$.}
Theorem \ref{thm:alg_for_monotone} further asserts that, given MMS partitions for the agents, one can efficiently find---with high probability---an MMS allocation that satisfies the stated bounds on the extra copies. 

 
\paragraph{Additive Valuations.} 
For additive valuations, all of our guarantees are achieved using {\sl distinct} copies, namely, with at most one extra copy of each item. Our main result for additive valuations is as follows:

\begin{maintheorem}[Additive valuations, Theorem~\ref{thm:upperboundadditive}]
For any instance with additive valuations, at most $\min\{n-2,\floor{m/3}(1+o(1))\}$ distinct copies suffice to guarantee an exact MMS allocation.
\end{maintheorem}

We achieve our upper bound by combining a novel bag-filling procedure which introduces additional copies together with a preliminary matching stage. Technically, this involves considering a set of minimal size which violates Hall's condition. This yields a tight bound for 3 additive agents (namely, a single copy of a single good is required to achieve MMS). 
Importantly, this bound is not tight for 4 or more agents, raising a very perplexing open problem. As far as we know today, a single copy of a single good may suffice to achieve an exact MMS allocation for any instance with additive valuations.




Interestingly, we show a separation between additive and submodular valuations, where $n-1$ copies are required to guarantee an MMS allocation.

\begin{proposition*}[Lower bound for submodular valuations, \Cref{prop:submod-lb}]
    For submodular valuations, any allocation with copies which achieves MMS requires at least $n-1$ total copies. 
\end{proposition*}




\paragraph{\boldmath $\alpha$-MMS Approximations with Copies for Additive Valuations.}

We then turn to $\alpha$-MMS for additive valuations, where we obtain new approximation guarantees that improve upon the best-known results without copies. We identify points along a tradeoff curve between the achievable $\alpha$-MMS guarantee and the number of copies required. Our main results are stated in the following theorem.



\begin{maintheorem}[$\alpha$-MMS for additive valuations, Theorem~\ref{thm:nover2copiesMMS} and~\ref{thm:nover3copiesMMS}]
For any instance with additive valuations, $\floor{n/2}$ copies suffice to guarantee a $6/7$-approximation to the original MMS, and $\floor{n/3}$ copies suffice for a $4/5$-approximation.
\end{maintheorem}

Two common techniques in the study of MMS allocations are $\alpha$-valid reductions, and considering ordered instances. 

An $\alpha$-valid reduction modifies an instance by removing a set of agents, each with an associated bundle that achieves an $\alpha$-fraction of her MMS value, while ensuring that the MMS values of the remaining agents do not decrease.
Our results use novel $\alpha$-valid reductions for setting with copies, see Section \ref{sec:reductions}.


An ordered instance is one where all agents have identical ranking over the items. 
It was shown in \cite{bouveret2016characterizing} that, one may assume, without loss of generality, that instances are ordered.
Unfortunately, their argument does not carry over to settings with copies.
Nevertheless, using an augmenting path argument, we prove that for a restricted class of allocations, in particular those produced by our algorithms, such a reduction applies (see \Cref{sec:ordered}). 
Consequently, our algorithms for achieving approximate MMS assume, without loss of generality, ordered instances. 
It is an intriguing open problem whether such a reduction applies in general.

Combining our new $\alpha$-valid reductions with the ordered instances assumption, our algorithm further uses bag filling and round robin approaches to achieve $\alpha$-MMS allocations with copies; see Algorithm \ref{alg:RR}.


\paragraph{\boldmath Relation to other relaxations of MMS.}
We next relate our notion of MMS with copies to other relaxations of MMS. 
We start with a relation to the notion of 1-out-of-$d$ MMS, which holds for any class of valuations.
We show that 1-out-of-$(1+\beta)n$ MMS implies that exact MMS can be achieved with $\beta m (1+o(1))$ distinct copies (see \Cref{lem:1ood_reduction}). In the other direction, one can show that, if $n\beta$ copies are sufficient for the existence of exact MMS, then 1-out-of-$\frac{n}{1-\beta}$ MMS allocations exist. 
This is because at least $(1-\beta)n$ agents do not receive any copies, and obtain their MMS values.

An interesting consequence for additive valuations is that achieving MMS with less than $n/4$ copies improves upon the currently best-known bound for 1-out-of-$d$, so such an improvement may be non-trivial.

Moreover, the example in \Cref{thm:lowermontone} which shows the lower bound of $(\frac{n-1}{n})^n \cdot m$ on the number of extra copies, also gives the following intriguing impossibility result for the existence of $1$-out-of-$d$ MMS allocations for XOS valuations (whose proof is deferred to \Cref{appendix:ordinal-XOS}). Notably, this is the first impossibility result for 1-out-of-$d$ MMS.

Additionally, for subadditive valuations, \cite{feige2025multi} showed that an $\alpha$-MMS allocation where each good is copied at most $k$ times (so the total number of copies is at most $nk$), implies the existence of a $\frac{\alpha}{4k}$-MMS allocation with no copies. 
In particular, proving the existence of such an allocation for a constant $\alpha$ and $k$, will improve upon the best known $\Omega(1/\log \log n)$-MMS for subadditive valuations given by \cite{feige2025multi}.


\begin{maintheorem}[Impossibility of 1-out-of-$d$ for XOS valuations]
\label{thm:XOS-ordinal-impossible}
    There exists no function $f:\mathbb{N} \rightarrow \mathbb{N}$ such that for all fair division instances $\GenInstance$ with $n$ agents and XOS valuations, a $1$-out-of-$f(n)$ MMS allocation exists.
\end{maintheorem}

This impossibility result stands in stark contrast to $\alpha$-MMS guarantees for XOS valuations. Indeed, it is shown in \cite{feige2025fair} that a constant approximation (precisely, $4/17$) exists for XOS valuations.

\subsection{Open Problems}
Our framework suggests a number of intriguing open problems. We highlight several directions below.

\vspace{-0.3em}
\begin{itemize}\setlength{\itemsep}{0.005em}

\item The preeminent open problem is whether adding just one extra copy of a single good suffices to guarantee an MMS allocation for additive valuations. More broadly, a natural direction is to further narrow the existing gaps for additive valuations.

\item We reduce the problem of MMS with \emph{distinct} copies to the $1$-out-of-$d$ problem (Lemma~\ref{lem:1ood_reduction}). Consequently, any positive $1$-out-of-$d$ result for a given valuation class would immediately imply an upper bound on the number of copies required to guarantee full MMS for that class. While we establish an impossibility result for XOS valuations (Theorem~\ref{thm:XOS-ordinal-impossible}), it remains open whether any positive or negative results exist for narrower valuation classes, such as submodular or gross-substitutes. 

\item Another natural and largely unexplored direction is the design of \emph{truthful} mechanisms that achieve MMS (or approximate MMS) allocations under resource augmentation. In particular, can one design incentive-compatible mechanisms that guarantee MMS allocations with copies?

\item In standard resource-augmentation frameworks, performance is measured relative to the \emph{original} instance. An alternative direction is to study the effect of adding copies and evaluating fairness with respect to the new market.

\item Our algorithms for additive valuations duplicate at most one copy per item and aim to minimize the total number of copies. A different approach is to allow multiple copies of a single item. Even if up to $n$ copies of one item are permitted, determining the minimum total number of copies required to guarantee MMS remains open.

\item More generally, it would be interesting to explore the resource-augmentation paradigm in the context of other fairness notions, such as EF/EFX.

\end{itemize}

\subsection{Other Related Work}

Fair division of indivisible goods is a vast area of research in recent years, a survey paper by Amanatidis et al., ``Fair Division of Indivisible Goods: Recent Progress and Open Questions'', \cite{Amanatidis_2023}, includes 12 pages of references, and the authors stress that this is not an exhaustive survey and refer to yet other surveys (e.g., \cite{moulinsurvey} with a more economic perspective) that focus on specific aspects of the problem. 

Fair division notions can be roughly partitioned into share based \cite{BabaioffF22} and envy based \cite{Brams_Taylor_95}. In share based fair division one ignores what the others get, one is happy if one gets value above some threshold (that does not depend on others, only on their cardinality), proportionality and MMS are share based. In envy based fair division one decides if one is happy or not depending on the others' allocated bundles. The basic envy based fairness notion is that being envy free (one does not seek to exchange allocations with the other). For indivisible goods, envy freeness is impossible in general and thus relaxations were proposed. These include envy freeness up to one good (EF1) \cite{Lipton2004,budish2011combinatorial} and envy freeness up to any good (EFX) \cite{Caragiannis19}. 
EF1 allocations are known to exist, the question of when do EFX allocations exist is a notoriously hard open problem. The number of citations dealing with envy free, EF1 and EFX allocations is quite daunting, see \cite{Amanatidis_2023}. In this paper we deal with share based, and more specifically MMS fairness.

Prior to Budish \cite{budish2011combinatorial}, Hill also considered the maximin share \cite{Hill87} and gave lower bounds on the maximin share as a function of the item values. 

There was a long series of papers improving upon results for $\alpha$-MMS {for additive valuations}, complexity improvements and improvements in the quality of the result \cite{Amanatidis_2017,Kurokawa18,garg2019approximating,barman2020approximation}, which led to $\alpha=3/4$  \cite{ghodsi2022fair,garg2021improved,akrami2023simplification}, and slightly beyond \cite{akrami2023breaking34barrierapproximate}. 
 Recently,  \cite{heidari2025improved1013} proved the existence of a  $\frac{10}{13}$-MMS, which was then improved to $\frac{7}{9}$-MMS by \cite{huang2025fptas79}. 
 
 $\alpha$-MMS allocations have also been studied in the settings with more general valuations such as submodular \cite{barman2020approximation,ghodsi2022fair,uziahu2023,chekuri20241,chekuri2025cycle,uziahu2023}, XOS \cite{ghodsi2022fair,SEDDIGHIN2024104049,Akrami2023}, and subadditive \cite{SEDDIGHIN2024104049,feige2025,seddighin2025beating, feige2025fair}.
 Notably, and as discussed in \cite{seddighin2025beating}, one can deduce from Lemma 6.1 in \cite{SEDDIGHIN2024104049} the existence of an allocation in which every item is copied at most $\log_{3/2} n +1$ times and guarantees $1/4$-MMS for subadditive valuations.
 Interestingly, this ``multi-allocation'', as it is termed in \cite{SEDDIGHIN2024104049,seddighin2025beating}, is then used to construct an approximate MMS allocation without copies, a process was later improved by \cite{feige2025multi}.
 This exemplifies how the resource augmentation approach is not only of independent interest, but can be used as a stepping stone to improve results in the non-augmented problem.

Besides MMS and $\alpha$-MMS, other relaxations/approximations to proportionality include proportionality up to one good (PROP1), Round Robin Share, Pessimistic proportional share \cite{Conitzerprop1}, proportional up to any good (PROPx) \cite{aziz2020polynomial} which need not exist, and  proportionality up to the maximin item (PROPm) \cite{baklanov2021achievingproportionalitymaximinitem} which are known to exist for 5 agents. Allocations that are both PROP1 and Pareto efficient exist \cite{Conitzerprop1}. 
A formal definition of shares and of feasible shares appears in \cite{BabaioffF22,BabaioffEF24}, MMS itself is not feasible as it may not exist, even for additive valuations. 
A recent share-based notion is quantile-share \cite{BabichenkoFHN24}, which is guaranteed to exist for additive valuations and, more generally, for monotone valuations (then called {\sl universally feasible}), assuming the Erd\"os Matching Conjecture holds. Recently, Feige \cite{feige2025residualmaximinshare} introduced a new feasible relaxation of MMS called \emph{Residual MMS}.
Finally, the work of \cite{gafni2023unified} studies envy-freeness 
(and relaxations) for a setting in which multiple copies of each good exist, and each agent can get at most one copy of a good.

There is a long history of considering resource augmentation in various forms in computer science and economics, see chapter on resource augmentation 
 in ``Beyond the Worst-Case Analysis of Algorithms" \cite{Roughgarden_2021}. Resource augmentation appears in Bulow Klemperer \cite{bulowklemperer96} where simple auctions with one extra bidder guarantee revenue at least as large as that of complex auctions without the extra bidder. This was further studied in \cite{HartlineR09} and since then has been used in many contexts such as auction design, for both revenue and welfare, prophet inequalities, etc. It also goes by the name of competition complexity, see \cite{EdenFFTW17a,RoughgardenTY20,babaioff2019welfaremaximization,BrustleCDV24,
Brustle0DEFV24,derakhshan2024settlingcompetitioncomplexity}.
 
\section{Model and Preliminaries}\label{sec:prelims}

In this section we present our model and preliminaries for MMS allocations of goods. 

Following standard usage, define $[k]:=\{1,2, \ldots, k\}$ for integer $k \geq 1$. An instance of a fair division problem is given by a tuple $\GenInstance$, where $\goods$ is a set of $m$ indivisible goods, and $\agents$ is a set of $n$ agents. 
Every agent $i$ has a valuation function $v_i : 2^\goods \to \reals_{\geq 0}$, associating a non-negative value to every set $S \subseteq \goods$ of goods, denoted by $v_i(S)$. For subsets $S \subseteq \goods $ and $g \in \goods$ we will write $S+g$ to denote the union $S \cup \{g\}$.
As standard, we assume that $v_i$ is monotone, i.e., for any $T\subseteq S \subseteq \goods$, $v_i(T) \le v_i(S)$, and normalized, so that $v_i(\emptyset)=0$. 

In this work we also consider instances where the valuation functions are of a specific class.
A valuation function $v_i$ is said to be {\em additive} if 
for every set of items $S\subseteq \goods$, $v_i(S) = \sum_{g \in S} v_i(\{g\})$. In this case, we abuse the notation and write $v_i(g)$ or $v_{ig}$ instead of $v_i(\{g\})$ for $g \in \goods$.

An allocation of the set $\goods$ is a tuple $\allocn$, where $\bundlei$ denotes the bundle allocated to agent $i$, such that $\bigcup_{i \in [n]} \bundlei = \goods$ (i.e., all items are allocated), and $\bundlei \cap \bundle{j} = \emptyset$ for any $i \ne j$ (i.e., no item is allocated to more than one agent).
We denote the collection of all allocations of the set $\goods$ among $n$ agents by $\partitions_n(\goods)$.

The \emph{maximin share} (MMS) \cite{budish2011combinatorial} of agent $i$  is the value that agent $i$ can guarantee by splitting the set $\goods$ into $n$ bundles, getting the worst out of them. Formally,

\begin{definition}[Maximin Share (MMS)]
For an instance $\GenInstance$ with $n$ agents, the \emph{maximin share} (MMS) of agent $i$ with valuation function $v_i$, denoted by $\mu_i^n (\goods)$, is given by 
$$
\MMSi = \max_{(A_1, \ldots, A_n) \in \partitions_n(\goods)}\ \min_{1 \leq \ell \leq n} v_i \left(A_\ell \right).
$$
When convenient, we sometimes abuse notation, and refer to this value as $\MMSinstance{i}{\calI}$. 
\end{definition}

Without loss of generality, we remove all agents whose MMS value is $0$. These agents get their full MMS with empty bundles.
As valuations can be scaled, we may assume, without loss of generality that the MMS value for every agent is one. For additive valuations, this implies that every agent values the grand bundle (all items) at $n$ or more. 

Any partition that realizes the MMS value of agent $i$, is said to be an MMS partition of agent $i$. That is, a partition $\MMSpartitionAnonym{n}$ of the goods into $n$ bundles is an MMS partition of agent $i$, if and only if $\MMSi = \min_{j} v_i(\MMSbundle{j})$. 

We next present the notion of an approximate MMS allocation.

\begin{definition}[$\alpha$-MMS]
    Given a fair division instance $\GenInstance$ and $\alpha>0$, an allocation $\alloc$ is $\alpha$-MMS if and only if for all agents $i \in \agents$, $v_i(\bundlei) \geq \alpha \cdot \MMSi$.  
\end{definition}

\paragraph{Valid reductions.}
Valid reductions allow us to remove a subset of agents and items while ensuring that the MMS values of the remaining agents do not decrease.
The following lemma is due to \cite{amanatidis2018comparing,bouveret2016characterizing}.

\begin{lemma}[\cite{amanatidis2018comparing,bouveret2016characterizing}]\label{lem:MMSmonotone}
    For any agent $i \in \agents$, any monotone, non-negative, valuation functions $v_i:2^\goods \to \reals_{\geq 0}$ and any good $g \in \goods$, 
    $
    \MMSi \le \MMS{i}{n-1}{\goods \setminus \{g\}}.
    $
\end{lemma}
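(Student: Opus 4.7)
The plan is a direct argument via manipulation of an optimal partition. Let $P=(P_1,\ldots,P_n)$ be an MMS partition of $M$ for agent $i$ into $n$ bundles, so that $v_i(P_j) \geq \mu_i^n(M)$ for every $j \in [n]$. Without loss of generality, say the good we are removing satisfies $g \in P_n$. I would then construct an explicit partition of $M \setminus \{g\}$ into $n-1$ bundles whose minimum value witnesses the desired inequality.

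Specifically, define the partition $Q=(Q_1,\ldots,Q_{n-1})$ of $M \setminus \{g\}$ by $Q_j = P_j$ for $j \in [n-2]$, and $Q_{n-1} = P_{n-1} \cup (P_n \setminus \{g\})$. This is a valid partition of $M \setminus \{g\}$ into $n-1$ bundles. For $j \in [n-2]$, $v_i(Q_j) = v_i(P_j) \geq \mu_i^n(M)$. For the merged bundle, monotonicity of $v_i$ gives
$$v_i(Q_{n-1}) = v_i(P_{n-1} \cup (P_n \setminus \{g\})) \geq v_i(P_{n-1}) \geq \mu_i^n(M).$$
Hence $\min_{j \in [n-1]} v_i(Q_j) \geq \mu_i^n(M)$, and since $\mu_i^{n-1}(M \setminus \{g\})$ is the maximum such minimum over all $(n-1)$-partitions, we conclude $\mu_i^{n-1}(M \setminus \{g\}) \geq \mu_i^n(M)$.

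There is no real obstacle here: the proof only uses monotonicity of $v_i$ (to absorb the leftover $P_n \setminus \{g\}$ without decreasing the value of the bundle it is merged into) and the trivial fact that merging two bundles never makes either component smaller. The argument goes through identically whether $P_n \setminus \{g\}$ is empty or not, and it does not rely on additivity.
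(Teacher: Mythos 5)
Your proof is correct. The paper itself does not prove this lemma---it is cited from \cite{amanatidis2018comparing,bouveret2016characterizing}---and your argument (remove $g$ from its bundle in an MMS partition, merge the remainder into another bundle, and use monotonicity to see that no bundle's value drops below $\mu_i^n(M)$) is exactly the standard proof given in those references, so there is nothing to add.
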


By applying \Cref{lem:MMSmonotone} repeatedly, we conclude that for any $k < \min\{m,n\}$, the MMS of agent $i$ may only increase if we remove an arbitrary set of $k$ goods and an arbitrary set of $k$ agents. This is shown in the following corollary.

\begin{corollary}\label{cor:MMSmonotoneK}
    For any agent $i \in \agents$, any $k < \min\{m,n\}$, and any $S \subseteq \goods$ s.t. $|S|=k$, it holds that
    $$
    \mu_i^n(\goods) \le \mu_i^{n-k}(\goods \setminus S).
    $$
\end{corollary}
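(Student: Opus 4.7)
The plan is a straightforward induction on $k$, which the statement itself suggests by the phrase ``by applying \Cref{lem:MMSmonotone} repeatedly.''

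First, I would handle the base case $k=0$, where $S=\emptyset$ and the inequality reduces to the trivial $\mu_i^n(M) \le \mu_i^n(M)$. For the inductive step, I would assume the statement holds for $k-1$ (and for every agent and every valid ground set of agents/items), and prove it for $k$. Given any $S \subseteq M$ with $|S|=k$ and any $k<\min\{m,n\}$, pick an arbitrary good $g \in S$ and let $S' = S \setminus \{g\}$, so $|S'|=k-1$. Applying \Cref{lem:MMSmonotone} to agent $i$, ground set $M$, and good $g$ gives
$$
\mu_i^n(M) \le \mu_i^{n-1}(M \setminus \{g\}).
$$
Then I would apply the inductive hypothesis to the reduced instance on $n-1$ agents and the ground set $M \setminus \{g\}$ (of size $m-1$), removing the set $S' \subseteq M \setminus \{g\}$ of size $k-1 < \min\{m-1,n-1\}$, which yields
$$
\mu_i^{n-1}(M \setminus \{g\}) \le \mu_i^{(n-1)-(k-1)}\bigl((M\setminus\{g\}) \setminus S'\bigr) = \mu_i^{n-k}(M \setminus S).
$$
Chaining the two inequalities gives the desired bound.

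The only thing to verify carefully is that the inductive hypothesis applies, i.e., that $k-1 < \min\{m-1, n-1\}$. This follows from $k < \min\{m,n\}$: we have $k - 1 < n - 1$ and $k - 1 < m - 1$, so the precondition of the corollary is met at each step. There is no real obstacle here; the lemma of \cite{amanatidis2018comparing,bouveret2016characterizing} does all the work, and the corollary is just its iterated form.
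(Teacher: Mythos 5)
Your proof is correct and matches the paper's approach: the paper simply notes that the corollary follows ``by applying \Cref{lem:MMSmonotone} repeatedly,'' which is exactly the induction on $k$ you spell out. Your verification that $k-1 < \min\{m-1,n-1\}$ keeps the inductive hypothesis applicable is the only detail worth stating, and you handled it properly.
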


We remark that a similar notion of $\alpha$-valid reductions is useful when pursuing approximate MMS allocations. This is defined and used in Section \ref{approxMMS}.

\paragraph{MMS with Copies.} An allocation with copies is an allocation where goods may be duplicated, with different {instances of the same good} allocated to different agents. An agent may receive at most one {instance} of each good. An allocation with copies is a relaxation of an allocation where the two or more bundles may have a non-empty intersection.

\begin{definition}[$(t,k)$-allocation]
    In a fair division instance $\GenInstance$, a collection of subsets of items $\allocn$ is said to be a $(t,k)$-allocation if 
    \begin{enumerate}
        \item $\bundlei \subseteq \goods$ for each agent $i \in \agents$.
        \item For each good $g \in \goods$, the number of assigned {\it extra} copies is at most $k$. That is, for each $g \in \goods$ and (copy count) $k_g \coloneqq \max \Big\{ 0, \ \ \left| \left\{ i \in \agents \mid g \in A_i \right\}\right| - 1 \Big\}$, it holds that $k_g \leq k$.
        \item The total number of \emph{extra} copies is at most $t$, i.e., $\sum_{g \in \goods} k_g \leq t$. 
    \end{enumerate}
\end{definition}
Note that, the definition above implies that no agent is allowed to receive more than a single {instance} of each good. 
We also consider valid reductions with copies which appear in Section \ref{sec:reductions}.  Also, for a $(t,k)$-allocation $\allocn$ it holds that $\sum_{i \in \agents} |\bundlei| \le |\goods|+t$.

\paragraph{Classes of Valuation Functions.}
We consider different classes of valuation functions $v:2^{M}\to \reals_+$, defined below. 
\begin{itemize}
    \item Additive: For any $S \subseteq M$, it holds that $v(S) = \sum_{g \in S} v(\{g\})$.
    \item Submodular: For any $S,T\subseteq M$, $v(S \cup T) \le v(S) + v(T) - v(S \cap T)$.
    \item XOS: There exists a finite set of additive functions $\mathcal{L}$ such that for any $S \subseteq M$, $v(S)=\max_{\ell \in \mathcal{L}} \ell(S)$.
    \item Subadditive: For any $S,T\subseteq M$, $v(S \cup T) \le v(S) + v(T)$.
\end{itemize}
It is well known that 
$\text{Additive} \subsetneq \text{Submodular}  \subsetneq 
\text{XOS}  \subsetneq 
\text{Subadditive} $ \cite{lehmann2001combinatorial}.

\section{Exact Maximin Shares}
\label{sec:exactMMS}

\subsection{XOS Valuations and Beyond}
\label{sec:monotone_goods}

In this section we show that for any instance with general monotone valuation there exists an MMS allocation using at most $m/e$ total copies and no more than $\ln m / \ln \ln m$ copies of an individual good.
We show that these bounds are tight even when all valuations are XOS (a strict subclass of subadditive valuations).
In \Cref{subsec:xos_lower_bounds} we describe our lower bound construction, in which $m/e$ total copies and $\ln m / \ln \ln m$ copies of a single good are required for achieving an MMS allocation.
In sections \ref{subsec:monotone_total_copies_upper} and \ref{subsec:general-per-good} we prove the matching upper bounds for the total and individual number of copies, respectively.

\subsubsection{\boldmath Lower Bound of $m/e$ on the Total Number of Copies for XOS Valuations}
\label{subsec:xos_lower_bounds}

Our lower bound is based on a combinatorial construction which considers $m=n^n$ goods positioned on the $n$-dimensional cube, so each good may be identified with $n$ coordinates, $(d_1,\dots,d_n)$.
The valuations are such that the agent $i$'s unique MMS partition is such that MMS bundle $j$, $P_i^j$, contains all $n^{n-1}$ goods that satisfy $d_i=j$.
The value of a certain set of goods $S$ is determined by the maximal intersection $|S \cap P_i^j|$.
A simple counting argument shows that the only way to satisfy all agents simultaneously is to copy $(n-1)^n$ goods, which is a $1/e$-fraction of the total number of goods, as $n$ tends to infinity.

\begin{theorem} \label{thm:lowermontone}
    For any $n$, there exists an instance with $n$ XOS valuations such that, the total number of copies required to achieve MMS, is a $\left(\frac{n-1}{n}\right)^n$-fraction of the goods, and there is at least one good which must be copied $(n-1)$ times.
\end{theorem}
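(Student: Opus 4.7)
My plan is to realize the intuition from the paragraph preceding the theorem. Take $m = n^n$ goods identified with coordinate vectors $(d_1,\dots,d_n) \in [n]^n$, and for each agent $i \in [n]$ and value $v \in [n]$ define the axis-aligned slice $S_{i,v} = \{(d_1,\dots,d_n) : d_i = v\}$ of cardinality $n^{n-1}$. Set agent $i$'s valuation to be the monotone function $v_i(T) = |\{v : S_{i,v} \subseteq T\}|$. Since $\{S_{i,1},\dots,S_{i,n}\}$ is a partition of $M$ into $n$ bundles of value $1$ each, $\mu_i^n(M) = 1$; and by construction a bundle has positive value to agent $i$ iff it fully contains some slice, so any MMS allocation must hand every agent $i$ some complete slice $S_{i,v_i}$.

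For the counting, fix an arbitrary MMS $(t,k)$-allocation and, for each $i$, fix $v_i$ such that $S_{i,v_i} \subseteq A_i$. Writing $c_g = |\{i : g \in A_i\}|$, the total number of extra copies used equals $\sum_g \max(c_g - 1,\, 0) = \sum_g c_g - |\{g : c_g \geq 1\}|$. The slice containment forces $c_g \geq |\{i : d_i = v_i\}|$, so $\sum_g c_g \geq \sum_i |S_{i,v_i}| = n \cdot n^{n-1} = n^n$. Meanwhile, the union $\bigcup_i S_{i,v_i}$ consists of exactly those vectors with at least one coordinate matching some $v_i$, and by a complement count its size is $n^n - (n-1)^n$. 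I will then observe that adding extras to bundles cannot reduce the total: an extra drawn from the uncovered cube (a good with $d_i \neq v_i$ for every $i$) increases $\sum_g c_g$ and the support $|\{g : c_g \geq 1\}|$ by one each, while re-using an already-allocated good increases only $\sum_g c_g$; either way, $\sum_g c_g - |\{g : c_g \geq 1\}|$ is minimized by the ``slice-only'' assignment, giving the bound $n^n - (n^n - (n-1)^n) = (n-1)^n = \left(\tfrac{n-1}{n}\right)^n \cdot m$. For the per-good claim, the single vector $(v_1,\dots,v_n)$ belongs to $S_{i,v_i}$ for every $i$ simultaneously, hence is forced into every $A_i$, so it alone requires $n-1$ extra copies.

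The step I expect to need the most care is the accounting argument showing that extras beyond the chosen slices cannot lower the copy count; this is just the two-case analysis above, but it must be written so that it holds for every possible choice of $v_i$'s (the symmetry of $[n]^n$ makes the choice essentially irrelevant to the final count). Everything else is either a definitional verification (monotonicity of $v_i$, value of $\mu_i^n(M)$) or a direct consequence of the combinatorics of the grid $[n]^n$.
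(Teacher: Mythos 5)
Your proposal is correct and follows essentially the same route as the paper: the same $n$-dimensional cube construction with agents who only value complete axis-aligned slices, the same observation that each $A_i$ must contain a full slice, and the same count $n\cdot n^{n-1}-\bigl(n^n-(n-1)^n\bigr)=(n-1)^n$ with the point $(v_1,\dots,v_n)$ forcing $n-1$ copies of a single good. The only (cosmetic) differences are that you compute the union size by complementation rather than inclusion--exclusion and that you spell out explicitly why goods beyond the chosen slices cannot decrease the copy count, a step the paper dispatches with a ``without loss'' remark.
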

\begin{proof}
Consider $n$ agents and $n^n$ goods positioned on the $n$-dimensional cube, i.e., the set of goods is 
$$
\goods = \{g_{d_1,\dots,d_n} \mid d_1, \dots, d_n \in [n]\}.
$$
We design agent $i$'s valuation so that her MMS partition,
$\MMSpartition{i}{n}$, is such that  
$\MMSnamedBundle{i}{j} = \{g_{d_1,\dots,d_n} \mid d_i = j \}$ and $v_i(\MMSnamedBundle{i}{j})=1$ for any $j \in [n]$. Observe that each MMS bundle has exactly $n^{n-1}$ goods.
\newcommand{\addfunc}[2]{f_{#1}^{#2}}

The valuation of agent $i$ is defined using $n$ additive functions, $\addfunc{i}{j}$ for any $j \in [n]$, where a good has positive value if only if its $i$th coordinate is $j$ (i.e., it belongs to $j$th MMS bundle of agent $i$), namely $\addfunc{i}{j}(g_{d_1,\dots,d_n}) = \frac{1}{n^{n-1}}\cdot I[d_i=j]$.
We define $v_i(S) = \max_{j \in [n]} \addfunc{i}{j}(S)$, which makes $v_i$ an XOS function.
Another way to write $v_i$ is $v_i(S) = \frac{1}{n^{n-1}}\cdot \max_{j \in [n]} |S \cap \MMSnamedBundle{i}{j}|$, which immediately implies the following observation.
\begin{observation}\label{obs:MustHavebundle}
    For any agent $i \in [n]$, $v_i(S) = 1$ if and only if there exists $j \in [n]$ such that $\MMSnamedBundle{i}{j} \subseteq S$.
\end{observation}

Observe that for any $k$ distinct agents, $i_1, \dots, i_k \in [n]$ and for any $k$ indices $j_1, \dots, j_k \in [n]$, 
$$
|\MMSnamedBundle{i_1}{j_1} \cap \MMSnamedBundle{i_2}{j_2} \cap \dots \cap \MMSnamedBundle{i_k}{j_k}|
=
|\{g_{d_1,\dots,d_n} \mid \forall l \in [k] \;\; d_{i_l} = j_l \}|
=
n^{n-k}.
$$

If we wish to find an allocation with copies $\allocn$ such that for every $i$, it holds that $v_i(\bundlei) \ge 1$, while minimizing the number of items we copy, then  by \Cref{obs:MustHavebundle} it is without loss of generality to assume that $\bundlei = \MMSnamedBundle{i}{j_i}$ for some $j_i \in [n]$.
Thus, the total number of goods (copies included) is $\sum_{i \in [n]} |\bundlei| = \sum_{i \in [n]} |\MMSnamedBundle{i}{j_i}| = n \cdot n^{n-1} = n^n$.
The number of different goods used is, using the inclusion-exclusion principle:
\begin{eqnarray*}
    \left| \bigcup_{i\in [n]} \MMSnamedBundle{i}{j_i} \right|
    &=&
    \sum_{k=1}^n (-1)^{k+1} \binom{n}{k} n^{n-k} 
    =
    n^n-n^n-\left(\sum_{k=1}^n (-1)^{k} \binom{n}{k} n^{n-k} \right) \\
    &=&
    n^n - \left(\sum_{k=0}^n (-1)^{k} \binom{n}{k} n^{n-k}\right)
    =
    n^n - (n-1)^n,
\end{eqnarray*}

which implies that the total number of copies is $(n-1)^n$.
Additionally, in any such allocation there exists a good which is allocated $n$ times, and thus duplicated $n-1$ times, as  $g_{j_1,j_2,\dots,j_n} \in \bigcap_{l=1}^n P_{j_l}^{i_l}$, and the claim follows.
\end{proof}

\begin{remark}
    The above construction implies that under XOS valuations, one cannot hope to achieve a 1-out-of-$f(n)$ MMS, for any function $f(n)$ of the number of agents, see \Cref{thm:XOS-ordinal-impossible}.
\end{remark}

\subsubsection{Upper Bound on the Total Number of Copies for General Monotone Valuations}\label{subsec:monotone_total_copies_upper}

Below we prove a matching upper bound on the number of total copies required. We show that there always exists an allocation which assigns to every agent one of her MMS bundles and makes no more than $\left(1-\frac{1}{n}\right)^n \cdot m$ additional copies.

\begin{proposition}\label{prop:upper_bound_t_monotone}
For any set $\agents$ of $n$ agents with monotone valuations $\{v_i\}_{i \in N}$, and any set $\goods$ of $m$ goods, one can find an MMS allocation with copies, while making at most $\left(\frac{n-1}{n}\right)^n \cdot m$ extra copies.
\end{proposition}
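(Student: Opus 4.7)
The plan is to use the probabilistic method, as hinted in the acknowledgement. For each agent $i$, fix an MMS partition $(P_i^1, \ldots, P_i^n)$ of $M$, so that $v_i(P_i^j) \geq \mu_i^n(M)$ for every $j \in [n]$. Then independently for each agent $i$, draw $j_i \in [n]$ uniformly at random, and let $A_i = P_i^{j_i}$. By construction, $v_i(A_i) \geq \mu_i^n(M)$ for every $i$, so the resulting (multi-)allocation meets each agent's MMS value. What remains is to bound the expected number of extra copies and apply averaging.

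The key observation is that for each good $g \in M$, since $(P_i^1,\ldots,P_i^n)$ is a partition of $M$, the good $g$ lies in exactly one bundle of agent $i$'s partition; thus the probability that agent $i$ is allocated $g$ equals $1/n$, and these events are independent across agents. Letting $c_g$ denote the number of agents receiving good $g$, the number of extra copies of $g$ is $\max(c_g-1,0) = c_g - \mathbf{1}[c_g \geq 1]$, so
\[
\mathbb{E}[\max(c_g-1,0)] = \mathbb{E}[c_g] - \Pr[c_g \geq 1] = 1 - \left(1 - \left(1 - \tfrac{1}{n}\right)^n\right) = \left(\tfrac{n-1}{n}\right)^n.
\]
Summing over all $m$ goods gives an expected total of at most $\left(\frac{n-1}{n}\right)^n \cdot m$ extra copies, so by the probabilistic method there exists a realization of $(j_1,\ldots,j_n)$ achieving this bound, and that realization yields the desired MMS allocation with copies.

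There is no real obstacle here: the ``$1/n$ per agent per good'' fact is immediate because we are sampling a bundle from a partition, and independence across agents makes the per-good calculation trivial. The only small subtlety to flag is that we should explicitly note each agent gets at most one copy of each good (since $A_i$ is a single bundle $P_i^{j_i} \subseteq M$), so the output is a valid allocation with copies in the sense of the paper's definition, and no union-bound or alteration argument is needed.
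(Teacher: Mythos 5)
Your proof is correct and follows essentially the same argument as the paper: assign each agent a uniformly random bundle from a fixed MMS partition and bound the expected number of extra copies, which the paper computes as $\E[\sum_i |A_i|] - \E[|\cup_i A_i|] = m - m(1-(1-\tfrac{1}{n})^n) = m(\tfrac{n-1}{n})^n$, exactly matching your per-good calculation summed over goods. The only difference is cosmetic bookkeeping (per-good expected extra copies versus total-minus-unique), so no further comparison is needed.
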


\begin{proof}
Fix $n$ MMS partitions $\{\MMSpartition{i}{n}\}_{i \in N}$.
Let $\allocn$ be a random allocation, where $\bundlei \sim \mathrm{Unif}(\{\MMSnamedBundle{i}{1},\dots,\MMSnamedBundle{i}{n}\})$, independently for any agent $i$. That is, agent $i$ gets one of her MMS bundles chosen uniformly at random.

Observe that for every agent $i$ and good $g$, $\Pr[g \in \bundlei] = \frac{1}{n}$, and so the expected number of \textit{unique} items in allocation $\alloc$ is:
\begin{eqnarray*}
\bbE_{\alloc} \left[|\cup_{i \in [n]} \bundlei|\right] 
&=& 
\sum_{g \in [m]} \Pr[g \in \cup_{i\in [n]} \bundlei] 
=
\sum_{g \in [m]} (1-\Pr[g \notin \cup_{i\in [n]} \bundlei]) 
=
\sum_{g \in [m]} \left(1-\left(1-\frac{1}{n}\right)^n \right) \\
&=&
m \cdot \left(1-\left(1-\frac{1}{n}\right)^n\right),
\end{eqnarray*}
where the third inequality follows from independence across the agents.

The expected total number of items (including copies) is
$$
\bbE_{\alloc} \left[\sum_{i=1}^n |\bundlei| \right] 
=
\sum_{i=1}^n \bbE_{\alloc} \left[ |\bundlei| \right] 
=
n \cdot \frac{m}{n}
=
m.
$$

By linearity of expectation, the expected number of copies made for an allocation $\alloc$ is
$$
\bbE_{\alloc} \left[\sum_{i=1}^n |\bundlei| - |\cup_{i \in [n]}\bundlei|\right] 
=
m - m \cdot \left(1-\left(1-\frac{1}{n}\right)^n\right)
=
m\left(1-\frac{1}{n}\right)^n,
$$
so there must exists an allocation that assigns each agent one of her MMS bundles and requires no more than $m\cdot\left(1-\frac{1}{n}\right)^n$ extra copies.
\end{proof}

\subsubsection{\boldmath Upper Bound of $\ln m/\ln \ln m$ on the Maximum Number of Individual Copies}\label{subsec:general-per-good}

In this section, we upper bound the number of copies of any individual good, and together with the bound established in \Cref{subsec:monotone_total_copies_upper}, show how to achieve both guarantees with high probability.

\begin{theorem}\label{thm:alg_for_monotone}
    For any instance $\GenInstance$ with monotone valuations, there exists a $(t,k)$-allocation that ensures MMS for  all the agents and satisfies $t \le m/e$ along with $k \le \min\left\{n-1,\frac{3\ln m}{\ln \ln m}\right\}$.
    Moreover, given an MMS partition for each agent, for any $\beta \in \mathbb{N}$, there exists an efficient algorithm that returns such an allocation with probability at least $1-e^{-\beta}$.
\end{theorem}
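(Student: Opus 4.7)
The plan is to build on the random allocation from Proposition~\ref{prop:upper_bound_t_monotone} and layer a Chernoff-type concentration bound on top of it to control the per-good copy count. I would sample $A=(A_1,\dots,A_n)$ by letting each agent $i$ independently draw $j_i$ uniformly from $[n]$ and set $A_i = P_i^{j_i}$. Proposition~\ref{prop:upper_bound_t_monotone} already shows that the total number of extra copies $X$ satisfies $\E[X] = m(1-1/n)^n \le m/e$, so the remaining work is to bound $Y_g := |\{i : g \in A_i\}|$ uniformly over $g$, and then to argue that both bounds can be made to hold on a single realization.

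For the per-good bound, fix any good $g$. Since $(P_i^1,\ldots,P_i^n)$ partitions $M$, we have $\Pr[g \in A_i] = 1/n$ independently across agents, so $Y_g \sim \mathrm{Binomial}(n,1/n)$ with mean $1$. The standard tail estimate $\Pr[Y_g \ge k+1] \le \binom{n}{k+1}n^{-(k+1)} \le (e/(k+1))^{k+1}$, applied at $k = \ceil*{3\ln m/\ln\ln m}$, gives $\Pr[Y_g > k+1] \le m^{-3+o(1)}$ after verifying $\ln((k+1)/e) \ge (1-o(1))\ln\ln m$. A union bound over the $m$ goods yields $\Pr[\max_g Y_g > k+1] \le m^{-2+o(1)}$, and trivially $Y_g \le n$, so with high probability the number of extra copies of any good is at most $\min\{n-1, k\}$, matching the stated per-good bound.

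To combine the two bounds, let $E$ denote the per-good good event and condition: $\E[X \mid E] \le \E[X]/\Pr[E] \le m(1-1/n)^n \cdot (1 + O(m^{-2+o(1)}))$. Using the expansion $(1-1/n)^n = 1/e - 1/(2en) + O(1/n^2)$, this quantity is strictly smaller than $m/e$ whenever $m$ is sufficiently large relative to $n$, a condition implied by $m \ge n$ (which in turn is forced by the standing assumption that $\mu_i > 0$ for every agent). Hence there exists a realization in $E$ with $X \le m/e$, proving existence. For the algorithmic claim, each trial is efficient---sample $n$ uniform indices and read off the given MMS partitions---and by the same analysis the per-trial success probability is bounded below by a positive quantity $p^\ast$ depending polynomially on $1/n$ and $1/m$. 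Repeating $\Theta(\beta/p^\ast)$ independent trials and returning the first success then yields the desired $1-e^{-\beta}$ guarantee.

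The main obstacle I expect is the combination step. Individually, the expectation bound on $X$ and the per-good Chernoff are routine, but forcing both bounds to hold on the \emph{same} realization requires exploiting the thin slack $1/e - (1-1/n)^n = \Theta(1/n)$. In the regime where $n$ is large and $m$ is close to $n$, this slack nearly vanishes; without the hypothesis $m \ge n$ implicit in $\mu_i > 0$, the conditional-expectation estimate would fail. A sharper concentration of $X$ itself---for instance, a bounded-differences argument on $X$ as a function of $(j_1,\ldots,j_n)$---could improve the per-trial success probability and shrink the number of repetitions, but for the statement as given a crude polynomial lower bound on $p^\ast$ is enough.
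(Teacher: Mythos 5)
Your construction and per-good analysis coincide with the paper's: the same product distribution over MMS bundles, the same $\mathrm{Binomial}(n,1/n)$ tail bound followed by a union bound over goods (this is exactly the proof of \Cref{lem:bound_indivcopies}), and the same rejection-sampling algorithm. Where you genuinely diverge is the combination step. The paper does not condition: it writes $\Pr[\text{fail}] \le \Pr[t > m/e] + \Pr[E^c]$, bounds the first term by Markov's inequality applied to the integer-valued $t$ (obtaining roughly $m/(m+2)$) and the second by $1/m$, so each trial succeeds with probability at least $1/(3m)$ and $3m\beta$ trials suffice. Your conditional-expectation route, $\E[X \mid E] \le \E[X]/\Pr[E]$, is a valid alternative and, once completed, is actually sharper: playing the slack $1/e - (1-1/n)^n = \Theta(1/n)$ against $\Pr[E^c] \le m^{-2+o(1)}$ (using $m \ge n$, which you correctly note is forced by all MMS values being positive) yields a per-trial success probability of $\Omega(1/n)$ rather than $\Omega(1/m)$.

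The one step you must not leave implicit is the passage from the conditional expectation to the per-trial success probability. The bound $\E[X \mid E] \le m/e$ gives existence of a good realization inside $E$, but the algorithmic claim needs a quantitative lower bound on $\Pr[\{X \le m/e\} \cap E]$, and ``the same analysis'' does not produce one by itself: you need one more application of Markov, namely from $\E[X \mid E] \le (1-\delta)\, m/e$ with $\delta = \Theta(1/n)$ conclude $\Pr[X > m/e \mid E] \le 1-\delta$ and hence $\Pr[\{X \le m/e\} \cap E] \ge \delta \cdot \Pr[E]$. (Equivalently, skip conditioning entirely: $\Pr[X > m/e] \le e\,(1-1/n)^n \le 1 - \Theta(1/n)$ directly, then subtract $\Pr[E^c]$.) This is a one-line fix, but without it your $p^\ast$ is asserted rather than derived. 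Note also that the paper's cruder union bound avoids any dependence on the gap between $(1-1/n)^n$ and $1/e$ and on the relation between $m$ and $n$, at the price of a worse trial count ($O(m\beta)$ rather than $O(n\beta)$).
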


Notably, the upper bounds on the total and individual number of copies established in \Cref{thm:alg_for_monotone} are tight by \Cref{thm:lowermontone}, where one good is copied $n-1 \ge \frac{\ln m}{\ln \ln m}-1$ times. 
In the proof of \Cref{thm:alg_for_monotone}, we use the following lemma. Its proof, which relies on a balls-and-bins argument, is deferred to \Cref{apx:goods_proofs}.

\begin{restatable}{lemma}{lemBoundIndiv}
 \label{lem:bound_indivcopies}
    For any fair division instance $\GenInstance$, with monotone valuations $\{v_i\}_{i \in N}$, there exists an MMS allocation where every good is copied at most $O\left(\frac{\ln m}{\ln \ln m}\right)$ times.
\end{restatable}
\begin{proof}[Proof sketch]
Consider an allocation where agent $i$'s bundle is a set from her MMS partition, picked uniformly at random.
 We view this process as throwing $n$ balls (agents) into $n$ bins (the index in an MMS partition). 
 Fix some good $g \in M$, it belongs to some MMS bundle for all the agents, possibly different indices, but we can re-index the sets and assume without loss of generality that $g$ is always in the set of index 1 in the MMS partitions of all agents.  
 The event in which $g$ is copied at least $k-1$ times for the allocation $\allocn$, is exactly the event in which the bin corresponding to index $1$ has at least $k$ balls. 
 For $k=\frac{3\ln m}{\ln \ln m}$, the probability of having a bin with more than $k$ balls is at most $\frac{1}{m}$.
\end{proof}

\begin{proof}[Proof of \Cref{thm:alg_for_monotone}]
    Fix an instance $\GenInstance$. For any agent $i$, let $\MMSpartition{i}{n}$ be her MMS partition.
    Consider the algorithm $\calA$ which does the following (at most) $3m\beta$ times:
    \begin{enumerate}
        \item Samples an allocation according to $\bundlei \sim \mathrm{Unif}(\MMSagent{i})$.
        \item Returns the allocation $\alloc$ only if the number of total copies is $t \le m/e$ and every item is copied at most $k \le \frac{3\ln m}{\ln \ln m}$.
    \end{enumerate}
    First, by the union bound, the probability that algorithm $\calA$ fails in some iteration is 
    $$
    \Pr[\calA \text{ fails in a single iteration}] 
    \le \Pr[t > m/e] + \Pr\left[k > \frac{3\ln m}{\ln \ln m}\right]
    \le 
    \Pr[t > m/e] + \frac{1}{m},
    $$
    where the last inequality follows from the proof of \Cref{lem:bound_indivcopies}. 

    Recall that in the proof of \Cref{prop:upper_bound_t_monotone}, we have established that when the allocation $\alloc$ is picked uniformly at random, as in the above procedure, the expected number of total copies is $m/e$. This allows us to bound the second term using Markov's inequality,
    $$
    \Pr[t > m/e] 
    \le 
    \Pr\left[t \ge \left(1+\frac{e}{m}\right)\frac{m}{e}\right]
    \le \frac{1}{1+\frac{e}{m}} \le \frac{m}{m+2}.
    $$
    Putting everything together we have,
    \begin{align*}
        \Pr[\calA \text{ fails in a single iteration}]
        &\le
        \frac{m}{m+2} + \frac{1}{m}
        = \frac{m^2+m+2}{m^2+2m} \le \frac{m+3/2}{m+2}\le  1- \frac{1}{3m}.
    \end{align*}
    In particular, there is a strictly positive probability that $\calA$ succeeds in some iteration, so there exists an MMS allocation with $t \le m/e$ and $k \le \frac{3\ln m}{\ln \ln m}$.
    The statement follows immediately, as
    \begin{align*}
        \Pr[\calA \text{ fails in all iterations}]
        &=
        \left(\Pr[\calA \text{ fails}]\right)^{3m\beta} 
        \le
        \left(1-\frac{1}{3m}\right)^{3m\beta} 
        \le \frac{1}{e^\beta}. \qedhere
    \end{align*}
\end{proof}

\subsection{Additive Valuations} 
\label{sec:additive_goods}
In this section, we prove the following theorem.
\begin{theorem} \label{thm:upperboundadditive}
    For any instance with additive valuations, at most $\min\{n-2,\floor{m/3}(1+o(1))\}$ distinct copies suffice to guarantee an exact MMS allocation.
\end{theorem}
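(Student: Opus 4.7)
The plan is to prove the two bounds in the minimum separately; the theorem follows by taking the better of the two for the given $n, m$.

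\textbf{The $\lfloor m/3\rfloor(1+o(1))$ bound.} This is obtained by combining Lemma~\ref{lem:1ood_reduction}, which converts any 1-out-of-$(1+\alpha)n$ MMS allocation into a full MMS allocation with at most $\lfloor \alpha m\rfloor(1+o(1))$ distinct copies, with the state-of-the-art existence result for 1-out-of-$d$ MMS allocations under additive valuations, namely the 1-out-of-$4\lceil n/3\rceil$ MMS guarantee of~\cite{akrami2023improving}. Writing $4\lceil n/3\rceil = (1+\alpha)n$ with $\alpha = 1/3 + O(1/n)$, a direct application of Lemma~\ref{lem:1ood_reduction} yields a full MMS allocation that uses at most $\lfloor m/3\rfloor(1+o(1))$ distinct copies. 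No additional work is required for this part.

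\textbf{The $n-2$ bound.} I would proceed by induction on $n$. The base cases $n=1$ (trivial) and $n=2$ (classical cut-and-choose) need zero copies, matching $n-2 \le 0$. For the inductive step at $n\ge 3$, the key lemma to establish is the existence of a \emph{single-copy valid reduction}: an agent $i$ together with a bundle $S\subseteq M$ containing at most one copy of an item that is retained in the residual instance, such that $v_i(S)\ge \mu_i^n(M)$ and, after removing $i$ together with the non-copied items of $S$, every remaining agent's MMS is at least its value in the original instance. Given such a reduction, the residual $(n-1)$-agent instance admits by induction an MMS allocation using at most $(n-1)-2 = n-3$ further copies, for a total of at most $n-2$, as desired.

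The main obstacle is showing that such a single-copy valid reduction always exists when $n\ge 3$. The intended approach adapts the bag-filling technique underlying $\bagfillcopy$ (Algorithm~\ref{alg:bagfill_copy}): after normalizing so that every agent's MMS equals $1$, large-value goods (worth $\ge 1$ to some agent) admit a standard copy-free valid reduction (assign a singleton bundle and invoke Corollary~\ref{cor:MMSmonotoneK} to control the residual MMS values). Otherwise, one builds a candidate bundle good-by-good; if no such bundle crosses the value-$1$ threshold for any agent without help, a structural argument shows that adding exactly one copy of a single item $h$ (which remains in the residual for the other agents) suffices to push the bundle above $1$ for some agent. The residual MMS values are then preserved by Corollary~\ref{cor:MMSmonotoneK}, since the number of non-copied items removed equals the number of agents removed. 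The delicate part is the structural claim that one copy always suffices in a single reduction step, and more broadly that this invariant can be maintained across all $n-2$ iterations of the induction.
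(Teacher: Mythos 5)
Your first part is exactly the paper's argument: \Cref{lem:1ood_reduction} applied to the 1-out-of-$4\ceil{n/3}$ result of \citet{akrami2023improving} with $\alpha = 1/3 + O(1/n)$ is precisely how the paper derives \Cref{cor:mover3additive}, so nothing is missing there.

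The $n-2$ part, however, has a genuine gap, and it sits exactly where you flag ``the delicate part.'' Your induction stands or falls on the existence of a \emph{single-copy valid reduction}, and neither your sketch nor the paper establishes anything of that form. Worse, the justification you offer for MMS preservation is incorrect: \Cref{cor:MMSmonotoneK} preserves the remaining agents' MMS only when the number of removed goods equals the number of removed agents, whereas the bundle $S$ produced by a bag-filling step generically contains several non-copied goods (after the large-goods preprocessing every singleton is worth less than $1$, so any bundle crossing the threshold has at least two goods besides the copy). Removing one agent together with two or more goods can strictly decrease the MMS of the remaining agents, so the claim ``the number of non-copied items removed equals the number of agents removed'' does not hold, and the inductive hypothesis cannot be invoked on the residual instance. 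This is precisely the difficulty the paper's proof is engineered to avoid: \Cref{lem:bagfillcopy} never tries to preserve residual MMS values step by step; instead it maintains the weaker global invariant that each remaining agent's value for the remaining goods is at least the number of remaining agents (with values normalized so the \emph{original} MMS is $1$), and satisfies agents against their original MMS. The saving from $n-1$ to $n-2$ copies then comes not from a base case of an induction but from a separate combinatorial step in $\matchnfill$ (\Cref{lem:nmin2_lemma}): a Hall's-theorem/matching argument on agent $1$'s MMS partition satisfies at least one agent with no copies while guaranteeing that the unmatched agents still value the leftover goods at least their number, after which $\bagfillcopy$ finishes with at most $|\tilde{N}|-1 \le n-2$ copies. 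To salvage your route you would have to prove the single-copy valid reduction lemma from scratch (it does not follow from any reduction in the paper, and whether it holds is open), so as written the second half of the theorem is not established.
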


The proof of Theorem~\ref{thm:upperboundadditive} is divided into two parts. In Section~\ref{sec:full-n-2}, we give an algorithm that obtains exact MMS with $n-2$ distinct copies. In Section~\ref{sec:1ood_reduction} we relate the $1$-out-of-$d$ bounds to our problem (a relation that holds for arbitrary monotone valuations, and use it to obtain exact MMS with $\floor{m/3}(1+o(1))$ distinct copies for additive valuations. 

\subsubsection{\boldmath Exact MMS with $n-2$ Copies}
\label{sec:full-n-2}
Algorithm~$\matchnfill$ operates as follows. It inspects the MMS Partition $P$ of an arbitrary agent, tries to satisfy as many agents as possible (without harming other agents) using $P$ without any copies. Then it uses a modified bag-filling algorithm, $\bagfillcopy$, that uses copies in order to satisfy the remaining agents. Recall that we scale valuations so that $\mu_i^n(M)=1$ for every agent $i$, which also implies $v_i(M)\ge n$.

\begin{algorithm}[htb]
\caption{$\matchnfill(\GenInstance)$: 
\\ \textbf{Input:} $n$ additive valuations $\{v_i\}_{i\in N}$ over items in $M$. 
\\ \textbf{Output:} Allocation $A = (A_1, \ldots, A_n)$ with at most $n-2$ distinct copies.}
\label{alg:matchnfill}
\SetAlgoLined
\DontPrintSemicolon
\LinesNumbered

\While{$\exists i\in N, j\in M$ s.t. $v_{ij}\ge 1$}{ 
    $A_i\gets \{j\}$\,
    $N\gets N\setminus\{i\}$, $M\gets M\setminus \{j\}$\;
}

Let $P=(P_1,P_2,\ldots, P_n)$ be the MMS partition of agent 1\; 

Consider the bipartite graph $G=(V=N\times P,E)$, where $E=\{(i,P_j)\ : \ v_i(P_j)\ge 1\}$\;

\If{there is a perfect matching between $N$ and $P$ in $G$}{ 
    Allocate each $i\in N$ its match in the perfect matching\;
}
\Else{
    Let $P'\subset P$ be the minimal set of bundles violating Hall's condition ($|P'|>1$ since $1$ is matched to every node in $P$)\;
    Consider some $P_j\in P'$ and let $\hat{P}=P'\setminus\{P_j\}$\;
    Find a matching between bundles in $\hat{P}$ and agents in $N$\;
    \ForEach{$i\in N$ matched to some $P_j\in \hat{P}$}{
        $A_i\gets P_j$\;
    }
    Let $\tilde{N}$ be the set of unallocated agents and $\tilde{M}=M\setminus\left(\bigcup_{i\in N\setminus\tilde{N}}A_i\right)$ the set of unallocated items\;
    Allocate agents in $\tilde{N}$ using procedure $\bagfillcopy((\tilde{N},\tilde{M},\{v_i\}_{i\in \tilde{N}}))$\;
}
\end{algorithm}

\begin{algorithm}[htb]
\caption{$\bagfillcopy(\GenInstance)$: 
\\ \textbf{Input:} $n$ additive valuations $\{v_i\}_{i\in N}$ over items in $M$. 
\\ \textbf{Output:} Allocation $A = (A_1, \ldots, A_n)$ with at most $n-1$ distinct copies.}
\label{alg:bagfill_copy}
\SetAlgoLined
\LinesNumbered
$t\gets 0$, $B_t\gets \emptyset$, $N_0\gets N$, $M_0\gets M$\;

\For{items $j\gets 1$ \KwTo $m$}{
    \If{$\exists i_t\in N_t$ such that $v_{i_t}(B_t\cup \{j\})\ge 1$}{
        $A_{i_t}\gets B_t\cup \{j\}$\;
        $t\gets t+1$, $B_t\gets \{j\}$\;
        $N_t\gets N_{t-1}\setminus\{i_{t-1}\}$, $M_t\gets M_{t-1}\setminus B_{t-1}$\;
    }
    \Else{ {$B_t\gets B_t\cup \{j\}$} }
}
\end{algorithm}

We first show that $\bagfillcopy$ satisfies all agents, assuming each agent values the remaining items high enough (proof in \Cref{apx:goods_proofs}). An example of a run of $\bagfillcopy$ appears in \Cref{sec:examples}.

\begin{restatable}{lemma}{lemBagFillCopy} \label{lem:bagfillcopy}
    Assuming that for every $i\in N$, $v_i(M)\ge |N|$ and that for every item $j$, $v_{ij}<1$, $\bagfillcopy$ outputs an allocation such that $v_i(A_i)\ge 1$ for every $i\in N$, and creates at most $|N|-1$ distinct copies.
\end{restatable}
\begin{proof}[Proof sketch.]
    A simple induction shows that every time we allocate a bundle, the value of every remaining agent $i$ for all remaining items is at least the number of remaining agents. 
    Indeed, $i$'s value for all items except the last item inserted to the bag is at most $1$, otherwise, $i$ could have been allocated in a previous iteration. Since we duplicate the last inserted item, we get that the number of agents decreased by 1, while the total value of $i$ for all remaining items decreased by at most 1. Thus, we make sure that at any point, we have enough value to satisfy all remaining agents.
\end{proof}
We now show that $n-2$ distinct copies suffice.

\begin{lemma} \label{lem:nmin2_lemma}
    For any fair division instance with additive valuations, Algorithm~$\matchnfill$ guarantees an exact MMS allocation for all the agents with at most $n-2$ distinct copies.
\end{lemma}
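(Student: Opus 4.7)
The plan is to prove two things about $\matchnfill$'s output: every agent gets value at least $1$ (her normalized MMS), and the number of distinct copies is at most $n-2$. I will first argue that the initial while-loop is a valid reduction: each iteration hands an item of value $\ge 1$ to some agent, and by \Cref{cor:MMSmonotoneK} the remaining agents retain MMS value $\ge 1$ over the remaining items. After the loop it suffices to show the rest of the algorithm uses at most $n'-2\le n-2$ distinct copies, where $n'$ is the number of remaining agents; also, no remaining item is worth $\ge 1$ to any remaining agent (otherwise the loop would not have terminated). The cases $n'\le 1$ are trivial, so assume $n'\ge 2$.

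Let $P=(P_1,\ldots,P_{n'})$ be the MMS partition of some surviving agent (call her agent $1$); then $v_1(P_j)\ge 1$ for all $j$, so agent $1$ is adjacent in $G$ to every bundle. If $G$ has a perfect matching, assigning each agent her matched bundle satisfies everyone and uses no copies, so assume not. Let $P'\subseteq P$ be the minimal set violating Hall's condition, pick any $P_j\in P'$, and set $\hat{P}=P'\setminus\{P_j\}$. Since agent $1$ is adjacent to every $P_{j'}$, each singleton $\{P_{j'}\}\subseteq P$ satisfies Hall, so $|P'|\ge 2$. The core structural step I would establish is
\[
N(\hat{P})=N(P'), \qquad |N(P')|=|\hat{P}|=|P'|-1.
\]
Indeed, $N(\hat{P})\subseteq N(P')$ is immediate; minimality of $P'$ applied to the proper subset $\hat{P}$ gives $|N(\hat{P})|\ge|\hat{P}|=|P'|-1$; and the Hall violation gives $|N(P')|\le|P'|-1$. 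Hall's theorem then produces a perfect matching of $\hat{P}$ onto $N(P')$, each matched agent receiving a bundle worth at least $1$ to her.

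Let $\tilde{N}=N\setminus N(P')$ denote the unmatched agents, and $\tilde{M}=P_j\cup\bigcup_{P_{j'}\in P\setminus P'} P_{j'}$ the unallocated items. By definition of $N(P')$, every $i\in\tilde{N}$ has $v_i(P_{j'})<1$ for all $P_{j'}\in P'$; and no single remaining item is worth $\ge 1$ to any $i\in\tilde{N}$. Using additivity together with $v_i(M)\ge n'$ (since $\mu_i\ge 1$), I would bound
\[
v_i(\tilde{M})=v_i(M)-\sum_{P_{j'}\in\hat{P}} v_i(P_{j'}) \;>\; n'-(|P'|-1) \;=\; |\tilde{N}|,
\]
which verifies the hypotheses of \Cref{lem:bagfillcopy}. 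Thus $\bagfillcopy$ satisfies every $i\in\tilde{N}$ while creating at most $|\tilde{N}|-1=n'-|P'|\le n'-2$ distinct copies, completing the count. The main obstacle is establishing the identity $N(\hat{P})=N(P')$: this is what forces the unmatched set to be exactly $N\setminus N(P')$, simultaneously delivering the per-bundle bound $v_i(P_{j'})<1$ needed for the aggregate value inequality and capping the bag-filling phase at $n'-|P'|+1$ agents, so that the $-1$ saving from $\bagfillcopy$ is enough to reach the $n-2$ target.
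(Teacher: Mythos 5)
Your proposal is correct and follows essentially the same route as the paper: satisfy the while-loop agents via valid reductions, use agent $1$'s MMS partition, exploit the minimal Hall-violating set $P'$ to show $\Gamma(\hat{P})=\Gamma(P')$ has size $|P'|-1$ so that $\hat{P}$ can be matched and the unmatched agents have no edges into $\hat{P}$, deduce $v_i(\tilde{M})\ge|\tilde{N}|$ by additivity, and finish with \Cref{lem:bagfillcopy}. Your counting via $|\tilde{N}|-1=n'-|P'|\le n'-2$ matches the paper's bound, and your explicit justification of why the matching of $\hat{P}$ exists is a minor elaboration of the same argument.
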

\begin{proof}
    First, by definition, every agent that is assigned some bundle for which they have an edge in $G$ gets their MMS value. Moreover, since agent $1$ has an edge to all bundles in $P$, then at least one agent gets satisfied, and $|\tilde{N}|\le |N|-1$.
    It remains to show that sets $\tilde{N}$ and $\tilde{M}$ satisfy the conditions of Lemma~\ref{lem:bagfillcopy}.

    $\matchnfill$ first allocates items with value more than the MMS of an agent, so for every $i\in \tilde{N}, j\in \tilde{M}$, $v_{ij} < 1$. Moreover, by Lemma~\ref{lem:MMSmonotone}, allocating a single item to a single agent is a valid reduction. Thus, after this step, we have that for every $i\in N$, $v_i(M)\ge |N|$. We show that after allocating bundles $\hat{P}$ to agents in $N$, for every unallocated agent $i\in \tilde{N}$, $v_i(\tilde{M})\ge |\tilde{N}|$.

    Recall that $P'$ is a minimal set of bundles violating Hall's condition, and let $|P'|=k$. Let $N'=\Gamma(P')$  be their neighbors in $G$. Note that $|N'|=k-1$, otherwise, one can remove an arbitrary element of $P'$ and get a smaller set violating Hall's condition. Moreover, since $\hat{P}$ does not violate Hall's condition, we know that $\Gamma(\hat{P})=N'$ and that the agents in $N'$ are exactly the agents matched by bundles from $\hat{P}$. Thus, it must be that there are no edges between agents in  $\tilde{N}= N\setminus N'$ and bundles in $\hat{P}$.  Therefore, for every $i\in \tilde{N}$, $v_i(\bigcup_{S\in \hat{P}} S)\le k-1$, which implies $$v_i(\tilde{M}) = v_i(M)-v_i\left(\bigcup_{S\in \hat{P}} S\right)\ge |N|-(k-1)=|N|-|N'|=|\tilde{N}|,$$
    as desired. It follows that all the conditions required in order to apply Lemma~\ref{lem:bagfillcopy} are fulfilled, and we can satisfy agents in $\tilde{N}$ using $|\tilde{N}|-1\le|N|-2\le n-2$ distinct copies (recall that at least one agent was already allocated {a bundle} before). Since $\bagfillcopy$ is the only place where we produce copies, this concludes the proof.
\end{proof}

Together with the example in~\cite{feige2021tight} of an instance with $n=3$ additive agents that does not admit an MMS allocation, Lemma~\ref{lem:nmin2_lemma} implies a tight bound on the number of copies for $3$ additive agents. 

\begin{corollary}
    For every instance with 3 additive agents, a single copy of a single item suffices to guarantee an exact MMS allocation. This is tight, i.e., there are instances where one copy is required.
\end{corollary}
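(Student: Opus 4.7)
The proof is essentially immediate given what has already been developed in the paper, so the plan is short. I would split the statement into its two assertions and address each in turn.

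For the upper bound ``a single copy of a single item suffices,'' I would simply instantiate Lemma~\ref{lem:nmin2_lemma} with $n=3$. That lemma guarantees that Algorithm~$\matchnfill$ produces an allocation in which every agent receives at least their MMS value while creating at most $n-2$ distinct copies. Substituting $n=3$ gives a bound of $1$ distinct copy, which is exactly the claim. No additional argument is needed here beyond quoting the lemma.

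For the tightness assertion, I would invoke the known construction of \cite{feige2021tight}: an instance with $n=3$ additive agents for which no MMS allocation exists at all. In our terminology, this is an instance with $0$ copies admitting no full MMS allocation. Therefore at least one extra copy is necessary, and combined with the upper bound this copy count is exactly tight. I would phrase this as: there exist 3-agent additive instances in which any allocation with zero copies leaves some agent strictly below their MMS value, so the bound of one copy from Lemma~\ref{lem:nmin2_lemma} cannot be improved.

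There is no real obstacle: the upper direction is a direct specialization of a lemma proved on the preceding page, and the lower direction is a citation to a standard non-existence example. The only thing to be careful about is to make clear that ``distinct copies'' in our sense coincides, for the quoted Feige et al.\ instance, with the ordinary MMS non-existence statement, so that zero copies indeed fails on that instance and one copy is genuinely required.
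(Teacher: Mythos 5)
Your proposal is correct and follows exactly the paper's own argument: the corollary is stated immediately after Lemma~\ref{lem:nmin2_lemma}, whose $n-2$ bound specializes to one copy for $n=3$, and tightness is obtained by citing the instance of~\cite{feige2021tight} with three additive agents admitting no MMS allocation. Nothing is missing.
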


\subsubsection{\boldmath Exact MMS with $\floor{m/3}(1+o(1))$ Copies} \label{sec:1ood_reduction}

We now get an improved bound when $m$ is not too large. This is done via a reduction from 1-out-of-$(1+\alpha)n$ bounds.

\begin{lemma} \label{lem:1ood_reduction}
    If there exists a 1-out-of-$(1+\alpha)n$ MMS allocation for every fair division instance $\GenInstance$ for some constant $\alpha\in (0,1)$ then $$\floor{\alpha m} + \ceil*{\frac{(1+\alpha)^2}{n-1-\alpha}m} = \floor{\alpha m}(1+o(1))$$ 
    distinct copies suffice to guarantee an exact MMS allocation. This holds for arbitrary monotone valuations.  
\end{lemma}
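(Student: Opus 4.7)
The proof plan is to reduce the problem to the 1-out-of-$(1+\alpha)n$ MMS assumption by augmenting $\I$ with fake items. Normalize so that $\mu_i^n(M)=1$, and form $\I^*$ on the same agent set by adjoining a set $F$ of $|F|:=\lceil\alpha n\rceil$ fake items, extending each valuation monotonically by $v_i^*(S)=v_i(S\cap M)+|S\cap F|$, so every fake has value $1$ to every agent. A direct partition argument gives $\mu_i^{(1+\alpha)n}(\I^*)\geq 1$: combine agent $i$'s MMS partition of $M$ (yielding $n$ bundles each of value $\geq 1$) with $\alpha n$ singleton-fake bundles to form a $(1+\alpha)n$-partition of $M\cup F$ whose bundles are all of value $\geq 1$. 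Invoking the hypothesis on $\I^*$ then yields an allocation $A^*=(B_i\cup F_i)_{i\in N}$ with $B_i\subseteq M$, $F_i\subseteq F$, and $v_i(B_i)+|F_i|\geq 1$.

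Call an agent \emph{needy} if $v_i(B_i)<1$, and let $T$ denote the needy set; because $|F_i|\geq 1$ for every needy agent, $|T|\leq |F|\leq \alpha n$. The plan is then to equip each needy agent $i\in T$ with a replacement bundle $E_i\subseteq M$ of $v_i$-value at least one, chosen so that $\{E_i\}_{i\in T}$ are pairwise disjoint; the final allocation assigns $B_i$ to happy agents and $B_i\cup E_i$ to needy agents. Monotonicity gives $v_i(B_i\cup E_i)\geq v_i(E_i)\geq 1$, so every agent attains her MMS, and each item of $M$ appears in at most one $B_j$ plus at most one $E_i$, respecting the distinct-copies constraint ($k=1$); the total number of extra copies is thus at most $\sum_{i\in T}|E_i|$.

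The disjoint family $\{E_i\}$ can be obtained by a second invocation of the hypothesis, applied to the sub-instance $(T,M,\{v_i\}_{i\in T})$: removing agents can only raise the MMS value, so $\mu_i^{|T|}(M)\geq\mu_i^n(M)=1$, and whenever $(1+\alpha)|T|\leq n$ (which holds when $\alpha(1+\alpha)\leq 1$, covering the case $\alpha=1/3$ of interest), the 1-out-of-$(1+\alpha)|T|$ value is also at least one; the resulting allocation returns disjoint $E_i$'s each with $v_i(E_i)\geq 1$. The main obstacle will be upgrading this to the sharp bound $\sum_{i\in T}|E_i|\leq\floor{\alpha m}+\ceil*{(1+\alpha)^2 m/(n-1-\alpha)}$: the naive sub-allocation partitions $M$ and hence gives only $\sum|E_i|=m$, so one must trim each $E_i$ to a sub-bundle of cardinality $\approx m/n$ that still has $v_i$-value $\geq 1$, using the fact that each agent's MMS partition of $M$ contains a bundle of this cardinality. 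Plugging the bound $|T|\leq\alpha n$ into the per-agent trim yields the main term $\floor{\alpha m}$, while the second-order $(1+\alpha)^2 m/(n-1-\alpha)$ term captures integer rounding and the chained overhead of the two $(1+\alpha)$-factor applications of the hypothesis.
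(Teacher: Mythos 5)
There is a genuine gap in the step that bounds the number of copies. Your scheme produces, for each needy agent $i\in T$, a bundle $E_i$ coming from a 1-out-of-$(1+\alpha)|T|$ allocation, and you then claim one can ``trim each $E_i$ to a sub-bundle of cardinality $\approx m/n$ that still has $v_i$-value $\geq 1$, using the fact that each agent's MMS partition contains a bundle of this cardinality.'' That fact only says \emph{some} bundle of the MMS partition is small; it says nothing about the particular set $E_i$ you were handed, and the trimming claim is false in general. Even for additive valuations: take $n=2$, one item of value $1$ and $m-1$ items of value $\frac{1}{m-1}$ each (so $\mu_i^2=1$); the set of all small items has value $1$ but admits no proper subset of value $\geq 1$, so it cannot be trimmed to $\approx m/2$ items, and nothing prevents the second invocation from returning exactly such a bundle. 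For general monotone valuations the situation is worse still. Without the trimming step you only get $\sum_{i\in T}|E_i|\le m$, which does not yield $\floor{\alpha m}(1+o(1))$. Note also that you cannot fix this by choosing \emph{which} agents to duplicate, since in your construction the needy set $T$ is dictated by who received fake items. A secondary issue: your second invocation requires $\ceil{(1+\alpha)|T|}\le n$, i.e.\ roughly $\alpha(1+\alpha)\le 1$, so as written the argument does not cover all constants $\alpha\in(0,1)$ claimed by the lemma.

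The paper avoids both problems with a counting/averaging argument rather than fake items and per-bundle trimming: it first applies the hypothesis to a set $N'$ of $n'=\floor*{n/(1+\alpha)}$ agents (so $(1+\alpha)n'\le n$ automatically, for every $\alpha\in(0,1)$), obtaining a no-copies allocation $\{A_i\}_{i\in N'}$ whose bundles are disjoint and hence contain at most $m$ items in total. It then selects the $|\tilde N|=n-n'$ agents of $N'$ whose bundles jointly contain the \emph{fewest} items --- by averaging, at most roughly $\frac{|\tilde N|}{|N'|}m\approx\alpha m$ items --- keeps those bundles, and re-applies the hypothesis to the remaining $\ceil{n/(1+\alpha)}$ agents with the full item set $M$; the only duplicated items are those in the retained bundles, giving exactly the stated bound. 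This global ``cheapest subset of bundles'' count is the idea your proposal is missing; controlling the size of any individual agent's bundle is neither possible nor needed.
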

\begin{proof}
    Let $\GenInstance$ be a fair division instance for which there exists a 1-out-of-$(1+\alpha)n$ MMS. Notice that if we take a set of agents $N'$ of size $n'$ for which $(1+\alpha)n'\le n$ then we can use the guarantee stated in the lemma to produce an allocation $\{A_i\}_{i\in N'}$ without copies such that for each agent $i\in N'$, $$v_i(A_i)\ge \mu_i^{\ceil{(1+\alpha)n'}}(M)\ge \mu_i^n(M).$$
    Consider an arbitrary set $N'$ of $n'=\floor{n/(1+\alpha)}$ agents and the allocation that satisfies their MMS value, $\{A_i\}_{i\in N'}$. Consider the set $\tilde{N}=N\setminus N'$. We proceed as follows. 
    
    Consider a set 
    $$S'\in \argmin_{S\subseteq N'\ :\  |S|=|\tilde{N}|} |\cup_{i\in S} A_i|.$$
    Allocate each agent $i\in S'$ the set $A_i$. Now consider the agents in $N\setminus S'$. This is a set of size $\ceil{n/(1+\alpha)},$ and therefore there exists an allocation $\{B_i\}_{i\in N\setminus S'}$ of the set $M$ of items without copies which guarantees each agent in   $N\setminus S'$ their MMS value. Allocate each agent $i\in N\setminus S'$ the set $B_i$. Notice that this allocation satisfies the MMS of all agents, and the only copies made are the items in $\cup_{i\in S'} A_i$. We next bound the number of items in this set.

    First, notice that 
    $$|S'|=|\tilde{N}|=n-\floor{n/(1+\alpha)}=\ceil*{\frac{\alpha }{1+\alpha}n}.$$
    Since we chose the set of $|\tilde{N}|$ agents with the least number of items in  $N'$, it holds that
    \begin{eqnarray*}
        |\cup_{i\in S'} A_i|& \le & \floor*{\frac{|\tilde{N}|}{|N'|}m} =  \floor*{\frac{\ceil{\frac{\alpha }{1+\alpha}n}}{\floor*{\frac{1}{1+\alpha}n}}m}
        \le  \floor*{ \frac{\frac{\alpha }{1+\alpha}n+1}{\frac{1}{1+\alpha}n-1}m} \\
        & = & \floor*{\left(\alpha  + \frac{1+\alpha}{\frac{1}{1+\alpha}n-1}\right)m}\\
        & \le & \floor{\alpha m} + \ceil*{\frac{(1+\alpha)^2}{n-1-\alpha}m } \\
        & = & \floor{\alpha m}(1+O(1/n)) = \floor{\alpha m}(1+o(1)). 
    \end{eqnarray*}
\end{proof}

We apply the above lemma to the following state-of-the-art bound due to \cite{akrami2023improving}.

\begin{lemma}[\cite{akrami2023improving}]
    For every fair division instance with additive valuations, there exist a 1-out-of-$4\ceil{n/3}$ MMS allocation.
\end{lemma}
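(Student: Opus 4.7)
The statement is a direct citation of a known result from \citet{akrami2023improving}, so strictly speaking the proof is by reference to that paper. Nevertheless, if I were to reconstruct a proof, the plan would be to combine several techniques that have become standard in the $1$-out-of-$d$ MMS literature, with a careful accounting that exploits the slack of $d = 4\lceil n/3 \rceil$ bundles for $n$ agents.

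First, I would reduce to an \emph{ordered} instance, in which all agents rank the goods in the same order. The standard reduction (originally due to \citet{bouveret2016characterizing}) shows that any allocation algorithm for ordered instances that respects the ordering can be lifted to arbitrary additive instances without decreasing anyone's value. This reduces the combinatorial complexity substantially. Second, I would apply a sequence of \emph{valid reductions}: whenever some agent $i$ can be satisfied by a ``simple'' bundle (e.g., one, two, or three high-value items that reach her $1$-out-of-$d$ MMS threshold) without dropping the effective ratio of remaining items to remaining agents below what is needed, remove $i$ and her bundle. By \Cref{lem:MMSmonotone} (and its analogue for the $1$-out-of-$d$ threshold), such reductions only help the remaining agents.

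After these reductions, one is left with a residual instance in which every agent's $1$-out-of-$d$ MMS value is $1$ (after rescaling), no single item is too valuable, and the ordering structure is tight. At this point I would run a bag-filling procedure tailored to the $d = 4\lceil n/3 \rceil$ threshold: greedily fill bundles to value~$1$ for some remaining agent, hand them out, and repeat. The key invariant to maintain is that the total value of unallocated items for every remaining agent is at least the number of remaining agents. Because $d/n \to 4/3$, each ``pessimistic'' bundle costs at most roughly $3/4$ of an agent's MMS budget in an amortized sense, leaving enough to complete all $n$ allocations.

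The main obstacle, and the reason the improvement from $3\lceil n/2\rceil$ to $4\lceil n/3\rceil$ is nontrivial, is the analysis of the residual instance after valid reductions: one must show that the bag-filling invariant is preserved \emph{simultaneously} for all remaining agents, even though different agents have different MMS partitions into $d$ bundles. This typically requires a structural lemma classifying the possible ``bad'' configurations (agents whose high-value items are concentrated on a small set of goods) and showing that each such configuration is eliminated by an appropriate reduction step. Since we use this lemma only as a black box to feed into \Cref{lem:1ood_reduction}, I would simply quote \citet{akrami2023improving} for the full argument.
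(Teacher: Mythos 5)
Your handling matches the paper exactly: this lemma is stated purely as a citation of \citet{akrami2023improving}, and the paper offers no proof of its own, using it only as a black box input to \Cref{lem:1ood_reduction}. Your additional sketch of how the external result is proved is a reasonable outline but is not required (nor verified) here, so deferring to the cited paper is the correct and intended resolution.
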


Thus, we get the following.
\begin{corollary} \label{cor:mover3additive}
    For any fair division instance with additive valuations, at most $\floor{\frac{m}{3}}(1+o(1))$ distinct copies suffice to guarantee an exact MMS allocation.
\end{corollary}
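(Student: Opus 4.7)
The corollary follows by directly combining the reduction of \Cref{lem:1ood_reduction} with the state-of-the-art 1-out-of-$4\ceil{n/3}$ MMS existence result of \cite{akrami2023improving}, which is quoted immediately before the corollary. Conceptually, the reduction is applied with $\alpha = 1/3$, because $4\ceil{n/3}$ behaves like $(1+1/3)n$ up to lower-order terms, and plugging $\alpha = 1/3$ into the bound $\floor{\alpha m}(1+o(1))$ of \Cref{lem:1ood_reduction} gives exactly the target $\floor{m/3}(1+o(1))$ distinct copies.

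The step that needs care is the mild discrepancy between the Akrami--Garg guarantee, which yields $4\ceil{n/3}$ bundles, and the lemma's hypothesis, which asks for a 1-out-of-$(1+\alpha)n$ guarantee. My plan is to verify this in the cleanest way: since $4\ceil{n/3} \le \frac{4}{3}n + 4$, for every fixed $\eps > 0$ and all sufficiently large $n$ the hypothesis of \Cref{lem:1ood_reduction} is met with $\alpha = 1/3 + \eps$, so the lemma gives an MMS allocation with at most $\floor{(1/3+\eps)m}(1+o(1))$ distinct copies; letting $\eps \to 0$ yields $\floor{m/3}(1+o(1))$. Alternatively, one can re-run the argument in the proof of \Cref{lem:1ood_reduction} directly: pick the auxiliary subset $N'$ of size $n' = 3\floor{n/4}$ so that $4\ceil{n'/3} = 4\floor{n/4} \le n$ and the Akrami--Garg bound applied to $N'$ (respectively to $N\setminus S'$) yields each agent's original MMS, while $|S'| = |\tilde N| = n - 3\floor{n/4} \le n/4 + O(1)$, so that the copied set satisfies $|\bigcup_{i\in S'} A_i| \le \frac{|\tilde N|}{|N'|}\,m = \frac{m}{3}(1+o(1))$.

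I expect the only genuine obstacle to be bookkeeping around the ceilings and floors; the structural content is entirely contained in \Cref{lem:1ood_reduction} and in the cited 1-out-of-$d$ result, so beyond handling the $O(1/n)$ slack there is no further technical difficulty.
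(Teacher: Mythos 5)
Your proposal is correct and takes essentially the same route as the paper: the paper's proof is exactly your ``bookkeeping'' step, applying \Cref{lem:1ood_reduction} with the $n$-dependent choice $\alpha = \frac{1}{3}+\frac{4}{n}$ (justified by $4\ceil{n/3}\le n(1+\frac13+\frac4n)$) and then absorbing the $\ceil{4m/n}$ and $\ceil{\frac{(1+\alpha)^2}{n-1-\alpha}m}$ terms into the $(1+O(1/n))$ factor. One caution on your first route: with a \emph{fixed} $\eps>0$ the lemma only gives $\floor{(1/3+\eps)m}(1+o(1))$ copies, and ``letting $\eps\to 0$'' does not by itself produce the claimed bound for a given instance --- you must couple $\eps$ to $n$ (as the paper does), and since \Cref{lem:1ood_reduction} is stated for a constant $\alpha$, this formally requires rechecking that its explicit bound $\floor{\alpha m}+\bigceil{\frac{(1+\alpha)^2}{n-1-\alpha}m}$ still collapses to $\floor{m/3}(1+o(1))$ when $\alpha=\frac13+\frac4n$. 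Your second route (taking $n'=3\floor{n/4}$ directly inside the lemma's argument) sidesteps this cleanly and is correct as written.
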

\begin{proof}
    Since $4\ceil{n/3}\le 4n/3+4 = n(1+ \frac{1}{3} + \frac{4}{n})$, applying Lemma~\ref{lem:1ood_reduction} with $\alpha=\frac{1}{3} + \frac{4}{n}$ implies that the number of copies sufficient in order to produce an exact MMS allocation is at most
    \begin{eqnarray*}
        \floor*{\left(\frac{1}{3} + \frac{4}{n}\right)m} + \ceil*{\frac{(1+\frac{1}{3} + \frac{4}{n})^2}{n-1-\frac{1}{3} - \frac{4}{n}}m} &\le&\floor*{\frac{m}{3}} +\ceil*{\frac{4}{n}m}  + \ceil*{\frac{(1+\frac{1}{3} + \frac{4}{n})^2}{n-\frac{4}{3} - \frac{4}{n}}m}\\
        &=& \floor*{\frac{m}{3}}(1+O(1/n)) \\
    \end{eqnarray*}
\end{proof}

\subsection{\boldmath A Lower Bound of $n-1$ Total Copies for Submodular Valuations}
In this section we prove that there exists an instance where agents have submodular valuations and any MMS allocation requires at least $n-1$ total copies.
Our construction uses ideas from the submodular instance of \cite{ghodsi2022fair}.

Consider $n$ agents and $n^n$ goods positioned on the $n$-dimensional cube, $M = \{g_{d_1,\dots,d_n} \mid d_i \in [n] \;\; \forall i \in [n]\}$.
For any agent $i$ let $P_i^j = \{g_{d_1,\dots d_n} \mid d_i = j\}$, similar to the definition in \Cref{subsec:xos_lower_bounds}. 
Agent $i$'s valuation, $v_i$, is such that the MMS partition of agent $i$ is $P_i = \{P_i^j\}_{j \in [n]}$.

\begin{equation}\label{eq:submod_vals}
v_i(S) = \begin{cases}
    |S| & |S| < n^{n-1} \\
    n^{n-1} - 1/2 & |S| = n^{n-1} \text{ and } S\ne P_i^j \text{ for all } j\\
    n^{n-1} & |S| = n^{n-1} \text{ and } S=P_i^j \text{ for some } j\\
    n^{n-1} & |S| > n^{n-1}\\
\end{cases}    
\end{equation}

\begin{observation}
    $v_i$ is submodular.
\end{observation}
    
\begin{proof}
    The marginal value of all goods is 1 for $|S| < n^{n-1}-1$.
    Take $|S|=n^{n-1}-1$.
    If $S \cup \{g\} = P_i^j$, the marginal value of $g$ is 1 and no more than $1/2$ for any super set of $S$.
    If $S \cup \{g\} \ne P_i^j$, then the marginal value of $g$ is $1/2$, and no more than $1/2$ for any super set of $S$.
\end{proof}

\begin{proposition}
\label{prop:submod-lb}
There exists an instance with submodular valuations such that any allocation with copies that achieve MMS, must have at least $n-1$ total copies.
\end{proposition}
\begin{proof}
    Let $A$ be an allocation (with copies), which achieves MMS.
    Observe that $A$ must satisfy $v_i(A_i) \ge n^{n-1}$ for any $i \in [n]$.
    If agent $i$ does not get one of her MMS bundles, then to satisfy the above, it must be that $|A_i| \ge n^{n-1}+1$.
    Additionally, by using less than $n^{n-2}$ copies, only one agent $i$ can get a bundle $P^j_i$ for some $j \in [n]$, as any two bundle $P_i^j$ and $P_{i'}^{j'}$ for $i \ne i'$ share $n^{n-2}$ goods: those with $d_i=j$ and $d_{i'}=j'$. 
    Thus, the total number of items in $A$, including copies is at least $\sum_{i=1}^n |A_i| \ge n^{n-1} + (n-1)\cdot(n^{n-1}+1) = n^n+n-1$. This concludes the proof.
\end{proof}
\newcommand{\OrderedInstance}{\calI'=(N,M',\{v'_i\}_{i \in N})}
\newcommand{\tj}{j^\dagger}

\section{Approximate Maximin Shares}\label{approxMMS}

In this section we study approximate MMS allocations while allowing duplication of goods, assuming all agents have additive valuations. 

First, in \Cref{sec:ordered}, we convert a fair division instance into an \textit{ordered} instance with $n$ agents and $m$ items, in which all agents have the same ordinal preferences. 

It has been observed, for the original MMS problem (without copies), that it is without loss of generality to consider ordered instances~\cite{bouveret2016characterizing, barman2020approximation}. 
Notably, where copies are allowed, the same transformation as in~\cite{bouveret2016characterizing, barman2020approximation} may no longer yield a valid allocation, see example in \Cref{sec:picking_equence_fails}.
To address this, we impose an additional constraint on the allocation in the ordered instance (see \Cref{def:alloc_one_copy_per_agent}), and show that under this constraint, it suffices to consider ordered instances. 

\begin{restatable}{proposition}{propOrderedSimple}[Reduction to ordered for simple allocations]\label{prop:ordered_simple}
Let $\GenInstance$ be a fair division instance and let  $\OrderedInstance$ be its ordered counterpart.
Let $A'=(A'_1,\dots,A'_n)$ be a simple allocation for $\calI'$ with $t$ distinct copies.
Then, $A'$ can be converted in poly-time to an allocation for $\calI$ with $t$ distinct copies, $A=(A_1,\dots,A_n)$, such that for every agent $i$, $v_i(A_i) \ge v'_i(A'_i)$.
\end{restatable}
{We then devise algorithms that produce allocations adhering to this constraint.}

In Section \ref{sec:reductions}, we review some of the known reduction rules and introduce new ones. Then, in Section \ref{sec:RR}, we present a simple algorithm $\bagfillRR(\calI,\alpha)$ which combines the idea of bag-filling and round robin to achieve an $\alpha$-MMS allocation with copies (see Algorithm \ref{alg:RR}). This algorithm is the building block for achieving $6/7$-MMS and $4/5$-MMS with copies in the sections that follow. 
\begin{restatable}{theorem}{thmnoverTwo}\label{thm:nover2copiesMMS}
    Given an instance $\GenInstance$ with $n^*$ agents and additive valuations, for $\alpha \leq 6/7$,
    there exists an $\alpha$-MMS allocation with at most $\floor{n^*/2}$ many distinct copies.
\end{restatable}
\begin{restatable}{theorem}{thmnOverThree}\label{thm:nover3copiesMMS}
    Given an instance $\GenInstance$ with $n^* >5$ agents and additive valuations, for $\alpha \leq 4/5$, 
    there exists an $\alpha$-MMS allocation with at most $\floor{n^*/3}$ many distinct copies.
\end{restatable}

The missing proofs of this section can be found in \Cref{sec:approx-additive_proofs}.

\subsection{Ordered Instance}\label{sec:ordered}

Given an instance $\GenInstance$, for every agent $i \in N$, let $\sigma_i:[m] \to [m]$ be a permutation that orders the goods according to $i$'s preferences, namely $\sigma_i(j)$ is the $j$th most valued good (breaking ties by index\footnote{Any consistent tie breaking rule will do.}, and thus $v_i(\sigma_i(1)) \ge v_i(\sigma_i(2)) \ge \dots \ge v_i(\sigma_i(m))$.

As done by Bouveret and Lemaître  \cite{bouveret2016characterizing}, we construct an ordered instance $\OrderedInstance$ with a new set of ``virtual goods'', $M' = \{1,\dots,m\}$, and modified valuations, $\{v'_i\}_{i\in N}$, such that $v'_i(j)=v_i(\sigma_i(j))$ for any $j \in M'$.

\begin{definition}[{Ordered Counterpart} (\cite{bouveret2016characterizing})]\label{def:ordered_inst}
Given a fair division instance $\GenInstance$, its \emph{ordered counterpart}, $\OrderedInstance$, is an instance with the same set of agents and a set of ``virtual goods'', $M'$, of the same cardinality as $M$.
For each agent $i \in [n]$, and any $j \in M'$, it holds that $v'_i(j)=v_i(\sigma_i(j))$.
\end{definition}
Observe that by definition, all agent in $\calI'$ have the same ordinal preferences, i.e., for every $i \in N$, $v'_i(1)\ge v'_i(2)\ge \dots \ge v'_i(m)$.
It was shown in \cite{bouveret2016characterizing} that to guarantee agent $i$ a value of at least $\alpha_i$ in an instance $\calI$, it suffices to find an allocation providing this guarantee in its ordered counterpart $\calI'$. This is cast in the following proposition.

\begin{proposition}[Reduction to ordered without copies \cite{bouveret2016characterizing}]\label{prop:ordered_reduction_wo_copies}
Let $\GenInstance$ be a fair division instance and let  $\OrderedInstance$ be its ordered counterpart.
Any allocation (without copies) $A'=(A'_1,\dots,A'_n)$  for $\calI'$ can be converted in poly-time to an allocation (without copies) $A=(A_1,\dots,A_n)$ for $\calI$, such that for every agent $i$, $v_i(A_i) \ge v'_i(A'_i)$.
\end{proposition}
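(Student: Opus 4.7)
The plan is to produce $A$ from $A'$ via a greedy picking sequence guided by $A'$. Process the virtual goods in increasing order $j = 1, 2, \ldots, m$; for each $j$, let $\pi(j)$ be the unique agent with $j \in A'_{\pi(j)}$, and assign to agent $\pi(j)$ its most preferred real good (under $v_{\pi(j)}$) among those still available in $M$. Since each step removes exactly one real good from the available pool, the resulting $A$ is a valid allocation of $M$ with $|A_i| = |A'_i|$ for every $i$, and the whole procedure clearly runs in polynomial time.

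The crucial step is a per-pick value lower bound. Fix an agent $i$ and any virtual good $j \in A'_i$. At the moment $i$ is served for $j$, exactly $j-1$ real goods have been allocated so far (one for each of the $j-1$ virtual goods processed earlier). Hence, among $i$'s top-$j$ favorites $\sigma_i(1), \sigma_i(2), \ldots, \sigma_i(j)$, at least one is still available, so the good picked by $i$ has $v_i$-value at least $v_i(\sigma_i(j)) = v'_i(j)$. Summing these pointwise bounds and using additivity yields, for every agent $i$,
\[
v_i(A_i) \;\ge\; \sum_{j \in A'_i} v'_i(j) \;=\; v'_i(A'_i),
\]
as claimed.

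There is no real obstacle to this particular proposition; the construction is the standard picking-sequence argument. The point worth emphasizing, however, is the reason this argument is stated only for allocations \emph{without} copies: once copies are allowed, the bookkeeping that gives ``at most $j-1$ real goods unavailable when virtual good $j$ is processed'' breaks down, since the same real good may have been awarded to several agents in earlier steps. This is precisely the subtlety flagged in the text preceding the proposition, and it is what motivates the restricted class of allocations introduced in \Cref{def:alloc_one_copy_per_agent} for the reduction with copies that follows.
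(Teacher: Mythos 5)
Your proof is correct and follows essentially the same approach as the paper: the identical picking-sequence procedure, with the same observation that at the step corresponding to virtual good $j$ only $j-1$ real goods have been taken, so the chosen good is worth at least $v'_i(j)$, and additivity finishes the argument. Your closing remark about why the argument breaks with copies also matches the paper's motivation for the restricted (simple) allocations that follow.
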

\begin{proof}
    The transformation from the ordered allocation $A'$ to the allocation $A$ is done via a simple picking sequence procedure:
    In every iteration $j \in \{1,\dots,m\}=M'$, 
    let $i$ be the agent such that $j \in A'_i$. 
    Agent $i$ chooses the most-valuable item in $M$ (according to $v_i$), which has not yet been chosen.
    
    Since by iteration $j$ exactly $j-1$ items have been chosen, agent $i$ always chooses a good $g \in M$ such that $v_i(g) \ge v_i(\sigma_i(j))=v'_i(j)$ and thus, by additivity, $v_i(A_i)\ge v'_i(A'_i)$.
\end{proof}

We will show a similar result to \Cref{prop:ordered_reduction_wo_copies}, for allocations with $t$ distinct copies in which the top $t$ goods 
(and their copies) are allocated to $2t$ different agents.
\begin{definition}[{Simple Allocation}]\label{def:alloc_one_copy_per_agent}
    Given an ordered instance $\OrderedInstance$, an allocation with $t$ total copies $A'=(A'_1,\dots,A'_n)$ is called \emph{simple} if 
    \begin{itemize}
        \item There is at most one duplication of each good.
        \item The duplicated items are $g_1,\dots,g_t$ for $t \le n/2$. 
        \item There exists a set of $2t$ \emph{distinct} agents, $i_1,\dots,i_t,i^\dagger_1,\dots,i^\dagger_t$ such that for any $j \in [t]$, agents $i_j$ and $i^\dagger_j$ are allocated the two instances of $g_j$, i.e., $A'_{i_j} \cap A'_{i^\dagger_j} = \{g_j\}$.
    \end{itemize}
\end{definition}
Recall that agent $i$ cannot get two instances of the same good in $A'_i$.\footnote{Otherwise, and if additive valuations are extended so that two instances of the same good yield twice its value, then the same argument as in \Cref{prop:ordered_reduction_wo_copies} would have implied the correctness of \Cref{prop:ordered_simple}.}

\propOrderedSimple*

In order to prove \Cref{prop:ordered_simple}, we presented the following useful lemma.

\begin{lemma}\label{lem:important}
    Fix an agent $i \in N$ and a threshold $\tau \in \reals_{+}$. 
    Let $X,Y \subset M$ be such that 
    \begin{itemize}
        \item $|X|=|Y|=k$, $X = \{x_1, \ldots, x_k\}$, $Y = \{y_1, \ldots, y_k\}$, and
        \item for all $j \in [k]$, $v_i(y_j) \geq v_i(x_j)$.
    \end{itemize}
    If there exists a set $G \subset M \setminus X$ of $r$ distinct goods such that $v_i(X \cup \{g\}) \geq \tau$ for all $g \in G$, then there exists a set $H \subset M \setminus Y$ of $r$ distinct goods such that $v_i(Y \cup \{h\}) \geq \tau$ for all $h \in H$.
\end{lemma}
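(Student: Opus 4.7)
My plan is to recast the hypothesis and conclusion in terms of ``high-value'' items using additivity, and then reduce everything to a counting inequality that I will prove by an augmenting-path argument. By additivity, for any $g \notin X$ the condition $v_i(X \cup \{g\}) \geq \tau$ simplifies to $v_i(g) \geq \tau - v_i(X)$, and analogously for $Y$. I will therefore work with $T = \{g \in M : v_i(g) \geq \tau - v_i(X)\}$ and $T' = \{g \in M : v_i(g) \geq \tau - v_i(Y)\}$, so that the hypothesis reads $|T \setminus X| \geq \ell$ and the goal becomes $|T' \setminus Y| \geq \ell$. Summing the pointwise inequality $v_i(y_j) \geq v_i(x_j)$ yields $v_i(Y) \geq v_i(X)$, and hence $T \subseteq T'$. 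Partitioning $M$ into $(X \cup Y)^c$, $X \setminus Y$, $Y \setminus X$, and $X \cap Y$ lets me write $|T \setminus X| = |T \cap (X \cup Y)^c| + |T \cap (Y \setminus X)|$ and the analogous expression for $|T' \setminus Y|$; the inclusion $T \subseteq T'$ takes care of the $(X \cup Y)^c$ term, so it remains to show $|T \cap (Y \setminus X)| \leq |T' \cap (X \setminus Y)|$.

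The heart of the proof will be an explicit injection constructed from a functional digraph on positions $[k]$: draw an edge $j \to j'$ whenever $x_j = y_{j'}$. Since $X$ and $Y$ are sets (all $x_j$'s distinct, all $y_{j'}$'s distinct), every vertex has in- and out-degree at most one, so the digraph decomposes into disjoint paths and cycles. Cycles trap all their associated items inside $X \cap Y$ and contribute nothing to our counts, whereas a maximal path $j_1 \to \cdots \to j_s$ begins with $y_{j_1} \in Y \setminus X$ (no incoming edge at $j_1$) and ends with $x_{j_s} \in X \setminus Y$ (no outgoing edge at $j_s$). Every element of $Y \setminus X$ is uniquely the start of some path and every element of $X \setminus Y$ is uniquely the end, so $y_{j_1} \mapsto x_{j_s}$ defines a bijection $Y \setminus X \to X \setminus Y$.

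The key step I will verify is that this bijection sends $T \cap (Y \setminus X)$ into $T' \cap (X \setminus Y)$. Telescoping along the path using $x_{j_t} = y_{j_{t+1}}$ gives $v_i(y_{j_1}) - v_i(x_{j_s}) = \sum_{t=1}^{s} \big(v_i(y_{j_t}) - v_i(x_{j_t})\big) \leq v_i(Y) - v_i(X)$, where the inequality holds because all pairs $j \notin \{j_1,\dots,j_s\}$ contribute nonnegatively to the total sum $v_i(Y) - v_i(X)$. Combined with $v_i(y_{j_1}) \geq \tau - v_i(X)$ when $y_{j_1} \in T$, this yields $v_i(x_{j_s}) \geq \tau - v_i(Y)$, i.e., $x_{j_s} \in T'$, as required. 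The main obstacle, and the reason the naive map $y_j \mapsto x_j$ fails, is that $x_j$ may itself lie in $Y$; the path decomposition is exactly what chases such identifications along $X \cap Y$ until landing on an endpoint guaranteed to sit outside $Y$, and the telescoping argument is what keeps the value loss bounded by $v_i(Y) - v_i(X)$, which is precisely the slack available since we need to clear a lower threshold on the $Y$-side.
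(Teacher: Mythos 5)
Your proof is correct and uses essentially the same augmenting-path argument as the paper: your functional digraph on positions (with an edge $j \to j'$ whenever $x_j = y_{j'}$) is isomorphic to the paper's bipartite graph of positional and identity edges, and your telescoping inequality along a maximal path is exactly the paper's value accounting over the path's goods $Q$. The only difference is packaging --- you reduce to the global counting inequality $|T' \setminus Y| \geq |T \setminus X|$ via a bijection from $Y \setminus X$ to $X \setminus Y$, whereas the paper replaces each $g \in G \cap Y$ individually by the endpoint of its path --- and both are valid.
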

\begin{proof}
    Observe that by additivity, any good $g \in G \setminus Y$ satisfies $v_i(Y \cup \{g\}) \ge v_i(X \cup \{g\}) \ge \tau$.
    Thus, it suffices to show that for every  $g \in G \cap Y$, there exists a (distinct) corresponding good $h(g) \in M \setminus (Y \cup G)$ such that $v_i(Y \cup \{h(g)\}) \geq \tau$. If this requirement is met, the set $H = (G \setminus Y) \cup \{h(g) \mid g \in G \cap Y\}$ satisfies the desired condition in the lemma. We construct the mapping from every $g$ to $h(g)$ using an augmenting path argument in a bipartite graph (see \Cref{fig:bipartite} for an illustration).
    
    Consider the directed bipartite graph $\mathcal{G} = (\calX,\calY,\calE)$, where 
    $\calX=\{\calX_g\}_{g \in X}$ and  $\calY=\{\calY_g\}_{g \in Y}$. 
    Every vertex $\calX_g \in \calX$ (respectively, $\calY_g \in \calY$ ) corresponds to good $g \in X$ (resp., $g \in Y$).
    Note that $|\calX|=|\calY|=k$.
    When clear in the context, we abuse notation and write $\calX_j$ instead of $\calX_{x_j}$ (and similarly for $\calY$).

    For every $j \in [k]$, we add the (black) edge $(\calY_j,\calX_j)$.
    In addition, for any good $g \in X \cap Y$, we add the (red) edge $(\calX_g,\calY_g)$.
    Observe that $\calE$ is a collection of disjoint cycles of length 2, and paths\footnote{
    Assume that both $X$ and $Y$ are ordered according to $i$'s valuation: $v_i(x_1) \ge \dots \ge v_i(x_k)$ and the same for $Y$, where tie breaking is consistent among $X$ and $Y$.
    Observe that path between $\calY_{y_j}$ and $\calY_{y_{j'}}$ implies that agent $i$ prefers $y_j$ over $y_{j'}$, so a cycle of length greater than 2 is not possible.
    }. 
    Additionally, every vertex $v \in \calX \cup \calY$ satisfies $deg(v) \in \{1,2\}$, where by $deg$ we consider both in- and out-degrees.
    
    Fix $g \in G \cap Y$. Since $g \in G$, then $g \notin X$ and $deg(\calY_g) = 1$. 
    Let $P = (\calY_g,\calX_{g_1},\calY_{g_1},\dots,\calX_{g_\ell},\calY_{g_\ell},\calX_{g'})$ be the unique path originating at $\calY_g$. We claim that $h(g) = g'$ satisfies the desired requirement.
    
    Since $\calX_{g'}$ has no outgoing edges, $g' \notin Y$. Also, since $g' \in X$, then $g' \notin G$.
    Thus, $g' \in M \setminus (Y \cup G)$.
    It remains to show that $v_i(Y \cup \{g'\}) \ge \tau$.
    To this end, let $Q = \{g_1,\dots,g_\ell\} \subseteq X \cap Y$, as defined by the path $P$.
    Observe that the set of goods which correspond to $\calY \cap P$ is exactly $Q \cup \{g\}$ (see \Cref{fig:bipartite}). Similarly, the set of goods corresponding to $\calX \cap P$ is exactly $Q \cup \{g'\}$.
    Observe that the vertices in $P$ can be thought of as pairs where each vertex in $\calY \cap P$ has value $\ge$ than its succeeding vertex in $\calX \cap P$. 
    Thus, when removing these edges from the graph the property $v_i(y_j) \ge v_i(x_j)$ is maintained and thus $v_i(Y \setminus (\{Q \cup \{g\})) \ge v_i(X \setminus (\{Q \cup \{g'\}))$.
    By the last inequality and additivity of $v_i$ we get
    \begin{align*}
        v_i(Y \cup \{g'\}) 
        &=
        v_i(Y \setminus (Q \cup \{g\})) + v_i(Q\cup \{g\}) + v_i(\{g'\}) \\
        &\ge
        v_i(X \setminus (Q \cup \{g'\})) + v_i(QWe \cup \{g'\}) + v_i(\{g\}) \\
        &=
        v_i(X \cup \{g\}) \\
        & \ge \tau,
    \end{align*}
    which concludes the proof.   
\end{proof}

\begin{figure}
\centering

\begin{tikzpicture}[node distance=1.5cm]

  \foreach \i/\label in {1/1, 2/2, 3/3, 4/4, 5/5} {
    \node[circle, draw, minimum size=1cm] (x\i) at (\i*2,2) {$\calX_{{\label}}$};
  }

  \foreach \i/\label in {1/1, 2/2, 3/3, 4/4, 5/5} {
    \node[circle, draw, minimum size=1cm] (y\i) at (\i*2,0) {$\calY_{{\label}}$};
  }

  \node at (12,2) {$\cdots$};
  \node at (12,0) {$\cdots$};

  \foreach \i in {1,...,5} {
    \draw[->] (y\i) -- (x\i);
  }

  \draw[->, thick, red] (x1) -- (y2);
  \draw[->, thick, red] (x2) -- (y3);
  \draw[->, thick, red] (x3) -- (y4);

  \node[below=0.3cm of y1] {\small $g$};

  \node[above=0.3cm of x4] {\small $g'$};

  \begin{pgfonlayer}{background}
    \path let \p1 = (x1), \p2 = (x3) in
      node[fill=blue!10, rounded corners, minimum height=1.3cm, minimum width=\x2-\x1+1.3cm, anchor=center] at ($0.5*(\p1)+0.5*(\p2)+(0,0)$) {};
    \path let \p1 = (y2), \p2 = (y4) in
      node[fill=blue!10, rounded corners, minimum height=1.3cm, minimum width=\x2-\x1+1.3cm, anchor=center] at ($0.5*(\p1)+0.5*(\p2)+(0,0)$) {};
  \end{pgfonlayer}
  
  \node[blue, below=0.3cm of y3] {$Q$};
  \node[blue, above=0.3cm of x2] {$Q$};

\end{tikzpicture}
    \caption{Illustration of the augmenting path argument in the proof of \Cref{lem:important}. The element $g \in G$ (whose corresponding node is $\mathcal{Y}_g$, $g=y_1$) is replaced with $g'$ (whose corresponding node is $\mathcal{X}_{g'}=\mathcal{X}_4$, $g'=x_4$). We write $\mathcal{X}_i$ / $\mathcal{Y}_i$ rather than the cumbersome $\mathcal{X}_{x_i}$/$\mathcal{Y}_{y_i}$. It is purely an example that the path nodes are associated with a contiguous prefix of the goods, e.g., $\mathcal{X}_1$ could point to $\mathcal{Y}_5$.}
    \label{fig:bipartite}
\end{figure}

\begin{proof}[Proof of \Cref{prop:ordered_simple}]
Let $A' = (A'_1,\dots,A'_n)$ be a simple allocation {in which the goods $g_1< \ldots< g_t$ for $t \leq n/2$ are copied}. {Rename the agents so that for every $j \in [t]$ one copy of $g_j$ is allocated to agent $j$, i.e., $g_j \in A_j$.}
We will refer to each such {$g_j$} as the `copy', and to the other instance of {$g_j$} (assigned to some agent $i \in N \setminus \{1,\dots,t\}$), as the `original'.

We will use a similar picking sequence procedure to the one given in \Cref{prop:ordered_reduction_wo_copies}, but with two rounds.
First, we will only use the $m$ `original' goods to determine the picking sequence, this is identical to the picking sequence presented earlier. 
Then, another picking sequence will use the `duplications' (and so will involve agents $1,\dots,t$, where each picks one good). Each agent in the second round may pick any good in $M$ which was not yet allocated in this round and was not picked by him in the first round.
Let $A_i \subseteq M$ be the set of goods picked by agent $i$ in both rounds.

Any agent $i \in N \setminus \{1,\dots,t\}$ only participates in the first round, so as in \Cref{prop:ordered_reduction_wo_copies}, it holds that $v_i(A_i) \ge v'_i(A'_i)$.
For agent $j \in \{1,\dots,t\}$, let $B_j$ be the set of $|A_j|-1$ goods he picked in the first round. To show that $v_j(A_j) \ge v'_j(A'_j)$, it is enough to show that at the beginning of the second round there is a set of goods $H \subseteq M \setminus B_j$, such that (i) $|H| \ge j$, (ii) each $h \in H$ satisfies $v_j(B_j \cup \{h\}) \ge v'_j(A'_j)$. This would imply that in round $j$, player $j$ can pick a good $h$ such that $A_j = B_j \cup \{h\}$ satisfies the claim.

Denote $A'_j = \{j, j_1,\dots,j_k\}$, $B_j = \{b_1,\dots,b_k\}$, where $b_\ell$ was picked by agent $j$ in round $j_\ell$.
The existence of $H$ follows from applying \Cref{lem:important} with $X = \{\sigma_j(j_1),\dots,\sigma_j(j_k)\}$, $Y = B_j$, 
{$G = \{\sigma(g_1),\dots,\sigma_j(g_j)\}$} and $\tau = v'_j(A'_j)$.
Observe that by definition of the first picking sequence, for any $\ell$, it holds that $v_i(b_\ell) \ge v_i(\sigma_j(j_\ell)) = v'_j(j_\ell)$. 
Also, {$\tau = v'_j(A'_j) = v'_j(A'_j \setminus \{g_j\}) + v'_j(g_j) = v_j(X) + v_j(\sigma_j(g_j)) = v_j(X \cup \{\sigma_j(g_j)\})$}.
By definition, any good in $g \in G$ has {$v_j(\{g\}) \ge v_j(\sigma_j(g_j))$}, and by additivity all the conditions of \Cref{lem:important} are satisfied. This concludes the proof.
\end{proof}
For the rest of this section, we only consider ordered instances. Because the algorithms we use to solve these instances always output a simple allocation, their guarantees, by \Cref{prop:ordered_simple}, can be extended to any instance, even unordered.

\subsection{Reduction Rules}\label{sec:reductions}

\begin{definition} \label{def:alphavalid}
    Given an instance $\GenInstance$, an allocation rule that allocates bundles $A_1, \ldots, A_k$ to agents $1, \ldots, k$ respectively is $\alpha$-valid, if and only if the following holds:
    \begin{itemize}
        \item $v_i(A_i) \geq \alpha \cdot \mu_i(\calI)$ for all $i \in [k]$, and
        \item $\mu^{n-k}_i(M \setminus (A_1 \cup \ldots \cup A_k)) \geq \mu^n_i(M)$ for all $i \in N \setminus [k]$.
    \end{itemize}
\end{definition}

The following lemma with small value of $k$ (namely $k \leq 3$) has been proven and used in \cite{garg2021improved,akrami2023simplification,akrami2023breaking34barrierapproximate}.
\begin{restatable}{lemma}{reducK}\label{lem:reduceK}
    Given an ordered fair division instance $\GenInstance$, a fixed agent $i \in N$, and integer $k < |M|/n$, 
    let $K = \{k(n-1)+1, k(n-1)+2, \ldots, nk, nk+1\}$. Then, $\mu^{n-1}_i(M \setminus K) \geq \mu^n_i(M).$
\end{restatable}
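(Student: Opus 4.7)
The plan is to explicitly construct an $(n-1)$-partition of $M\setminus K$ in which every bundle has value at least $\mu := \mu_i^n(M)$ for agent $i$, starting from an MMS partition $(P_1,\ldots,P_n)$ of $M$ with $v_i(P_\ell)\geq \mu$ for all $\ell$. Note that $|K|=k+1$ and $K=\{kn-k+1,\ldots,kn+1\}$ is a block of items at positions just after the top $kn-k=k(n-1)$ items.

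First I would identify a ``target'' bundle $P_j$ to absorb the items we plan to remove. Considering the $kn+1$ items in $[kn+1]$ distributed across the $n$ bundles, pigeonhole gives some $P_j$ with $|P_j\cap [kn+1]|\geq \lceil (kn+1)/n\rceil = k+1$. Decompose this bundle as $P_j = H\cup C\cup L$ with $H := P_j\cap [kn-k]$, $C := P_j\cap K$, and $L := P_j\setminus [kn+1]$. Then $|H|+|C|\geq k+1$, so $|H|\geq (k+1)-|C| = |K\setminus P_j|$.

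Next I would rearrange the partition by choosing an injection $\phi : K\setminus P_j \to H$ and performing the swaps: for each $b\in K\setminus P_j$, remove $b$ from its current bundle $P_{\ell(b)}$ (with $\ell(b)\neq j$) and place $\phi(b)$ there instead; symmetrically update $P_j$. The crucial point is that $\phi(b)\in [kn-k]$ and $b\in K$, so by the ordering of $v_i$ we have $v_i(\phi(b))\geq v_i(kn-k)\geq v_i(kn-k+1)\geq v_i(b)$. Hence each swap weakly increases the value of the receiving bundle, and the resulting bundles $P_\ell'$ for $\ell\neq j$ satisfy $v_i(P_\ell')\geq v_i(P_\ell)\geq \mu$, while the updated $P_j'$ now contains all of $K$.

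Finally, to produce the $(n-1)$-partition of $M\setminus K$, I would take the $n-1$ bundles $\{P_\ell'\}_{\ell\neq j}$ and arbitrarily distribute the leftover items $P_j'\setminus K$ among them (so no item is dropped). Adding items only weakly increases value under additivity, so each of the $n-1$ resulting bundles still has value at least $\mu$, proving $\mu_i^{n-1}(M\setminus K)\geq \mu_i^n(M)$. The only substantive step is recognizing that the ordered structure of $v_i$ makes the pigeonhole-chosen $P_j$ simultaneously (a) large enough in $[kn-k]$ to supply all swap partners, and (b) composed of items each at least as valuable to agent $i$ as the items of $K$ being exchanged in; the rest is bookkeeping. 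A useful sanity check: for $k=1$ this recovers the classical ``remove items at positions $n$ and $n+1$ and one agent'' reduction used in prior MMS works.
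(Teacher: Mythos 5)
Your proof is correct, and at the top level it follows the same strategy as the paper's: use pigeonhole to find a bundle $P_j$ of agent $i$'s MMS partition containing at least $k+1$ of the goods $[kn+1]$, swap goods so that all of $K$ ends up in $P_j$, then discard $P_j$ (redistributing its non-$K$ remainder, which only helps by monotonicity). The execution differs in a way worth noting. The paper picks arbitrary $k+1$ goods $g_1\le\cdots\le g_{k+1}$ from $P_j\cap[kn+1]$ and pairs $g_\ell$ with the $\ell$-th good of $K$, relying on the order-statistic bound $g_\ell\le kn-k+\ell$ for the value comparison. You instead draw the swap partners only from $H=P_j\cap[kn-k]$, i.e., goods ranked strictly above every good of $K$, which makes the value domination uniform ($v_i(\phi(b))\ge v_i(kn-k)\ge v_i(kn-k+1)\ge v_i(b)$ for every $b\in K\setminus P_j$) and needs only the count $|H|\ge (k+1)-|P_j\cap K|=|K\setminus P_j|$. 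Your variant is actually the more careful one: when $P_j\cap K\ne\emptyset$, the paper's chosen $g_\ell$ may themselves lie in $K$, and the prescribed swap then moves a good of $K$ \emph{out} of $P_j$ into a surviving bundle. For instance, with $n=3$, $k=1$, $K=\{3,4\}$, $P_3\cap[4]=\{1,3\}$ and $4\in P_1$, the paper's procedure takes $g_2=3$ and swaps it with good $4$, leaving good $3\in K$ inside $P'_1$, so $(P'_1,P'_2)$ is not contained in $M\setminus K$ as claimed. Your decomposition $P_j=H\cup C\cup L$ keeps $C=P_j\cap K$ fixed and never injects a $K$-good into a surviving bundle, so the conclusion holds without caveat. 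Both arguments prove the same lemma; yours closes an edge case that the paper's write-up glosses over.
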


\begin{tcolorbox}[
	title= \textbf{Rule $R^\alpha_k(\calI)$}]
	
	Preconditions:\label{pre1}
    \vspace{-2 mm}
	\begin{itemize}
		\item For $K = \{k(n-1)+1, \ldots, nk, nk+1\}$, there exists an agent $i$ such that $v_i(K) \geq \alpha \cdot \mu_i(\calI)$.
	\end{itemize}
    Process:
    \vspace{-2 mm}
    \begin{itemize}
        \item Allocate $K$ to agent $i$.
        \vspace{-2 mm}
		\item Set $N' \leftarrow N \setminus \{i\}$ and $M' \leftarrow M \setminus K$.
	\end{itemize}
    Guarantees:
    \vspace{-2 mm}
	\begin{itemize}
        \item For all $i \in N'$, $\mu_i(\calI') \geq \mu_i(\calI)$.
	\end{itemize}
\end{tcolorbox}

\begin{corollary}[of Lemma \ref{lem:reduceK}]\label{cor:validK}
    Given an ordered instance $\GenInstance$, for all $k < |M|/n$ and $\alpha \geq 0$, $R^\alpha_k(\calI)$ is an $\alpha$-valid reduction. 
\end{corollary}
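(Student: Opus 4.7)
The plan is to check the two defining conditions of an $\alpha$-valid reduction (Definition~\ref{def:alphavalid}) directly, using Lemma~\ref{lem:reduceK} as the key ingredient. The reduction $R^\alpha_k(\I)$ allocates a single bundle, namely $A_i = K$, to the distinguished agent $i$, so in the language of Definition~\ref{def:alphavalid} we are in the case $k=1$ (with a slight abuse of letters, since $k$ is already used for the rule's parameter). We must verify (a) that $v_i(A_i) \geq \alpha \cdot \mu_i(\I)$, and (b) that $\mu_{i'}^{n-1}(M \setminus K) \geq \mu_{i'}^{n}(M)$ for every remaining agent $i' \in N \setminus \{i\}$.

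Condition (a) is immediate: it is literally the second precondition of the rule, which states that $v_i(K) \geq \alpha \cdot \mu_i(\I)$. No further argument is needed.

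Condition (b) is where Lemma~\ref{lem:reduceK} does all the work. Since $\I$ is ordered, the set $K = \{k(n-1)+1, \ldots, kn, kn+1\}$ is a fixed set of indices in $M$ that does not depend on any particular agent's valuation; crucially, every agent agrees on the ordering that gives rise to these indices. Fix any $i' \in N \setminus \{i\}$. Applying Lemma~\ref{lem:reduceK} to agent $i'$ and the same $k$ (note that the precondition $k < |M|/n$ of the rule is exactly the hypothesis of the lemma) yields $\mu_{i'}^{n-1}(M \setminus K) \geq \mu_{i'}^{n}(M)$, which is precisely condition (b). Finally, the ``guarantees'' part of the rule box—that $\I'$ is ordered—holds trivially, since removing a subset of goods from an ordered instance preserves the common ordering on the remaining goods.

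The main (and only) obstacle is really bookkeeping: making sure that the parameter $k$ in Lemma~\ref{lem:reduceK} matches the parameter $k$ in the rule $R^\alpha_k$, and that the set $K$ defined in the rule coincides with the set $K$ used in the lemma's statement. Both match by construction, so combining (a) and (b) immediately gives that $R^\alpha_k(\I)$ is $\alpha$-valid, completing the proof.
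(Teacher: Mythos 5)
Your proof is correct and matches the paper's intent exactly: the paper states this as an immediate corollary of Lemma~\ref{lem:reduceK} without writing out a proof, and the argument is precisely the one you give --- condition (a) of Definition~\ref{def:alphavalid} is the rule's precondition verbatim, and condition (b) is Lemma~\ref{lem:reduceK} applied to each remaining agent. Nothing is missing.
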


\begin{definition}\label{def:irreducibility}
    We call an ordered instance $\GenInstance$, $R^\alpha_k$-irreducible, if $R^\alpha_k(\calI)$ is not applicable. Moreover, we call an ordered instance $\GenInstance$, $R^\alpha$-irreducible, if for all $k < |M|/|N|$, $R^\alpha_k(\calI)$ is not applicable. 
\end{definition}

\begin{restatable}{observation}{obsSimpleBounds}\label{obs:simple-bounds}
    If $\,\GenInstance$ is ordered and $R^\alpha_k$-irreducible for some $k < |M|/n$, then $v_i(nk+1) < \alpha \cdot \mu_i/(k+1)$ for all agents $i \in N$.
\end{restatable}

\begin{algorithm}[tb]
\caption{$\mathtt{R-reduce(\calI,\alpha)}$: 
\\ \textbf{Input:} An ordered instance $\GenInstance$ and approximation factor $\alpha$.
\\ \textbf{Output:} An ordered instance $\calI'=(N',M',\{v_i\}_{i \in N'})$ with $N' \subseteq N$ and $M' \subseteq M$.}
\label{alg:Reduce}\SetAlgoLined
\DontPrintSemicolon
\LinesNumbered

\While{for any $k < |M|/n$, $R^\alpha_k$ is applicable}{
     $\calI \leftarrow R^\alpha_k(\calI)$ for an arbitrary $k$ such that $R^\alpha_k$ is applicable}
\Return $(N,M,\{v_i\}_{i \in N})$
\end{algorithm}

\begin{restatable}{lemma}{reduceTwoGoods}\label{lem:reduce-2goods}
    Let $\GenInstance$ be a fair division instance with additive valuations, and fix an agent $i \in N$. Let $g_1, g_2 \in M$ be such that $v_i(g_1) + v_i(g_2) \leq \mu^n_i(M)$. Then 
    $\mu^n_i(M) \leq \mu^{n-1}_i(M \setminus \{g_1,g_2\}).$
\end{restatable}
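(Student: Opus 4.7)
The plan is to construct, starting from an MMS partition of agent $i$ on $M$, a partition of $M \setminus \{g_1, g_2\}$ into $n-1$ bundles each of value at least $\mu_i^n(M)$ to agent $i$. Let $P = (P_1, \ldots, P_n)$ be an MMS partition of $i$, so $v_i(P_j) \geq \mu_i^n(M)$ for every $j \in [n]$. The argument splits into two cases, depending on whether $g_1$ and $g_2$ lie in the same part of $P$.

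In the first case, assume $g_1, g_2 \in P_n$ (after relabeling). Then $(P_1, \ldots, P_{n-1})$ is already a partition of $M \setminus P_n \subseteq M \setminus \{g_1, g_2\}$ into $n-1$ bundles of value $\geq \mu_i^n(M)$ each. We obtain a partition of $M \setminus \{g_1, g_2\}$ by distributing the leftover items $P_n \setminus \{g_1, g_2\}$ arbitrarily among $P_1, \ldots, P_{n-1}$; by monotonicity, bundle values only grow, so each remains $\geq \mu_i^n(M)$.

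In the second case, assume $g_1 \in P_{n-1}$ and $g_2 \in P_n$ (after relabeling). Define the merged bundle $Q := (P_{n-1} \cup P_n) \setminus \{g_1, g_2\}$. Using additivity and the hypothesis $v_i(g_1) + v_i(g_2) \leq \mu_i^n(M)$,
\begin{align*}
v_i(Q) &= v_i(P_{n-1}) + v_i(P_n) - v_i(g_1) - v_i(g_2) \\
&\geq 2\mu_i^n(M) - \mu_i^n(M) = \mu_i^n(M).
\end{align*}
Thus $(P_1, \ldots, P_{n-2}, Q)$ is a partition of $M \setminus \{g_1, g_2\}$ into $n-1$ bundles, each of value at least $\mu_i^n(M)$ to $i$.

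In both cases, the resulting partition witnesses $\mu_i^{n-1}(M \setminus \{g_1, g_2\}) \geq \mu_i^n(M)$, which is the desired inequality. The only real ``content'' is the case split together with the merging trick in the second case; the first case is essentially free by monotonicity, and the second case relies on a single additive computation together with the bound on $v_i(g_1) + v_i(g_2)$. There is no significant obstacle since additivity lets us split and reassemble bundle values exactly; the hypothesis is tailored precisely so that the merged bundle meets the threshold.
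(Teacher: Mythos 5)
Your proof is correct and follows essentially the same route as the paper: take an MMS partition, and either drop/absorb the part(s) containing $g_1,g_2$ when they share a bundle, or merge the two bundles containing them and use additivity plus $v_i(g_1)+v_i(g_2)\leq \mu_i^n(M)$ to show the merged bundle still meets the threshold. The only cosmetic difference is that you explicitly redistribute the leftover items in the first case, whereas the paper shortcuts this by noting a partition of a subset of $M \setminus \{g_1,g_2\}$ suffices.
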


\begin{restatable}{lemma}{newReduction}\label{lem:new-reduction}
    Given a fair division instance $\GenInstance$ and a fixed agent $i \in N$ with additive valuation, let $g_1, g_2, g_3 \in M$ be such that $g_2 \neq g_3$ and $v_i(g_2) + v_i(g_3) \leq \mu^n_i(M)$. Then 
    $\mu^{n-2}_i(M \setminus \{g_1,g_2,g_3\}) \geq \mu^n_i(M).$
\end{restatable}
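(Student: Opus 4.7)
The plan is to chain the two reduction lemmas already established: Lemma~\ref{lem:MMSmonotone} (removing one good and one agent does not decrease MMS) and Lemma~\ref{lem:reduce-2goods} (removing two goods whose combined value is at most the MMS, together with one agent, does not decrease MMS). Since the hypotheses of Lemma~\ref{lem:reduce-2goods} are satisfied for the pair $(g_2, g_3)$ (namely $g_2 \neq g_3$ and $v_i(g_2) + v_i(g_3) \leq \mu_i^n(M)$), applying it directly yields
\[
\mu_i^{n-1}(M \setminus \{g_2, g_3\}) \geq \mu_i^n(M).
\]

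Next, I would apply Lemma~\ref{lem:MMSmonotone} to the reduced instance with agent set of size $n-1$ and ground set $M \setminus \{g_2, g_3\}$. If $g_1 \notin \{g_2, g_3\}$, I would remove the good $g_1$, obtaining
\[
\mu_i^{n-2}\bigl(M \setminus \{g_1, g_2, g_3\}\bigr) \geq \mu_i^{n-1}(M \setminus \{g_2, g_3\}) \geq \mu_i^n(M),
\]
which is the desired inequality. If instead $g_1 \in \{g_2, g_3\}$, then $M \setminus \{g_1, g_2, g_3\} = M \setminus \{g_2, g_3\}$, and I would pick any other good $g \in M \setminus \{g_1, g_2, g_3\}$, apply Lemma~\ref{lem:MMSmonotone} to remove $g$ (giving $\mu_i^{n-2}(M \setminus \{g, g_2, g_3\}) \geq \mu_i^{n-1}(M \setminus \{g_2, g_3\})$), and then use the trivial monotonicity of MMS in the ground set (adding a good back to a partition preserves all bundle values) to conclude $\mu_i^{n-2}(M \setminus \{g_2, g_3\}) \geq \mu_i^{n-2}(M \setminus \{g, g_2, g_3\})$.

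There is essentially no obstacle here beyond bookkeeping the minor edge case where $g_1$ coincides with $g_2$ or $g_3$; the statement is a clean composition of the two prior reductions. One implicit assumption, which should be noted in passing, is that $|M|$ is large enough that the reduced instance is nonempty (if not, the claim is vacuous, as $\mu_i^{n-2}$ of the empty ground set is zero only when there is nothing left to allocate).
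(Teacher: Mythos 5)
Your proof is correct, but it takes a different route from the paper's. The paper proves the lemma directly by exhibiting a certificate: starting from an MMS partition $P=(P_1,\dots,P_n)$ of agent $i$, it either discards the (at most two) bundles containing all three goods, or, when $g_1,g_2,g_3$ lie in three distinct bundles $P_1,P_2,P_3$, it merges $P_2\cup P_3\setminus\{g_2,g_3\}$ into a single bundle of value at least $2\mu_i^n(M)-\mu_i^n(M)=\mu_i^n(M)$ and discards $P_1$ entirely, yielding $n-2$ bundles avoiding all three goods. You instead obtain the same conclusion by composing two already-established reductions: Lemma~\ref{lem:reduce-2goods} to remove $\{g_2,g_3\}$ and one agent, followed by Lemma~\ref{lem:MMSmonotone} to remove $g_1$ and a second agent. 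The composition is arguably cleaner and avoids repeating the certificate template that the paper uses in each of Lemmas~\ref{lem:reduce-2goods}, \ref{lem:new-reduction}, and \ref{lem:4-5-reduction}; the paper's direct construction is self-contained and makes the certificate explicit, which keeps the proof parallel to its neighbors. At bottom the two arguments coincide: discarding $P_1$ in the paper's certificate is exactly the ``remove one more good and one more agent'' step of Lemma~\ref{lem:MMSmonotone}. Your handling of the edge case $g_1\in\{g_2,g_3\}$ (which the lemma statement does permit, since only $g_2\neq g_3$ is required) is sound, though it could be shortened by observing directly that $\mu_i^{n-2}(S)\ge\mu_i^{n-1}(S)$ for any ground set $S$, since merging two bundles of an $(n-1)$-partition can only increase the minimum.
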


\begin{tcolorbox}[
	title= \textbf{Rule $S^\alpha(\calI)$}]\label{rule:S}
	
	Preconditions:\label{pre1}
    \vspace{-2 mm}
	\begin{itemize}
        \item $\calI$ is $R^\alpha_1${-irreducible}.
        \vspace{-2 mm}
		\item There exists an agent $i$ such that $v_i(\{1,n+1\}) \geq \alpha \cdot \mu_i(\calI)$.
	\end{itemize}
    Process:
    \vspace{-2 mm}
    \begin{itemize}
        \item Allocate $\{1,n+1\}$ to $i$.
        \vspace{-2 mm}
        \item Set $N' \leftarrow N \setminus \{i\}$ and $M' \leftarrow M \setminus \{1,n+1\}$.
        \vspace{-2 mm}
        \item If there exists agent $j \neq i$ such that $v_j(\{1,n+1\}) \geq \alpha \cdot \mu_j(\calI)$: 
        \vspace{-1 mm}
            \subitem - Duplicate good $1$ and allocate $\{1,n\}$ to agent $j$.
            \vspace{-1 mm}
            \subitem - Set $N' \leftarrow N' \setminus \{j\}$ and $M' \leftarrow M' \setminus \{n\}$.
	\end{itemize}
    Guarantees:
    \vspace{-2 mm}
	\begin{itemize}
        \item For all $i \in N'$, $\mu_i(\calI') \geq \mu_i(\calI)$.
	\end{itemize}
\end{tcolorbox}

\begin{restatable}{lemma}{lemValidS}\label{lem:validS}
    Given an ordered instance $\GenInstance$ and $\alpha \leq 1$, $S^\alpha(\calI)$ is an $\alpha$-valid reduction. 
\end{restatable}

\begin{tcolorbox}[
	title= \textbf{Rule $T^\alpha(\calI)$}]\label{rule:T}
	
	Preconditions:\label{preT}
    \vspace{-2 mm}
	\begin{itemize}
        \item $\calI$ is $R^\alpha_k${-irreducible for $k \leq 1$.} 
        \vspace{-2 mm}
		\item There exists three different agents $i,j,\ell$ such that $v_i(\{1,n+1\}) \geq \alpha \cdot \mu_i(\calI)$, $v_j(\{2,n+2\}) \geq \alpha \cdot \mu_j(\calI)$, and $v_\ell(\{1,n+3\}) \geq \alpha \cdot \mu_\ell(\calI)$.
	\end{itemize}
    Process:
    \vspace{-2 mm}
    \begin{itemize}
        \item Allocate $\{1,n+1\}$ to $i$ and $\{2,n+2\}$ to $j$.
        \vspace{-2 mm}
        \item Duplicate good $1$ and allocate $\{1,n+3\}$ to $\ell$.
        \vspace{-2 mm}
        \item Set $N' \leftarrow N \setminus \{i,j,\ell\}$ and $M' \leftarrow M \setminus \{1,2,n+1,n+2,n+3\}$.
	\end{itemize}
    Guarantees:
    \vspace{-2 mm}
	\begin{itemize}
        \item {If $\alpha \leq 4/5$,} for all $i \in N'$, $\mu_i(\calI') \geq \mu_i(\calI)$.
	\end{itemize}
\end{tcolorbox}

\begin{restatable}{lemma}{fourOverFive}\label{lem:4-5-reduction}
    Let $\calI$ be a fair division instance with additive valuations and let $S = \{h_1, h_2, s_1, s_2, s_3 \} \subseteq M$ be such that for $j \in [2]$, $v_i(h_j) \leq 4/5 \cdot \mu^n_i(M)$ and for $j \in [3]$, $v_i(s_j) \leq 2/5 \cdot \mu^n_i(M)$ for all $i \in  N$. Then for all $i \in  N$ we have
    $\mu^{n-3}_i(M \setminus S) \geq \mu^n_i(M).$
\end{restatable}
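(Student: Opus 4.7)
The plan is to fix an arbitrary agent $i \in N$, write $\mu := \mu^n_i(M)$, start from an MMS partition $P = (P_1, \ldots, P_n)$ of $i$ (so $v_i(P_j) \geq \mu$ for all $j$), and exhibit a sub-partition of $M \setminus S$ into $n-3$ bundles of value at least $\mu$. This is exactly a certificate for $\mu_i^{n-3}(M \setminus S) \geq \mu$, following the same template as Lemmas~\ref{lem:reduce-2goods} and~\ref{lem:new-reduction}. I will do a case analysis on $k$, the number of bundles of $P$ that intersect $S$; note $k \leq 5$ since $|S|=5$.

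For $k \leq 3$, the $n-k \geq n-3$ untouched bundles each have value $\geq \mu$, and any $n-3$ of them certify the claim. For $k=4$, one bundle of $P$ contains two items of $S$ and the other three touched bundles contain one each; take the $n-4$ untouched bundles, and use the union of the four touched bundles minus $S$ as the final bundle. Its value is at least $4\mu - v_i(S)$, and since $v_i(S) \leq 2 \cdot \tfrac{4}{5}\mu + 3 \cdot \tfrac{2}{5}\mu = \tfrac{14}{5}\mu$, this union has value at least $\tfrac{6}{5}\mu \geq \mu$.

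The only delicate case is $k=5$: each of $h_1, h_2, s_1, s_2, s_3$ lies in its own distinct bundle of $P$. Call the two bundles containing $h_1, h_2$ the \emph{heavy-touched} bundles; after deletion of their $S$-element each has remaining value at least $\mu - \tfrac{4}{5}\mu = \tfrac{1}{5}\mu$. Call the three bundles containing $s_1, s_2, s_3$ the \emph{small-touched} bundles; after deletion each has remaining value at least $\mu - \tfrac{2}{5}\mu = \tfrac{3}{5}\mu$. Group the five residuals into two bundles as follows: \emph{Group A} = union of two small-touched residuals, with value at least $\tfrac{3}{5}\mu + \tfrac{3}{5}\mu = \tfrac{6}{5}\mu \geq \mu$; \emph{Group B} = union of the remaining small-touched residual and the two heavy-touched residuals, with value at least $\tfrac{3}{5}\mu + \tfrac{1}{5}\mu + \tfrac{1}{5}\mu = \mu$. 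Together with the $n-5$ untouched bundles this yields $n-3$ bundles of value $\geq \mu$ each, all contained in $M \setminus S$.

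The main (and essentially only) obstacle is the $k=5$ case, and specifically the need to show that two groups of value $\geq \mu$ can be formed from five residuals whose total is only $\geq \tfrac{11}{5}\mu$; a naive attempt to pair heavy-touched residuals together fails (they contribute only $\tfrac{2}{5}\mu$). The right grouping exploits the asymmetry between heavy and small items: each small-touched residual alone nearly meets the threshold, so one of them can carry both heavy-touched residuals, while the remaining two small-touched residuals easily exceed $\mu$ on their own. All other cases reduce to the $k=4$ argument (and hence to the averaging bound $v_i(S) \leq \tfrac{14}{5}\mu$), or are immediate from the untouched bundles.
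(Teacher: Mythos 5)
Your proof is correct and follows essentially the same route as the paper: start from an MMS partition, case on how many bundles meet $S$, merge the four touched bundles when $k=4$ (value $\geq 4\mu - \tfrac{14}{5}\mu = \tfrac{6}{5}\mu$), and for $k=5$ use exactly the paper's grouping of two light-touched bundles together and the third light-touched bundle with both heavy-touched ones. No gaps.
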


\begin{proof}
    To prove the lemma, it suffices to partition (a subset of) $M \setminus S$ into $n-3$ bundles, each of value at least $\mu^n_i(M)$ to $i$. We call such a partition a certificate. Let $P=(P_1, \ldots, P_n)$ be an MMS partition of agent $i$ in the original instance $\calI$. Without loss of generality, assume $S \subseteq P_1 \cup \ldots \cup P_k$ for $k \leq 5$. If $S \subseteq P_1 \cup P_2 \cup P_3$, then $(P_4, \ldots, P_{n-1}, P_n)$ is a certificate. Otherwise, we consider two cases:
    \begin{itemize}
        \item Case 1: $S \subseteq P_1 \cup \ldots \cup P_4$. We have
        \begin{align*}
            v_i(P_1 \cup \ldots \cup P_4 \setminus S) &\geq \mu^n_i(M)(4 - 2 \cdot 4/5 - 3 \cdot 2/5) > \mu^n_i(M).  
        \end{align*}
        Therefore, $(P_1 \cup \ldots \cup P_4 \setminus S, P_5, \ldots, P_n)$ is a certificate.
        \item Case 2: $S \subseteq P_1 \cup \ldots \cup P_5$. Let $P_j \cap S = \{s_j\}$ for $j \in [3]$ and $P_{j+3} \cap S = \{h_j\}$ for $j \in [2]$. We have 
        \begin{align*}
            v_i(P_1 \cup P_2 \setminus S) \geq \mu^n_i(M)(2 - 2 \cdot 2/5) > \mu^n_i(M),
        \end{align*}
        and 
        \begin{align*}
            v_i(P_3 \cup P_4 \cup P_5 \setminus S) \geq \mu^n_i(M)(3 - 2 \cdot 4/5 - 2/5) = \mu^n_i(M).
        \end{align*}
        Thus, $(P_1 \cup P_2 \setminus S, P_3 \cup P_4 \cup P_5 \setminus S, P_6, \ldots, P_n)$ is a certificate. \qedhere
    \end{itemize}
\end{proof}

\begin{restatable}{lemma}{validT}\label{lem:validT}
    Given an ordered instance $\GenInstance$ and $\alpha \leq 4/5$, $T^\alpha(\calI)$ is an $\alpha$-valid reduction. 
\end{restatable}
\begin{proof}
    For all agents $i$ who is assigned a bundle $B$ in $T^\alpha$, we have $v_i(B) \geq \alpha \cdot \mu_i(\calI)$. {Note that by the preconditions of the reduction $T^\alpha(\calI)$, $\calI$ is $R^\alpha_k$-irreducible for $k \leq 1$. Thus, by Observation \ref{obs:simple-bounds},} for all the remaining agents $i$, we have $\alpha \cdot \mu^n_i(M) \geq v_i(1) \geq v_i(2)$ and $\alpha/2 \cdot \mu^n_i(M)\geq  v_i(n+1) \geq v_i(n+2) \geq v_i(n+3)$. Since $\alpha \leq 4/5$, by Lemma \ref{lem:4-5-reduction}, $v^{n-3}_i(M \setminus \{1,2,n+1,n+2,n+3\}) \geq v^n_i(M)$.    
\end{proof}

\begin{remark}
    In subsequent sections, similar to \Cref{def:irreducibility}, an ordered instance is $S^\alpha$-irreducible and $T^\alpha$-irreducible, if $S^\alpha$ and $T^\alpha$ are not applicable respectively.
\end{remark}

\subsection{Combining Round Robin and Bag-Filling}\label{sec:RR}
In this section, we present Algorithm \ref{alg:RR} ($\bagfillRR(\calI,\alpha)$) which combines the idea of bag-filling and round robin to achieve an $\alpha$-MMS allocation with copies for ordered instances.

Note that the input $\calI$ of $\bagfillRR(\calI,\alpha)$ is $R^\alpha$-irreducible. 
For all the agents $i$, we define $\mu_i = \mu^n_i(M)$ to be the MMS value of agent $i$ in the beginning of the Algorithm \ref{alg:RR}. We do a round robin fashion process as following: starting from $n$ empty bags $B_1, \ldots, B_n$, for $j$ in the cyclic sequence $1,2, \ldots, n$, we add the most valuable available  good (not yet allocated to any bag) to $B_j$. If for some remaining agent $i$, $v_i(B_j) \geq \alpha \cdot \mu_i$, then we allocate $B_j$ to $i$ and remove $i$ from the set of agents and $j$ from the cyclic sequence. We call this phase of the algorithm, the round robin phase. Note that for all the remaining bags $B_a$ and $B_b$, if $a<b$, then $v_i(B_a) \geq v_i(B_b)$ for all agents $i$ and hence, first index $1$ will be removed, then index $2$ and so on (see Lemma \ref{lem:order}). Let $\{B_a, \ldots, B_n\}$ be the set of the remaining bags at the end of the round robin phase. In the next phase of the algorithm, called duplication phase, we duplicate goods $1,2, \ldots, n-a+1$, add them to $B_a, \ldots, B_n$ respectively and allocate these bags to the remaining agents arbitrarily. Unlike the round robin phase, it is not clear whether the agents who are allocated during the duplication phase value their bundle at least $\alpha$ fraction of their MMS. However, in Sections \ref{sec:67} and \ref{sec:45}, we show that by appropriately reducing the instance in advance, we can ensure both the desired approximation guarantees and the bound on the number of distinct copies. 

\begin{algorithm}[tb]
\caption{$\bagfillRR(\calI,\alpha)$: 
\\ \textbf{Input:} An $R^\alpha$-irreducible instance $\GenInstance$ and approximation factor $\alpha$
\\ \textbf{Output:} An $\alpha$-MMS allocation $A$ with copies}
\label{alg:RR}\SetAlgoLined
\DontPrintSemicolon
\LinesNumbered
Let $B_i = \emptyset$ for $i \in [|N|]$ \;
$n \leftarrow |N|$, $j \leftarrow 1$, 
$a \leftarrow 1$ \;
\For(\Comment{Round robin Phase}){$g \gets 1$ \KwTo $|M|$} {
    $B_j \leftarrow B_j \cup \{g\}$ \;
    \If{there exists $i \in N$ such that $v_i(B_j) \geq \alpha \cdot \mu_i$}{
        $A_i \leftarrow B_j$ for such an arbitrary $i$ \;
        $N \leftarrow N \setminus \{i\}$, $ a \leftarrow a+1$ \;
    } 
    $j \leftarrow j+1$ \;
    \If{$j > n$}{
        $j \leftarrow a$ \;
    }
}
\For(\Comment{Duplication Phase}){$j \gets a$ \KwTo $n$}{  
    Let $i \in N$ \;
    $A_i \leftarrow B_j \cup \{j-a+1\}$, $N \leftarrow N \setminus \{i\}$ \;
}
\Return $(A_1, A_2, \ldots, A_n)$ \;
\end{algorithm}
For the rest of this (sub-)section, let $\alpha \geq 0$, $\calI$ be an $R^\alpha$-irreducible instance, {and $\mu_i(\calI)=1$ without loss of generality}.

\begin{lemma}\label{lem:upper-bound}
    For all the remaining agents $i$ after the round robin phase of $\bagfillRR(\calI,\alpha)$, and all the bags $B_j$ with $|B_j| \geq 3$ that were allocated to other agents during the round robin phase, we have $v_i(B_j) \leq 4\alpha/3$.
\end{lemma}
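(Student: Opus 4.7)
The plan is to split $v_i(B_j) = v_i(B_j \setminus \{g\}) + v_i(g)$, where $g$ denotes the last item added to $B_j$ before it is allocated, and to bound the two summands separately.

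For the first summand, I would use the fact that agent $i$ remains in $N$ throughout the round robin phase. At the visit to $B_j$ that immediately precedes the addition of $g$, the bag held $B_j \setminus \{g\}$ and the algorithm chose not to allocate it; since $i \in N$ at that moment, this forces $v_i(B_j \setminus \{g\}) < \alpha \mu_i = \alpha$.

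For the second summand, the goal is to show that the index $g$ (which equals the total number of items added by the time $B_j$ is allocated, since items are processed in the ordering $1,2,\ldots,m$) satisfies $g \geq 2n+1$. By Lemma~\ref{lem:order}, bags are allocated in order of their index, so when $B_j$ is being allocated the bags $B_1,\ldots,B_{j-1}$ are already allocated while $B_{j+1},\ldots,B_n$ are still active. Since $\I$ is $R^\alpha$-irreducible, in particular $R^\alpha_0$-irreducibility gives $v_i(\{1\}) < \alpha$ for every $i$, so no bag can be allocated with only one item, forcing $|B_l| \geq 2$ for every $l < j$. Moreover, within each cycle $B_j$ is visited before $B_{j+1},\ldots,B_n$, so when $B_j$ reaches its $k$-th visit (with $k = |B_j| \geq 3$) each of the still-active bags $B_{j+1},\ldots,B_n$ has been visited exactly $k-1$ times. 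Summing these contributions yields
\[
g \;\geq\; \sum_{l<j}|B_l| \;+\; k \;+\; (n-j)(k-1) \;\geq\; 2(j-1) + k + (n-j)(k-1),
\]
and for $k \geq 3$ the right-hand side is minimized over $j \in [n]$ at $j = n$, giving $2n + k - 2 \geq 2n+1$. Applying Observation~\ref{obs:simple-bounds} with $k=2$ to the $R^\alpha_2$-irreducible ordered instance gives $v_i(2n+1) < \alpha/3$, and by orderedness $v_i(g) \leq v_i(2n+1) < \alpha/3$.

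Combining the two bounds gives $v_i(B_j) < \alpha + \alpha/3 = 4\alpha/3$, as required. The main obstacle is the second step: one must carefully track the round-robin schedule, using Lemma~\ref{lem:order} together with the fact that $B_j$ is visited before $B_{j+1},\ldots,B_n$ in every cycle, to conclude that the still-active bags each carry exactly $k-1$ items at the instant of $B_j$'s allocation. This bookkeeping is what pushes the total item count up to $2n+1$ and allows Observation~\ref{obs:simple-bounds} to bound $v_i(g)$ by $\alpha/3$.
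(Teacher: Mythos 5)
Your proof is correct and follows the same decomposition as the paper's: write $v_i(B_j)=v_i(B_j\setminus\{g\})+v_i(g)$ for the last good $g$, bound the first term by $\alpha$ since the bag was not allocated while $i$ was still unsatisfied, and bound the second by $\alpha/3$ via $g\ge 2n+1$ and Observation~\ref{obs:simple-bounds} with $k=2$. The only difference is that the paper asserts $g\ge 2n+1$ without justification, whereas you supply the round-robin bookkeeping (allocated bags have size at least $2$, still-active bags carry $k-1$ items each) that actually establishes it — a welcome addition, and your accounting checks out.
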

\begin{proof}
    Let $g$ be the last good added to $B_j$. 
    We have $g \geq 2n+1$ and hence $v_i(g) \leq v_i(2n+1) \leq \alpha/3$. The last inequality holds since the instance is $R^\alpha_2$-irreducible (by Observation \ref{obs:simple-bounds}). We have $v_i(B_j \setminus \{g\}) < \alpha$. Otherwise, $B_j \setminus \{g\}$ would have been allocated to $i$, or some other agent who values it at least $\alpha$, previously. Therefore we have, 
    \begin{align*}
        v_i(B_j) &= v_i(B_j \setminus \{g\}) + v_i(g) \leq \alpha + \alpha/3 = 4\alpha/3. \qedhere
    \end{align*}
\end{proof}

\begin{lemma}\label{lem:order}
    Let $\calB$ be the set of bags allocated during the round robin phase of $\bagfillRR(\calI,\alpha)$. Then $\calB = \{B_1, \ldots, B_{|\calB|}\}$.
\end{lemma}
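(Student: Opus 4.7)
The plan is to prove, by induction on iterations of the round robin phase, a stronger joint invariant which at the end of every iteration asserts: \textbf{(A)} the set of allocated bags is exactly $\{B_1,\ldots,B_{a-1}\}$ for the current value of the counter $a$, and \textbf{(B)} for every pair of unallocated bags $B_{j_1},B_{j_2}$ with $j_1<j_2$ and every agent $i$ still in $N$, $v_i(B_{j_1})\geq v_i(B_{j_2})$. Part (A) is exactly the statement of the lemma; part (B) is the auxiliary tool needed to make the induction on (A) go through.

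The crux of (B) is a matching argument that relies on $\I$ being ordered. As long as two bags $B_{j_1},B_{j_2}$ with $j_1<j_2$ remain unallocated, in every round $l$ in which both are filled, $B_{j_1}$ is visited before $B_{j_2}$, because the $j$-pointer advances in increasing order within a round and never skips. Hence if $g_{j_1,l},g_{j_2,l}$ denote the goods added to the two bags in round $l$, then $g_{j_1,l}<g_{j_2,l}$, which by the ordered structure of $\I$ gives $v_i(g_{j_1,l})\geq v_i(g_{j_2,l})$ for every agent $i$. Summing the matched goods round-by-round yields (B); if $B_{j_1}$ is one round ahead of $B_{j_2}$ mid-round, its extra good only helps.

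For the inductive step preserving (A), suppose at iteration $g$ good $g$ is added to $B_j$ and $B_j$ is then allocated to some agent $i$. By the pre-iteration (A) and the rule that $j$ is set to either $j_{\mathrm{prev}}+1$ or to $a$ (upon wrap-around), we always have $j\geq a$. If $j=a$ the invariant is preserved, so assume for contradiction $j>a$. Then we are mid-round, and $B_a$ was filled earlier in the current round $r$ at iteration $g_a=g-(j-a)<g$ (since $j$ advances by one every iteration, independently of any allocations in between). Applying the matching argument to rounds $1,\ldots,r-1$ gives $v_i(B_a\setminus\{g_a\})\geq v_i(B_j)$ pre-iteration $g$, and since $g_a<g$ implies $v_i(g_a)\geq v_i(g)$ in the ordered instance, we conclude $v_i(B_a)\geq v_i(B_j\cup\{g\})\geq \alpha\,\mu_i$. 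But this value of $v_i(B_a)$ was already in place at iteration $g_a$, and $i$ was still in $N$ at that time (agents are only removed over time), so the algorithm would have allocated $B_a$ to $i$ at iteration $g_a$, contradicting $B_a$ being unallocated pre-iteration $g$. Hence $j=a$, and (A) is maintained; (B) is then propagated to the post-iteration state via the same matching, whether or not $B_a$ is actually allocated at iteration $g$. The main obstacle is the careful bookkeeping of rounds and wrap-arounds in tandem with the joint induction on (A) and (B).
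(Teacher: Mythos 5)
Your proof is correct and rests on the same core idea as the paper's: match the goods of a lower-indexed unallocated bag round-by-round against those of the bag being allocated, use the ordered structure to conclude the lower-indexed bag is weakly more valuable, and derive a contradiction with it not having been allocated earlier. The paper phrases this as a single contradiction argument comparing the adjacent bags $B_{j-1}$ and $B_j$, whereas you wrap the same comparison (against $B_a$) in an explicit induction with an auxiliary dominance invariant; the extra bookkeeping is sound but not a genuinely different route.
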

\begin{proof}
    We prove that first $B_1$ gets allocated, then $B_2$ and so on. 
    Towards contradiction, assume for some $j>1$ bag $B_j$ gets allocated to agent $i$ while bag $B_{j-1}$ is not assigned. At the time that $B_j$ gets assigned, we have $|B_{j-1}| = |B_j|$ since both $B_{j-1}$ and $B_j$ were present in all the rounds so far. Let $B_{j-1}=\{g_1, \ldots, g_k\}$ and $B_j = \{g'_1, \ldots, g'_k\}$. Followed from round robin, for all $\ell \in [k]$, we have $g_\ell = g'_\ell-1$ which means $v_i(g_\ell) \geq v_i(g'_\ell)$. Hence $v_i(B_{j-1}) \geq v_i(B_j) \geq 1$ which is a contradiction that $B_{j-1}$ was not assigned to anyone before moving to $B_j$.
\end{proof}

\subsection{\boldmath $6/7$-MMS with Copies}\label{sec:67}
In this section, we present an algorithm which given an ordered instance, outputs a simple $\alpha$-MMS allocation with at most $\floor{n/2}$ distinct copies for any $\alpha \leq 6/7$. Then, using \Cref{prop:ordered_simple}, we prove the following theorem.

\thmnoverTwo*

First we reduce the instance in Algorithm \ref{alg:ReduceWithCopies} using $R^\alpha_0, R^\alpha_1, \ldots,$ and $S^\alpha$. It is not difficult to see that if $k$ agents are removed (i.e. are assigned a bundle) by the end of the algorithm, then in total at most $\floor{k/2}$ duplicated goods are introduced, each good is duplicated at most once, {and all the instances of these goods (original and copy) are allocated to different agents}. Also, since we remove these $k$ agents and all the goods in these $k$ bags, in the future no more copies of them can be used. Since all these rules are $\alpha$-valid reductions, it suffices to give {a simple} $\alpha$-MMS allocation with at most $\floor{(n^*-k)/2}$ distinct copies for the instance obtained by the remaining agents and goods where $n^*$ is the original number of agents. Let $n = n^*-k$.

Then, we run Algorithm \ref{alg:RR} ($\bagfillRR(\calI,\alpha)$) on the reduced instance. We prove that by the end of the round robin phase, at most $\lfloor n/2 \rfloor$ agents remain. In other words, if the bags $B_a, B_{a+1}, \ldots, B_n$ are the remaining bags, then $a \geq \floor{n/2}+1$. Furthermore, we prove for all remaining agents $i$, $v_i(B_j \cup \{n-a+1\}) \geq \mu_i(\calI)$ for all $j \in \{a, \ldots, n\}$. Therefore, to ensure MMS for the remaining agents, it suffices to duplicate goods $1,2, \ldots, n-a+1 \leq \floor{n/2}<a$, add them to $B_a, \ldots, B_n$ respectively and allocate these bags to the remaining agents arbitrarily. This is exactly the duplication phase of $\bagfillRR(\calI,\alpha)$. In this phase, we duplicate at most $\floor{n/2}$ many goods and at most one copy of each good. 

\begin{algorithm}[tb]
\caption{$\mathtt{S-reduce(\calI,\alpha)}$: 
\\ \textbf{Input:} An ordered instance $\GenInstance$ and approximation factor $\alpha$
\\ \textbf{Output:} An ordered instance $\calI'=(N',M',\{v_i\}_{i \in N'})$ with $N' \subseteq N$ and $M' \subseteq M$.}
\label{alg:ReduceWithCopies}\SetAlgoLined
\DontPrintSemicolon
\LinesNumbered

$\calI \leftarrow \mathtt{R-reduce(\calI)}$ \;
\While{$S^\alpha(\calI)$ is applicable}{
    $\calI \leftarrow S^\alpha(\calI)$ \;
    $\calI \leftarrow \mathtt{R-reduce(\calI)}$ \;
}
\Return $\calI = (N,M,\{v_i\}_{i \in N})$ \;
\end{algorithm}

Lemma \ref{lem:s-reduc-alg} follows from Lemma \ref{lem:validS} and Corollary \ref{cor:validK}.
\begin{lemma}\label{lem:s-reduc-alg}
    For an arbitrary ordered instance $\calI$ and $\alpha \leq 1$, let $\calI' = \mathtt{S-reduce(\calI,\alpha)}$. Then for all agents $i \in N'$, $\mu_i(\calI') \geq \mu_i(\calI)$. 
\end{lemma}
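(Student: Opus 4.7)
The plan is a straightforward induction on the number of reduction steps executed by $\mathtt{S-reduce}(\I,\alpha)$. Let $\I^{(0)}, \I^{(1)}, \I^{(2)}, \ldots, \I^{(L)}$ denote the sequence of instances produced, where $\I^{(0)} = \I$ is the input and $\I^{(L)} = \I'$ is the returned instance. Each transition $\I^{(t)} \to \I^{(t+1)}$ is either an application of some rule $R^\alpha_k$ (executed inside a call to $\mathtt{R-reduce}$) or an application of the rule $S^\alpha$. I want to show that for every agent $i$ who survives all $L$ steps, $\mu_i(\I^{(L)}) \geq \mu_i(\I^{(0)})$.

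For the inductive step, it suffices to show that for each single transition $\I^{(t)} \to \I^{(t+1)}$ and every agent $i \in N^{(t+1)}$, we have $\mu_i(\I^{(t+1)}) \geq \mu_i(\I^{(t)})$. This is exactly the second bullet of the definition of an $\alpha$-valid reduction (\Cref{def:alphavalid}). Both reduction rules involved are $\alpha$-valid: for rules of the form $R^\alpha_k$, this is \Cref{cor:validK} (applicable because the precondition maintained by $\mathtt{R-reduce}$ ensures $k < |M|/n$ whenever $R^\alpha_k$ is invoked); for the rule $S^\alpha$, this is \Cref{lem:validS}, whose hypothesis $\alpha \leq 1$ matches the assumption of the present lemma. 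Composing these $L$ inequalities yields $\mu_i(\I^{(L)}) \geq \mu_i(\I^{(0)})$ for every agent $i \in N' = N^{(L)}$.

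One small point to verify along the way is that each intermediate instance remains ordered, since both $R^\alpha_k$ and $S^\alpha$ have ``$\I$ is ordered'' among their preconditions. This is immediate from the ``Guarantees'' clauses of the two rules (both explicitly state that the output instance $\I'$ is ordered), so the precondition is preserved throughout the loop of $\mathtt{S-reduce}$. Similarly, whenever $S^\alpha$ is invoked, we have just finished a call to $\mathtt{R-reduce}$, so the $R^\alpha_1$-irreducibility precondition of $S^\alpha$ holds at that point.

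There is no real obstacle: the lemma is essentially a bookkeeping statement that follows by induction from the $\alpha$-validity of the component reductions, together with the fact that $\mathtt{R-reduce}$ and the outer loop of $\mathtt{S-reduce}$ preserve all the preconditions required to apply the next reduction. The heart of the work was done when establishing \Cref{cor:validK} and \Cref{lem:validS}.
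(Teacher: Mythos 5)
Your proof is correct and takes essentially the same route as the paper, which simply observes that the lemma follows from \Cref{lem:validS} and \Cref{cor:validK}; your induction over the reduction steps and the check that preconditions are preserved just spell out what the paper leaves implicit.
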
 

For the rest of this (sub)-section, let $\alpha \leq 6/7$, $\calI$ be an $R^\alpha$-irreducible and $S^\alpha$-irreducible instance, and $\mu_i(\calI)=1$. Let $n$ be the number of agents right before the round robin phase of $\bagfillRR(\calI,\alpha)$, $\calB = \{B_1, \ldots, B_{a-1}\}$ be the set of bags allocated during this phase, and $n' = n-a+1$ be the number of remaining agents after this phase.

\begin{restatable}{observation}{obsSizeThree}\label{obs:size3}
    For all bags $B_j \in \calB$, $|B_j| \geq 3$.
\end{restatable}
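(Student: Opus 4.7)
The plan is to establish the base case $|B_1| \ge 3$ from the irreducibility hypotheses, then propagate to all other bags via Lemma~\ref{lem:order} and the round robin update pattern.

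For the base case, I would argue by contradiction. If $|B_1| = 1$ at the moment of allocation, then $B_1 = \{1\}$ (the first good processed), and some agent $i$ has $v_i(1) \ge \alpha$. This contradicts $R^\alpha_0$-irreducibility, which via Observation~\ref{obs:simple-bounds} applied with $k = 0$ yields $v_i(1) < \alpha$ for every $i$. If $|B_1| = 2$, then $B_1 = \{1, n+1\}$, since good $n+1$ is the second good added to $B_1$ in the round robin (round~1 fills each bag with one good in index order, and round~2 begins by adding $n+1$ to $B_1$), and some agent $i$ has $v_i(\{1, n+1\}) \ge \alpha$. Because $\I$ is ordered and $R^\alpha_1$-irreducible, the first two preconditions of rule $S^\alpha$ are satisfied; so $S^\alpha$-irreducibility forces the third precondition to fail, giving $v_i(\{1, n+1\}) < \alpha$ for every $i$, a contradiction. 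Hence $|B_1| \ge 3$ at the moment of its allocation.

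For $j \ge 2$, by Lemma~\ref{lem:order}, $B_1$ is allocated strictly before $B_j$. Let $r = |B_1|$ at the moment of its allocation, so $r \ge 3$. Since in the round robin each round updates $B_1$ first, at the moment of $B_1$'s allocation every unallocated bag $B_j$ has size exactly $r-1$. Immediately afterward, the algorithm advances the cyclic index to $2$ and adds the next good to $B_2$, bringing $|B_2|$ to $r$ before it is first checked. More generally, every bag $B_j$ with $j \ge 2$ is only checked for allocation right after a good has been added to it, and at least one such addition necessarily occurs after $B_1$'s allocation; hence $|B_j| \ge r \ge 3$ whenever $B_j$ is first eligible, and in particular when it is ultimately allocated.

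The main --- and essentially only --- subtle point is tracking the round robin order carefully: one must verify that within each pass the bags are processed in increasing index order starting from $a$, so that $B_j$ receives a fresh good between $B_{j-1}$'s update and $B_j$'s next check. This follows immediately from the update rule $\ell \leftarrow \ell + 1$ (with wrap-around to $a$) together with the monotone advance of $a$ as earlier bags are allocated. No deeper ingredient is required beyond Observation~\ref{obs:simple-bounds}, $S^\alpha$-irreducibility, and Lemma~\ref{lem:order}.
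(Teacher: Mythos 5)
Your proof is correct, but it is organized differently from the paper's, which disposes of the observation in one line: any bag that is allocated while it has at most two goods must be (contained in) $\{j,\,n+j\}$, and since the instance is ordered, $v_i(\{j,n+j\}) \le v_i(\{1,n+1\}) < \alpha$ for every agent $i$, where the last inequality is exactly what $S^\alpha$-irreducibility (whose preconditions hold because the instance is ordered and $R^\alpha_1$-irreducible) forbids; hence no check can fire on a bag of size one or two. You instead invoke irreducibility only for $B_1$ --- $v_i(1)<\alpha$ via Observation~\ref{obs:simple-bounds} with $k=0$, and $v_i(\{1,n+1\})<\alpha$ via $S^\alpha$-irreducibility --- and then propagate to $j\ge 2$ dynamically, using Lemma~\ref{lem:order} together with the bookkeeping that, at the moment $B_1$ is allocated with $r\ge 3$ goods, every surviving bag holds $r-1$ goods and receives at least one more before the check at which it is allocated. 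Both mechanisms are sound; the paper's is shorter and purely static, while yours trades the cross-index orderedness comparison $v_i(\{j,n+j\}) \le v_i(\{1,n+1\})$ for an analysis of the round-robin dynamics (and in fact yields the slightly stronger conclusion $|B_j|\ge |B_1|$ for every allocated bag). Two small points to tighten: in the $|B_1|=2$ case, the claim that pass two begins at $B_1$ needs the remark that no bag at all is allocated in pass one --- immediate either from Lemma~\ref{lem:order} (no bag precedes $B_1$) or from $v_i(g)\le v_i(1)<\alpha$ --- and your update-rule variable $\ell$ should be the algorithm's index $j$.
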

\begin{proof}
    Since $\calI$ is $S^\alpha$-irreducible, $v_i(\{j,n+j\}) \leq v_i(\{1,n+1\})<\alpha$ for all $i \in N$. Thus, if $B_j$ was allocated to some agent, $|B_j|>2$. 
\end{proof}

\begin{lemma}\label{lem:many-valuable-bags}
    Let $i$ be a remaining agent after the round robin phase of $\bagfillRR(\calI,\alpha)$ (if any). Then $|\{B \in \calB \mid v_i(B) > 1\}| > n'$.
\end{lemma}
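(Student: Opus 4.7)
The plan is to combine a simple lower bound on the total value $\sum_{l=1}^n v_i(B_l)$ with an upper bound that exposes the quantity of interest, $p := |\{B \in \mathcal{B} : v_i(B) > 1\}|$. For the lower bound, the $n$ bags partition $M$ at the end of the round robin phase, so by additivity and $\mu_i(\I)=1$ we have $\sum_{l} v_i(B_l) = v_i(M) \geq n$.

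For the upper bound, I split the allocated bags $\mathcal{B}$ into those with $v_i(B) > 1$ (exactly $p$ of them) and those with $v_i(B) \leq 1$ (the remaining $a-1-p$). By \Cref{obs:size3}, every bag in $\mathcal{B}$ has at least three goods, so \Cref{lem:upper-bound} bounds each bag in the first group by $4\alpha/3$; each bag in the second group contributes at most $1$. For the remaining bags $B_a,\ldots,B_n$, the fact that $i$ was not allocated during the round robin forces $v_i(B_j) < \alpha$ strictly for every $j \in \{a,\ldots,n\}$, since otherwise $B_j$ would have been assigned to $i$ when it acquired its last good. Since we are in the case $n' \geq 1$, summing these strict inequalities yields $\sum_{j=a}^n v_i(B_j) < n' \alpha$.

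Putting the two sides together gives
\[
n \;\leq\; p \cdot \frac{4\alpha}{3} + (a - 1 - p) + \sum_{j=a}^n v_i(B_j) \;<\; p \cdot \frac{4\alpha}{3} + (a-1-p) + n' \alpha,
\]
and substituting $n' = n - a + 1$ and rearranging reduces this to $n'(1-\alpha) < p\,(4\alpha/3 - 1)$. Specializing to $\alpha = 6/7$, both $1-\alpha$ and $4\alpha/3 - 1$ equal $1/7$, so the inequality becomes exactly $p > n'$. The same argument in fact handles every $\alpha \in (3/4,\,6/7]$, since the ratio $(1-\alpha)/(4\alpha/3-1)$ is at least $1$ on that range; for $\alpha \leq 3/4$ the derived inequality is impossible whenever $n' \geq 1$, so no agents remain and the lemma is vacuous.

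The main obstacle I anticipate is obtaining the strict inequality $p > n'$ rather than $p \geq n'$ at the boundary $\alpha = 6/7$, where the coefficients on the two sides match exactly. This is precisely why I use the strict bound $v_i(B_j) < \alpha$ on every remaining bag and exploit the hypothesis that a remaining agent exists (so $n' \geq 1$, and there is at least one remaining bag contributing to the strict inequality): this forces the upper bound to be strictly less than $p\cdot 4\alpha/3 + (a-1-p) + n'\alpha$, which in turn upgrades $p \geq n'$ to the desired $p > n'$.
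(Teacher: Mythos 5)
Your proof is correct and follows essentially the same route as the paper's: the same three-way decomposition of the bags into $\mathcal{B}^+$, $\mathcal{B}\setminus\mathcal{B}^+$, and the unallocated bags, the same bounds via \Cref{obs:size3} and \Cref{lem:upper-bound}, and the same use of strictness on the remaining bags (guaranteed nonempty since $n'\ge 1$) to get $p>n'$ rather than $p\ge n'$. The only cosmetic difference is that you keep $\alpha$ symbolic and specialize at the end, which also yields the harmless extra observation that the claim holds for all $\alpha\le 6/7$.
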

\begin{proof}
    Let $\mathcal{B}^+ = \{B \in \mathcal{B} \mid v_i(B) > 1\}$. By definition, for all $B \in \mathcal{B} \setminus \mathcal{B}^+$, $v_i(B) \leq 1$. By Observation \ref{obs:size3}, for all $B \in \mathcal{B}$ , $|B| \geq 3$. By Lemma \ref{lem:upper-bound}, for all $B \in \mathcal{B}^+$, $v_i(B) \leq 4\alpha/3$. Also, let $\bar{\mathcal{B}}$ be the set of $n-|\calB| = n'$ remaining bags after the round robin phase. For all $B \in \bar{\mathcal{B}}$, $v_i(B) < \alpha$. Otherwise, $B$ should have been allocated to $i$ (or some other agent who values it at least $\alpha$). Therefore, we have
    \begin{align*}
        n &\leq v_i(M) \\
        &= \sum_{B \in \mathcal{B}^+} v_i(B) + \sum_{B \in \mathcal{B} \setminus \mathcal{B}^+} v_i(B) + \sum_{B \in \bar{\mathcal{B}}} v_i(B) \\
        &< |\mathcal{B}^+| 4\alpha/3 + (|\mathcal{B}| - |\mathcal{B}^+|) + (n - |\mathcal{B}|) \alpha \\
        &\leq (|\mathcal{B}| - |\mathcal{B}^+|) + 6/7 (|\mathcal{B}^+|/3 + n - |\calB| + |\mathcal{B^+}|) \tag{$\alpha \leq 6/7$}
    \end{align*}
    Removing $(|\mathcal{B}| - |\mathcal{B}^+|)$ from both sides of the inequality, we obtain, 
    \begin{align*}
        n - |\calB| + |\calB^+| &< 6/7 (|\calB^+|/3 + n - |\calB| + |\calB^+|) = 2|\calB^+|/7 + 6/7(n - |\calB| + |\calB^+|).
    \end{align*}
    Therefore, $|\calB^+| > (n - |\calB| + |\calB^+|)/2$ which means $|\calB^+| > n - |\calB|=n'$.
\end{proof}

\begin{corollary}[of Lemma \ref{lem:many-valuable-bags}]\label{cor:halfRemaining}
     $n' \leq \floor{n/2}$.        
\end{corollary}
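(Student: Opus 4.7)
The plan is very short: the corollary is essentially a counting consequence of Lemma~\ref{lem:many-valuable-bags} together with the trivial inclusion $\B^+\subseteq \B$.

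First, if $n'=0$ the statement is immediate, so assume $n'\ge 1$ and pick any agent $i$ remaining after the round robin phase. Apply Lemma~\ref{lem:many-valuable-bags} to obtain
\[
|\{B\in\B \mid v_i(B)>1\}| \;>\; n'.
\]
Since this set is a subset of $\B$, and $|\B|=n-n'$ by definition (the $n'$ bags of $\bar{\B}$ are exactly the ones not allocated during the round robin phase), it follows that
\[
n' \;<\; |\B| \;=\; n-n',
\]
i.e., $2n' < n$, which gives $n'\le \floor{n/2}$ (in fact $n' \le \floor{(n-1)/2}$).

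There is essentially no obstacle here; the entire content of the corollary is already packed into Lemma~\ref{lem:many-valuable-bags}. The only thing to be slightly careful about is to note explicitly that $|\B|=n-n'$ (so that $\B^+\subseteq \B$ gives a bound in terms of $n$ and $n'$ rather than in terms of an extra parameter), and to handle the edge case $n'=0$ separately so that the choice of a remaining agent $i$ is well-defined.
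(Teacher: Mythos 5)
Your proof is correct and matches the paper's (implicit) reasoning: the corollary is stated without a separate proof precisely because, as you note, $n' < |\{B\in\B \mid v_i(B)>1\}| \le |\B| = n-n'$ immediately gives $2n' < n$. Your explicit handling of the $n'=0$ case and of the identity $|\B| = n-n'$ is a fine way to spell out what the paper leaves to the reader.
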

\begin{lemma}\label{lem:duplic-suffice}
    For all the remaining agents $i$ after the round robin phase of $\bagfillRR(\calI,\alpha)$, and all the remaining bags $B_j$, we have $v_i(B_j \cup \{n'\}) \geq 1$.
\end{lemma}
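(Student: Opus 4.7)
The plan is to reduce the claim to the worst-case bag $B_n$ and then exhibit a value-dominating shifted matching against a carefully chosen bag in $\mathcal{B}^+$.

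The first step is to show that $v_i(B_j) \ge v_i(B_n)$ for every remaining bag $B_j$. This follows from the round robin order: in each round, $B_j$ (with $j < n$) receives its good strictly before $B_n$, hence at a smaller global index and larger $v_i$ in the ordered instance, and $|B_j| \ge |B_n|$ because in the last, possibly incomplete round goods are still assigned to remaining bags in order of increasing index. Since $n' < a$, good $n'$ is not contained in any remaining bag, so additivity gives $v_i(B_j \cup \{n'\}) \ge v_i(B_n \cup \{n'\})$, reducing the claim to showing $v_i(B_n \cup \{n'\}) \ge 1$.

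By Lemma~\ref{lem:many-valuable-bags}, $|\mathcal{B}^+| \ge n'+1$, and since the set $\{1,\dots,n'-1\}$ has only $n'-1$ elements, pigeonhole yields some $B_k \in \mathcal{B}^+$ with index $k \ge n'$. Let $r_k = |B_k|$; by Observation~\ref{obs:size3}, $r_k \ge 3$. Write $B_k[r]$ and $B_n[r]$ for the $r$-th good added to each bag during the round robin phase. The matching pairs the good $n'$ with $B_k[1] = k$ and pairs $B_n[r]$ with $B_k[r+1]$ for $r = 1, \dots, r_k - 1$; this is well-defined because $B_n$ receives a good in each of rounds $1,\dots, r_k - 1$ (all complete rounds, as $B_k$ is allocated only in round $r_k$), so $|B_n| \ge r_k - 1$. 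The first pair satisfies $v_i(n') \ge v_i(B_k[1])$ since $n' \le k$ and valuations are ordered. For each subsequent pair, $B_n[r]$ is the \emph{last} good of round $r$ (as $n$ is the largest index among bags present), while $B_k[r+1]$ lies in round $r+1$ and is therefore strictly later in the global sequence, so $B_n[r] \le B_k[r+1]$ as indices, yielding $v_i(B_n[r]) \ge v_i(B_k[r+1])$. Summing the matched inequalities and discarding any nonneg contributions from $B_n[r]$ for $r \ge r_k$ gives
\[
v_i(B_n \cup \{n'\}) \;\ge\; v_i(B_k[1]) + \sum_{r=1}^{r_k-1} v_i(B_k[r+1]) \;=\; v_i(B_k) \;>\; 1.
\]

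The main obstacle is choosing the correct matching. A naive diagonal pairing $B_n[r] \leftrightarrow B_k[r]$ reverses the needed inequality, because in each round $B_n$'s good has a \emph{larger} global index than $B_k$'s. The key insight is to use the extra good $n'$ to absorb $B_k[1]$, shifting the rest of the matching between $B_n$ and $B_k$ by one round so that the index comparison flips favorably.
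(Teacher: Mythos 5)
Your proposal is correct and follows essentially the same route as the paper's proof: both select an allocated bag $B_k\in\B^+$ with index $k\ge n'$ (guaranteed by Lemma~\ref{lem:many-valuable-bags}) and then lower-bound the remaining bag's value via the same shifted good-by-good comparison, pairing the duplicated good $n'$ with $B_k$'s first good and the $r$-th good of the remaining bag with the $(r{+}1)$-st good of $B_k$. Your preliminary reduction to the worst bag $B_n$ and the pigeonhole phrasing of the index bound are only cosmetic variations on the paper's argument, which handles an arbitrary remaining bag $B_j$ directly.
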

\begin{proof}
    By Lemma \ref{lem:order}, the bags $B_1, \ldots, B_{n-n'}$ are allocated during the round robin phase. 
    By Lemma \ref{lem:many-valuable-bags}, there exists at least $n'$ many bags $B_{i_1}, \ldots, B_{i_{n'}}$ for $i_1 < i_2 < \ldots < i_{n'} \leq n-n'$ that were allocated to other agents during the round robin phase and $v_i(B_{i_\ell}) > 1$ for all $\ell \in [n']$. Note that $i_{n'} \geq n'$. Since $B_j$ have been assigned a good in all but maybe the last round, we have $|B_j| \geq |B_{i_{n'}}|-1$. Let $B_j = \{g_1, \ldots, g_k\}$ and $B_{i_{n'}} = \{g'_1, \ldots, g'_{k'}\}$ for $j= g_1 < \ldots < g_k$ and $i_{n'} = g'_1 < \ldots < g'_{k'}$. For all $\ell \in [k'-1]$, since $g_\ell$ is allocated before $g'_{\ell+1}$, $g_\ell < g'_{\ell+1}$ and hence $v_i(g_\ell) \geq v_i(g'_{\ell+1})$. We have
    \begin{align*}
        v_i(B_j \cup \{n'\}) &= v_i(B_j) + v_i(n') \geq v_i(B_j) + v_i(i_{n'}) = \sum_{\ell \in [k]} v_i(g_\ell) + v_i(i_{n'}) \\
        &\geq \sum_{\ell \in [k'-1]} v_i(g'_{\ell+1}) + v_i(i_{n'}) = v_i(B_{i_{n'}}) \geq 1. \qedhere
    \end{align*}
\end{proof}

\begin{proof}[Proof of \Cref{thm:nover2copiesMMS}]
    We prove that for ordered instances $\GenInstance$, $$\bagfillRR(\mathtt{S-reduce(\calI,\alpha)},\alpha)$$ returns {a simple} $\alpha$-MMS allocation with at most $\floor{n^*/2}$ many distinct copies. Then, from \Cref{prop:ordered_simple}, the theorem follows.
    
    Let $n$ be the number of agents after applying $\mathtt{S-reduce(\calI,\alpha)}$. Since $S^\alpha$ and $R^\alpha_k$ are valid reductions for all $k < |M|/n^*$, all agents who receive a bag during the $\mathtt{S-reduce(\calI,\alpha)}$ receive at least $\alpha$ fraction of their MMS value and the MMS value of the remaining agents do not decrease after this process. {Without loss of generality, let the MMS value of the remaining agents be $1$ at this point of the algorithm.} All the agents who receive a bag during the round robin phase of $\bagfillRR$ get at least $\alpha$. Let agent $i$ be an agent who receives bag $B_j$ after the round robin phase. Let $g$ be the good that was duplicated and added to $B_j$ right before allocating it to agent $i$. $g \leq n'$ and thus $v_i(g) \geq v_i(n')$. By Lemma \ref{lem:duplic-suffice}, $v_i(B_j \cup \{g\}) \geq \alpha$. Therefore, the final allocation is $\alpha$-MMS. 
    
    We now prove that the generated MMS allocation is simple, namely that there are $t \le \floor{n^*/2}$ distinct copies,
    and that the $2t$ instances, composed of these copies and their corresponding original goods, are allocated to $2t$ distinct agents.
    Let $k=n^*-n$ be the number of agents who receive a bag in $\mathtt{S-reduce(\calI,\alpha)}$. Only when applying $S^\alpha$, new duplications are introduced and one duplication per two agents that are removed. So during $\mathtt{S-reduce(\calI,\alpha)}$, at most $\floor{k/2}$ distinct copies are used. Each instance of these good is allocated to a different agent. Furthermore, all the instances of the goods that are allocated during $S^\alpha$ along with their owners, are removed from the instance. Therefore, these goods cannot be copied more than once and their owners cannot get any other goods later on. In the duplication phase of $\bagfillRR$, we duplicate items $1,2, \ldots, n'$ once and allocate the original and the duplications to $2n'$ different agents. By Corollary \ref{cor:halfRemaining}, $n' \leq \floor{n/2}$. Therefore, in total at most $\floor{n^*/2}$ distinct copies are used and the allocation is simple. 
\end{proof}

\subsection{\boldmath $4/5$-MMS with Copies}\label{sec:45}
In this section, we present an algorithm which, given an ordered instance, outputs {a simple} $\alpha$-MMS allocation with at most $\floor{n/3}$ distinct copies for any $\alpha \leq 4/5$. Then, using \Cref{prop:ordered_simple}, we prove the following theorem in Appendix \ref{sec:approx-additive_proofs}.

\thmnOverThree*

Similar to Section \ref{sec:67}, first we reduce the instance but this time by running \Cref{alg:ReduceWith3Copies} which uses  $R^\alpha_0, R^\alpha_1, \ldots,$ and $T^\alpha$. 
Analogous to \Cref{alg:ReduceWithCopies}, it is not difficult to see that if $k$ agents are removed (i.e. are assigned a bundle) by the end of \Cref{alg:ReduceWith3Copies}, then (i) at most $\floor{k/3}$ duplicated goods are introduced, (ii) each good is duplicated at most once, and (iii) all the duplicated goods and their corresponding original instances are allocated to distinct agents. 
Also, since we remove these $k$ agents and all the goods in these $k$ bags, in the future no more copies of these goods can be used and no other goods can be allocated to these agents. Since all these rules are $\alpha$-valid reductions, it suffices to give a simple $\alpha$-MMS allocation with at most $\floor{(n^*-k)/3}$ distinct copies for the instance obtained by the remaining agents and goods where $n^*$ is the original number of agents. Let $n = n^*-k$.

Then, we run Algorithm \ref{alg:RR} ($\bagfillRR(\calI,\alpha)$) on the reduced instance. Following a similar idea, but a more involved calculation, we prove that we end up with a $4/5$-MMS allocation using at most $\floor{n/3}$ distinct copies.

\begin{algorithm}[tb]
\caption{$\mathtt{T-reduce(\calI,\alpha)}$: 
\\ \textbf{Input:} An ordered instance $\GenInstance$ and approximation factor $\alpha$
\\ \textbf{Output:} An ordered instance $\calI'=(N',M',\{v_i\}_{i \in N'})$ with $N' \subseteq N$ and $M' \subseteq M$.}
\label{alg:ReduceWith3Copies}\SetAlgoLined
\DontPrintSemicolon
\LinesNumbered

$\calI \leftarrow \mathtt{R-reduce(\calI)}$ \;
\While{$T^\alpha(\calI)$ is applicable}{
    $\calI \leftarrow T^\alpha(\calI)$ \;
    $\calI \leftarrow \mathtt{R-reduce(\calI)}$ \;
}
\Return $\calI = (N,M,\{v_i\}_{i \in N})$ \;
\end{algorithm}

\begin{lemma}
    For an arbitrary ordered instance $\calI$ and $\alpha \leq 4/5$, let $\calI' = \mathtt{T-reduce(\calI,\alpha)}$. Then for all agents $i \in N'$, $\mu_i(\calI') \geq \mu_i(\calI)$. 
\end{lemma} 
\begin{proof}
    The lemma follows from Lemma \ref{lem:validT} and Corollary \ref{cor:validK}.
\end{proof}

For the rest of this (sub)-section, let $\alpha \leq 4/5$, $\calI$ be an $R^\alpha$-irreducible and $T^\alpha$-irreducible instance, and $\mu_i(\calI)=1$. Let $n$ be the number of agents right before the round robin phase of $\bagfillRR(\calI,\alpha)$, $\calB = \{B_1, \ldots, B_{a-1}\}$ be the set of bags allocated during this phase, and $n' = n-a+1$ be the number of remaining agents after this phase.

\begin{lemma}\label{obs:size3-2}
    For all bags $B_j \in \calB$ with $j > 2$, $|B_j| \geq 3$.
\end{lemma}
\begin{proof}
    Towards contradiction, assume for some $j>2$, $|B_j| \leq 2$ and $B_j$ is allocated to agent $i$. Note that since $\calI$ is $R_0^\alpha$-irreducible, $v_a(1) < \alpha$ for all agents $a$ and hence $|B| \geq 2$ for all $B \in \calB$. So for all $\ell \in [n]$, we have $\{\ell,n+\ell\} \in B_\ell$ and $B_j = \{j,n+j\}$. Note that for all $j'<j$, $v_i(\{j',n+j'\}) > v_i(\{j,n+j\}) \geq \alpha$. Hence, for all $j'<j$, $|B_{j'}|=2$. Let $a_1,a_2,a_3$ be the three agents who received $B_1, B_2, B_3$ respectively. Note that the precondition of Rule $T^\alpha$ is met for these agents which is a contradiction with $\calI$ being $T^\alpha$-irreducible.    
\end{proof}
\begin{observation}\label{obs:1-2}
    For $n\geq2$, $\{B_1,B_2\} \subseteq \calB$ and $v_i(B_j) \leq 3\alpha/2$ for all $i \in N$ and $j \in [2]$.
\end{observation}
\begin{proof}
    Since the instance is $R^\alpha_0$-irreducible, $v_i(2) \leq v_i(1) < \alpha$ {for all $i \in N$}. Therefore, $|B_j| \geq 2$. Let $g$ be the last good added to $B_j$. Since $g \geq n+1$, $v_i(g) \leq v_i(n+1) < \alpha/2$. We have 
    \begin{align*}
        v_i(B_j) &= v_i(B_j \setminus \{g\}) + v_i(g) < \alpha + \alpha/2 = 3\alpha/2. 
    \end{align*}
    Now assume $B_2 \notin \calB$. {Let $i$ be an agent who is not allocated any bundle in $\calB$.} Note that for all $B \notin \calB$, $v_i(B) < \alpha$. Otherwise, $B$ should have been allocated during the round robin phase. Therefore
    \begin{align*}
        n &\leq v_i(M) \\
        &= \sum_{j}v_i(B_j) \\
        &< 3\alpha/2 + (n-1)\alpha \\
        &\leq (4/5n+2/5) \cdot, \tag{$\alpha \leq 4/5$} 
    \end{align*}
    which is a contradiction with $n \geq 2$.
\end{proof}
\begin{lemma}\label{lem:many-valuable-bags-2}
    Let $i$ be a remaining agent after the round robin phase of $\bagfillRR(\calI,\alpha)$ (if any). Then $|\{B \in \calB \mid v_i(B) > \alpha\}| \geq 2n/3$ or $n\leq 5$.
\end{lemma}
\begin{proof}
    Assume $n \geq 6$. Let $\calB^+ = \{B \in \calB \mid v_i(B) > \alpha\}$. By definition, for all $B \in \calB \setminus \calB^+$, $v_i(B) \leq \alpha$. Let $\bar{\calB}$ be the set of $n-|\calB|=n'$ many remaining bags after the round robin phase. For all $B \in \bar{\calB}$, $v_i(B) < \alpha$. Otherwise, $B$ should have been allocated to $i$ (or some other agent who values it at least $\alpha$ fraction of their MMS value). We consider two cases:
    \paragraph{\boldmath Case 1: $v_i(B_2) \leq 4\alpha/3$.}
    For all $B \in \calB \setminus \{B_1,B_2\}$, Lemma \ref{obs:size3-2} implies $|B| \geq 3$ and Lemma \ref{lem:upper-bound} implies $v_i(B) \leq 4\alpha/3$. Also by Observation \ref{obs:1-2}, $v_i(B_1) \leq 3\alpha/2$ and we have $v_i(B_2) \leq 4\alpha/3$. Therefore
    \begin{align*}
        \sum_{B \in \calB^+} v_i(B) &\leq 4\alpha/3(|\calB^+|-1) + 3\alpha/2 \\
        &\leq ( 16|\calB^+|/15 + 2/15). \tag{$\alpha \leq 4/5$}
    \end{align*}
    
    We have
    \begin{align*}
        n &\leq v_i(M) \\
        &= \sum_{B \in \calB^+} v_i(B) + \sum_{B \in \calB \setminus \calB^+} v_i(B) + \sum_{B \in \bar{\calB}} v_i(B) \\
        &\leq (16|\calB^+|/15 + 2/15) + (|\calB| - |\calB^+|) \alpha + (n - |\calB|) \alpha \\
        &\leq (2/15 + 4n/5 + 4|\calB^+|/15). \tag{$\alpha \leq 4/5$}
    \end{align*}
    We obtain
    \begin{align*}
        n/5 &\leq 4|\calB^+|/15 + 2/15.
    \end{align*}
    Therefore, $|\calB^+| \geq (3n-2)/4 \geq 2n/3$ {(for $n \geq 6$)}.
    \paragraph{\boldmath Case 2: $v_i(B_2) > 4\alpha/3$.} If $|B_2| \geq 3$, then by Lemma \ref{lem:upper-bound}, $v_i(B_2) \leq 4\alpha/3$. Also if $|B_2|=1$, $v_i(B_2) < \alpha$ since the instance is $R^\alpha_0$ irreducible. Therefore, $|B_2|=2$. It implies $|B_1|=2$ and we have $B_1 = \{1,n+1\}$ and $B_2 = \{2,n+2\}$. Let $a$ and $b$ be the agents who received $B_1$ and $B_2$. If $v_i(\{1,n+3\}) \geq \alpha$, then the instance is not $T^\alpha$-irreducible since $a,b,i$ satisfy its precondition. Therefore, $v_i(1)+v_i(n+3) < \alpha$. 
    \begin{align*}
        v_i(1) + v_i(n+2) &\geq v_i(2) + v_i(n+2) = v_i(B_2) > 4\alpha/3.
    \end{align*}
    Therefore, 
    \begin{align*}
        v_i(n+3) &< v_i(n+2) - \alpha/3 \\
        &< (\alpha/2 - \alpha/3) \tag{$v_i(n+2) \leq v_i(n+1) < \alpha/2$ due to Observation \ref{obs:simple-bounds}}\\
        &= \alpha/6.
    \end{align*}
    Now fix an index $j>2$ and let $g$ be the last good added to $B_j$. We have 
    \begin{align*}
        v_i(B_j) &= v_i(B_j \setminus \{g\}) + v_i(g) \\
        &< \alpha + \alpha/6 \\
        &= 7\alpha/6.
    \end{align*}
    Therefore,
    \begin{align*}
        n &\leq v_i(M) \\
        &= v_i(B_1) + v_i(B_2) + \sum_{B \in \calB \setminus \{B_1,B_2\}} v_i(B) + \sum_{B \in \bar{\calB}} v_i(B) \\
        &< 3\alpha + (|\calB| - 2) 7\alpha/6 + (n - |\calB|) \alpha \\
        &\leq (4n/5 + (2|\calB|+8)/15) \tag{$\alpha \leq 4/5$} \\
        &\leq (14n/15 + 4/15). \tag{$|\calB| \leq n-2$}
    \end{align*}
    We obtain that $n < 4$ which is a contradiction with $n \geq 6$.
\end{proof}
    
\begin{corollary}[of Lemma \ref{lem:many-valuable-bags-2}]\label{cor:halfRemaining}
     $n' \leq \floor{n/3}$ for $n \geq 6$.        
\end{corollary}

\begin{lemma}\label{lem:duplic-suffice-2}
    If $n \geq 6$, for all the remaining agents $i$ after the round robin phase of $\bagfillRR(\calI,\alpha)$, and all the remaining bags $B_j$, we have $v_i(B_j \cup \{n'\}) \geq \alpha$.
\end{lemma}
\begin{proof}
    By Lemma \ref{lem:order}, the bags $B_1, \ldots, B_{n-n'}$ are allocated during the round robin phase. 
    By Lemma \ref{lem:many-valuable-bags-2}, there exists at least $2n/3 \geq n'$ many bags $B_{i_1}, \ldots, B_{i_{n'}}$ for $i_1 < i_2 < \ldots < i_{n'} \leq n-n'$ that were allocated to other agents during the round robin phase and $v_i(B_{i_\ell}) > \alpha$ for all $\ell \in [n']$. Note that $i_{n'} \geq n'$. Since $B_j$ have been assigned a good in all but maybe the last round, we have $|B_j| \geq |B_{i_{n'}}|-1$. Let $B_j = \{g_1, \ldots, g_k\}$ and $B_{i_{n'}} = \{g'_1, \ldots, g'_{k'}\}$ for $j= g_1 < \ldots < g_k$ and $i_{n'} = g'_1 < \ldots < g'_{k'}$. For all $\ell \in [k'-1]$, since $g_\ell$ is allocated before $g'_{\ell+1}$, $g_\ell < g'_{\ell+1}$ and hence $v_i(g_\ell) \geq v_i(g'_{\ell+1})$. We have
    \begin{align*}
        v_i(B_j \cup \{n'\}) &= v_i(B_j) + v_i(n') \\
        &\geq v_i(B_j) + v_i(i_{n'}) \\
        &= \sum_{\ell \in [k]} v_i(g_\ell) + v_i(i_{n'}) \\
        &\geq \sum_{\ell \in [k'-1]} v_i(g'_{\ell+1}) + v_i(i_{n'}) \\
        &= v_i(B_{i_{n'}}) > \alpha.
    \end{align*}
\end{proof}
 
\section{Discussion}\label{sec:discussion}

Herein, we show that exact MMS fairness can be guaranteed in settings that permit post facto adjustments to the supply of indivisible items. 
In particular, we obtain tight bounds for XOS valuations and beyond, and, in the context of $\alpha$-MMS, improve upon the state of the art using few copies.

The framework of resource augmentation provides a meaningful route for bypassing barriers in discrete fair division. In particular, it would be interesting to develop truthful mechanisms for MMS that leverage bounded supply adjustment. Another relevant direction for future work is to consider the framework for achieving envy-freeness up to any good (EFX). Such an approach would complement prior works on EFX with charity (see, e.g., \cite{chaudhury2021little}), which obtain existential guarantees for EFX with disposal, rather than duplication, of goods.

One may consider several metrics for measuring the augmentation in an allocation with copies. In this work we consider the number of copies made of each good and the total numbers of copies made ($k$ and $t$, respectively, in $(t,k)$-allocation).
Another possible measure is the total number of \emph{allocated} goods (including copies), required to achieve MMS. This notion is relevant for settings where unallocated goods are not manufactured (e.g., course allocation), or are refunded (e.g., medical supplies).
Budish~\cite{budish2011combinatorial} considered the $\ell_2$ norm of the $m$-dimensional vector whose components correspond to the number of copies made of the $m$ goods, respectively. Clearly one can consider many other measures which suggest a host of new problems.

\bibliographystyle{alpha}
\bibliography{refs.bib}

\appendix
\section{Proof of \Cref{thm:XOS-ordinal-impossible}}\label{appendix:ordinal-XOS}
Towards contradiction, assume there exists $f: \mathbb{N} \rightarrow \mathbb{N}$ such that $1$-out-of-$f(n)$ MMS allocations exist for all instances with $n$ agents. Let $n'=f(n)$ and consider the $n'$-dimensional cube and its corresponding XOS valuations as in \Cref{thm:lowermontone}. We call this instance $\mathcal{I} = \langle [n'], M, \{v_i\}_{i \in [n']} \rangle$. Let $N \subset [n']$ be the subset of the first $n$ agents in this example; $N = [n]$. Let $\GenInstance$. For all $i \in N$, $\mu^{f(n)}_i(M) = \mu_i(\mathcal{I}')$. Therefore, there exists an allocation $A=(A_1, \ldots, A_n)$ such that for all $i \in [n]$ 
$$v_i(A_i) \geq \mu^{f(n)}_i(M) = \mu_i(\mathcal{I}').$$
Hence, $P^1_i \subseteq A_1$ and $P^2_j \subseteq A_2$ for some $i,j \in [n']$, and $A_1 \cap A_2 = \emptyset$. However, $P^1_i \cap P^2_j = \{g_{d_1,\dots,d_n} \mid \; d_1 = i, d_2 = j \} \neq \emptyset$ which is a contradiction.

\section{Missing Proofs of \Cref{sec:exactMMS}}\label{apx:goods_proofs}
\lemBoundIndiv*
\begin{proof}
    For any agent $i$, let $P^i= (P^i_1,\dots,P^i_n)$ be his MMS partition.
    Consider the MMS allocation $A=(A_1,\dots,A_n)$ where bundle $A_i \sim \mathrm{Unif}(\{P^i_1,\dots, P^i_n\})$, independently for each agent $i$. That is, $i$'s bundle, $A_i$, is one of his MMS subsets chosen uniformly at random, independent of other choices.
    We use the balls-and-bins analogy \cite{raab1998balls}, and consider the process of constructing $A$ as throwing $n$ balls (the agents) into $n$ bins (the index in an MMS partition). We are interested in the  number of balls per bin, so we ignore the fact that the balls are distinct.

    Fix some good $g \in M$ and assume without loss that $g \in P^i_1$ for any $i \in N$. 
    The event in which $g$ is copied at least $k-1$ times for the allocation $A=(A_1,\dots,A_n)$, is exactly the event in which the bin corresponding to index $1$ has at least $k$ balls.
    A simple analysis shows that, 
    $$
    \Pr[g \text{ was copied at least }k-1 \text{ times}] \le \binom{n}{k}\left(\frac{1}{n}\right)^k \le  \left(\frac{e}{k}\right)^k,
    $$
    where the last transition follows from Stirling's approximation.

    If we plug in $k = \frac{3 \ln m}{\ln \ln m}$, we get
    \begin{align*}
        \left(\frac{e}{k}\right)^k 
        &=
        \left(\frac{e \ln \ln m}{3 \ln m}\right)^{\frac{3 \ln m}{\ln \ln m}} \\
        &\le
        \exp\left(\frac{3 \ln m}{\ln \ln m}\ln\left(\frac{e \ln \ln m}{3 \ln m}\right) \right)\\
        &\le
        \exp\left(\frac{3 \ln m}{\ln \ln m}\left(\ln \ln \ln m - \ln \ln m\right) \right)\\
        &=
        \exp\left(-3 \ln m + \frac{3 \ln m \ln \ln \ln m}{\ln \ln m} \right).
    \end{align*}

For large enough $m$ we have $\left(\frac{e}{k}\right)^k < \exp(-2 \ln m) = \frac{1}{m^2}$.

Applying the union bound we have,
\begin{align*}
    \Pr\left[\text{there exists a good with more than } \frac{3 \ln m}{\ln \ln m} \text{ copies}\right] < m \cdot \frac{1}{m^2} = \frac{1}{m},
\end{align*}
which completes the proof.
\end{proof}

\lemBagFillCopy*
\begin{proof}
Every time an agent is allocated, except the last agent, we duplicate the last item that was put in his bag and put the duplicate as the first item in the next bag. Since the value of every item is less than $1$ for all agents, every allocated bundle has at least two items. Thus, we duplicate a different item each time, and in total, $|N|-1$ items. We now show that upon termination, every agent gets at least a value of 1.

We show that for every $t$, for every $i\in N_t$, $v_i(M_t)\ge |N|-t$.  We prove by induction on $t$. Obviously, it's true for $t=0$. Assume this is true for $t-1\in [|N|-1]$, and consider the iteration when agent $i_{t-1}$ gets assigned bundle $A_{i_{t-1}}$. Consider some agent $i\in N_t$. By the inductive assumption, we have that for every $i'\in N_{t-1}$, $v_{i'}(M_{t-1})\ge |N|-t+1.$ As $N_t\subset N_{t-1}$, we have that 
\begin{eqnarray}
    v_i(M_{t-1})\ge |N|-t+1. \label{eq:assumption}    
\end{eqnarray}   
    
Now consider the bundle $B_{t-1}$ just before some item $j$ was added to it, causing agent $i_{t-1}$'s value to rise above 1. Since before adding $j$, the value each agent in $N_{t-1}$ assigned to $B_{t-1}$ was smaller than 1, we also have 
\begin{eqnarray}
    v_i(B_{t-1})< 1. \label{eq:step}    
\end{eqnarray}

This implies that
\begin{eqnarray*}
    v_i(M_{t})\ =\ v_i(M_{t-1}\setminus B_{t-1})\ =\ v_i(M_{t-1})- v_i(B_{t-1}) > |N|-t,
\end{eqnarray*}
where the last inequality follows Equations~\eqref{eq:assumption},~\eqref{eq:step}.
\end{proof}

\subsection{Example for Algorithm $\bagfillcopy$}\label{sec:examples}
\begin{figure}[tbh]
\begin{tabular}{|c|c|c|c|c|}
\hline
 & Agent 1 & Agent 2 & Agent 3 & Agent 4 \\ \hline
$g_1$ & 0.2     & 0.3     & 0.8     & 0.1     \\ 
$g_2$ & 0.2     & 0.5     & 0.3      & 0.1     \\ 
$g_3$ & 0.7     & 0.3     & 0.2     & 0.8     \\ 
$g_4$ & 0.3     & 0.2     & 0.3     & 0.2     \\ 
$g_5$ & 0.3     & 0.8     & 0.1     & 0.2     \\ 
$g_6$ & 0.5     & 0.5     & 0.1     & 0.7     \\ 
$g_7$ & 0.1     & 0.1     & 0.5     & 0.1     \\ 
$g_8$ & 0.1     & 0.1     & 0.7     & 0.2     \\ 
$g_9$ & 0.8     & 0.3       & 0.3     & 0.8     \\ 
$g_{10}$ & 0.2     & 0.8     & 0.1     & 0.3     \\ 
$g_{11}$ & 0.2     & 0.2     & 0.9     & 0.3     \\ 
$g_{12}$ & 0.7     & 0.3     & 0.1     & 0.5     \\ \hline
\end{tabular}
\caption{Running example for $\bagfillcopy$. There is an MMS allocation without copies for this instance. All MMS values are one. For agent 1 we have $v_1(\{g_1, g_2, g_3\})=1.1$, $v_1(\{g_4,g_5,g_6\})=1.1$, $v_1(\{g_7, g_8, g_9\})=1$ and $v_1(\{g_{10}, g_{11}, g_{12} \})=1.1$. For agent 2 we have $v_2(\{g_1, g_2, g_3\})=1.1$, $v_2(\{g_4,g_5\})=1$, $v_2(\{g_6.g_7, g_8, g_9\})=1$ and $v_2(\{g_{10}, g_{11}, g_{12} \})=1.1$. By construction the MMS partitions consist of contiguous runs of $g_1,\ldots,g_{12}$.}
\end{figure}

Consider the case when we add items in order $g_4, g_7, g_9, g_1, g_2, g_{10}, g_{12}, g_6, g_5, g_3,g_8, g_{11}$.
\begin{itemize}
\item For $B=\{g_4, g_7\}$, $v_1(B)=0.4$, $v_2(B)=0.3$, $v_3(B)=0.8$, and $v_4(B)=0.3$, no agent has attained the MMS value for this set $B$.
\item For $B=\{g_4, g_7, g_9\}$, $v_1(B)=1.2$, $v_2(B)=0.6$, $v_3(B)=1.1$, and $v_4(B)=1.1$, arbitrarily choose agent 1 and assign $B$ to agent 1, set $B$ to be a copy of the last good added: $B=\{g_9\}$.
\item For $B=\{g_9, g_1, g_2\}$, we have $v_2(B)=1.1$, $v_3(B)=1.4$, $v_4(B)=1$, arbitrarily choose agent 2 and assign $B$ to agent 2, set $B=\{g_2\}$.
\item For $B=\{g_2, g_{10}, g_{12}, g_6\}$, we have $v_3(B)=0.6$, $v_4(B)=1.6$ and assign $B$ to agent 4, set $B=g_6$.
\item For $B=\{g_6, g_5, g_3, g_8,g_{11}\}$ we have that $v_3(B)=2$ and assign $B$ to agent 3. 
\end{itemize}
Note that not all goods have been allocated, we could add the remaining goods to any of the agents, arbitrarily. 

The execution of $\bagfillcopy$ depends on the order in which the items are added.

\section{Missing Proofs of \Cref{approxMMS}}\label{sec:approx-additive_proofs}
\obsSimpleBounds*
\begin{proof}
    By irreducibility of $R^\alpha_k$ for $K = \{k(n-1)+1, k(n-1)+2, \ldots, nk, nk+1\}$, we have $v_i(K) < \alpha \cdot \mu_i$ for all $i \in N$. Since $v_i(nk+1) \leq v_i(nk) \leq \ldots \leq v_i(k(n-1)+1)$, we get $v_i(nk+1) \leq v_i(K)/(k+1) \leq \alpha \cdot \mu_i/(k+1)$.
\end{proof}

\reducK*
\begin{proof}
    Let $P=(P_1, P_2, \ldots, P_n)$ be an MMS partition of agent $i$ for the original instance. By the pigeonhole principle, there exists $j \in [n]$ such that $|P_j \cap [kn+1]| > k$. Without loss of generality, let us assume $|P_n \cap [kn+1]| > k$ and $g_1, g_2, \ldots g_{k+1}$ are $k+1$ distinct goods in $P_n \cap [kn+1]$ such that $g_1 \leq g_2 \leq \ldots \leq g_{k+1}$. For $j \in [k+1]$, let $kn-k+j \in P_{a_j}$. Now for all $j \in [k+1]$, swap the goods $kn-k+j$ and $g_j$; i.e., iteratively do the following:
    \begin{itemize}
        \item for all $j \in [k+1]$: $P_{a_j} \leftarrow P_{a_j} \setminus \{kn-k+j\} \cup \{g_j\}$ and $P_n \leftarrow P_n \setminus \{g_j\} \cup \{kn-k+j\}$.
    \end{itemize}
    Let $P'=(P'_1, P'_2, \ldots, P'_n)$ be the final partition. Since $g_1, \ldots, g_{k+1}$ are $k+1$ goods in $[kn+1]$ in increasing order of index, $g_j \leq kn-k+j$. Thus, $v_i(g_j) \geq v_i(kn-k+j)$ and  after each of these swaps, the value of $P_{a_j}$ cannot decrease. Therefore, for all $j \in [n-1]$, $v_i(P'_j) \geq v_i(P_j) \geq \mu^n_i(M)$ and $(P'_1, \ldots, P'_{n-1})$ is a partition of a subset of $M \setminus K$ into $n-1$ bundles of value at least $\mu^n_i(M)$ for $i$. The lemma follows.
\end{proof}

\reduceTwoGoods*
\begin{proof}
    To prove the lemma, it suffices to find a partition of (a subset of) $M \setminus \{g_1,g_2\}$ into $n-1$ bundles, such that the value of agent $i$ for each bundle is at least $\mu^n_i(M)$. We call such a partition a certificate. Let $P=(P_1, \ldots, P_n)$ be an MMS partition of agent $i$ in the original instance $\calI$. Without loss of generality, assume $\{g_1,g_2\} \subseteq P_1 \cup P_2$. If $\{g_1,g_2\} \subseteq P_1$, then $(P_2, \ldots, P_{n-1}, P_n)$ is a certificate. Otherwise, let $g_i \in P_i$ for $i \in [2]$. We have 
    \begin{align*}
        v_i(P_1 \cup P_2 \setminus \{g_1,g_2\}) 
        &\geq 2\mu^n_i(M) - \mu^n_i(M) \geq \mu^n_i(M).
    \end{align*}
    Hence $(P_1 \cup P_2 \setminus \{g_1,g_2\}, P_3, \ldots, P_n)$ is a certificate.
\end{proof}

\newReduction*
\begin{proof}
    To prove the lemma, it suffices to partition (a subset of) $M \setminus \{g_1,g_2,g_3\}$ into $n-2$ bundles, each of value at least $\mu^n_i(M)$ to $i$. We call such a partition a certificate. Let $P=(P_1, \ldots, P_n)$ be an MMS partition of agent $i$ in the original instance $\calI$. Without loss of generality, assume $\{g_1,g_2,g_3\} \subseteq P_1 \cup \ldots \cup P_j$ for $j \leq 3$. If $\{g_1,g_2,g_3\} \subseteq P_1 \cup P_2$, then $(P_3, \ldots, P_{n-1}, P_n)$ is a certificate. Otherwise, let $g_i \in P_i$ for $i \in [3]$. We have 
    \begin{align*}
        v_i(P_2 \cup P_3 \setminus \{g_2,g_3\}) 
        =
        v_i(P_2) + v_i(P_3) - (v_i(g_2) + v_i(g_3)) 
        \geq 2\mu^n_i(M) - \mu^n_i(M) \geq \mu^n_i(M).
    \end{align*}
    Hence $(P_2 \cup P_3 \setminus \{g_2,g_3\}, P_4, \ldots, P_n)$ is a certificate.
\end{proof}

\lemValidS*
\begin{proof}
    For all agents $i$ who is assigned a bundle $B$ in $S^\alpha$, we have $v_i(B) \geq \alpha \cdot \mu_i(\calI)$. 
    First consider the case that $S^\alpha$ removes two agents $i \neq j$. Since the reduction rule $R^\alpha_1$ is not applicable, for all the remaining agents $a$, we have $v_a(\{n,n+1\}) < \alpha \cdot \mu^n_a(M) \leq \mu^n_a(M)$. Hence, by lemma \ref{lem:new-reduction}, $\mu^{n-2}_a(M \setminus \{1,n,n+1\}) \geq \mu^n_a(M)$. 

    Now consider the case that $S^\alpha$ removes only one agent $i$. Let $a \neq i$ be a remaining agent after allocating $\{1,n+1\}$ to $i$. We have $v_a(\{1,n+1\}) < \alpha \cdot \mu^n_i(M) \leq \mu^n_i(M)$, otherwise, $S^\alpha$ would allocate $\{1,n\}$ to $a$ (or some agent other than $i$). 
    By Lemma \ref{lem:reduce-2goods}, $\mu^{n-1}_a(M \setminus \{1,n+1\}) \geq \mu^n_a(M)$.     
\end{proof}

\thmnOverThree*
\begin{proof}
    We prove for ordered instances $\GenInstance$, $\bagfillRR(\mathtt{T-reduce(\calI,\alpha)},\alpha)$ returns {a simple} $\alpha$-MMS allocation with at most $\floor{n^*/3}$ many distinct copies. Then, from \Cref{prop:ordered_simple}, the theorem follows.

    Let $n$ be the number of agents after applying $\mathtt{T-reduce(\calI,\alpha)}$. Since $T^\alpha$ and $R^\alpha_k$ are valid reductions for all $k < |M|/n^*$, all agents who receive a bag during the $\mathtt{T-reduce(\calI,\alpha)}$ receive at least $\alpha$ fraction of their MMS value and the MMS value of the remaining agents do not decrease after this process. {Without loss of generality, let the MMS value of the remaining agents be $1$ at this point of the algorithm.} All the agents who receive a bag during the round robin phase of $\bagfillRR$ get at least $\alpha$ fraction of their MMS. Let agent $i$ be an agent who receives bag $B_j$ after the round robin phase. Let $g$ be the good that was duplicated and added to $B_j$ right before allocating it to agent $i$. Since $g \leq n'$, $v_i(g) \geq v_i(n')$. By Lemma \ref{lem:duplic-suffice-2}, {$v_i(B_j \cup \{g\}) \geq \alpha$}. Therefore, the final allocation is $\alpha$-MMS.

    We now give an upper bound on the total number of copies and prove that the generated MMS allocation is simple, namely that there are $t \le \floor{n^*/3}$ distinct copies,
    and that the $2t$ instances, composed of these copies and their corresponding original goods, are allocated to $2t$ distinct agents.
    Let $k=n^*-n$ be the number of agents who receive a bag in $\mathtt{T-reduce(\calI,\alpha)}$. Only when applying $T^\alpha$, new duplications are introduced and one duplication per three agents that are removed. So during $\mathtt{T-reduce(\calI,\alpha)}$, at most $\floor{k/3}$ distinct copies are used. Each instance of these good is allocated to a different agent. Furthermore, all the instances of the goods that are allocated during $T^\alpha$ along with their owners, are removed from the instance. Therefore, these goods cannot be copied more than once and their owners cannot get any other goods later on. In the duplication phase of $\bagfillRR$, we duplicate items $1,2, \ldots, n'$ once and allocate the original and the duplications to $2n'$ different agents. By Corollary \ref{cor:halfRemaining}, $n' \leq \floor{n/3}$. Therefore, in total at most $\floor{n^*/3}$ distinct copies are used and the allocation is simple. 
\end{proof}

\subsection{Picking Sequence Fails for Allocation with Copies}\label{sec:picking_equence_fails}
Consider three agent $\{a,b,c\}$ with additive valuations over three goods $x,y,z$. Let the ordinal valuations be as follows:
\begin{eqnarray*}
    v_a :& x \succ z \succ y \\
    v_b :& x \succ z \succ y \\
    v_c :& x \succ y \succ z
\end{eqnarray*}
Consider the ordered counterpart (see \Cref{def:ordered_inst}) of this instance and the allocation $A'$ below, in which every good is copied once.
\begin{eqnarray*}
    A'_a = \{1,2\}, \qquad
    A'_b = \{1,3\}, \qquad
    A'_c = \{2,3\}
\end{eqnarray*}
We interpret this (virtual) allocation as inducing the following value guarantee: for each good $j$ that agent $i$ receives in $A'$, she is entitled to a good whose value is at least her $j$ most-valuable good in the original instance. E.g., agent $b$ in the above must receive $x$ and either of $y$ or $z$.

One can generalize the picking sequence algorithm (\cite{bouveret2016characterizing}, see also \Cref{prop:ordered_reduction_wo_copies}), to a setting with copies in two ways. Assume we have a virtual allocation $A'$, where each good has at most $k$ extra copies. 

One way is to run $k+1$ instances of the picking sequence algorithm, where in each run each good in $M$ is allocated at most once. The first agent to pick is an agent which holds (one of the instances of) the virtual good $1$, the second holds $2$ and so on.
Another way, is to allow all agents which hold good $1$ to pick first, then all agents which hold good $2$, and so on. 
In this interpretation agent $i$ picks the most-valued good $g$ such that $g \notin A_i$, and $g$ was allocated no more than $k+1$ times (so each original good is copied at most $k$ times as well). 
Below we show that neither of these interpretations work if we break the ties arbitrarily. In fact, the second one fails for any tie-breaking rule, see \Cref{example:no_reduction_to_order}.

The failure of the two generalizations follows from the fact that an agent is not allowed to have two instances of the same good. If valuations were such that any $q$ copies of any good $g$ would yield a value of $q\cdot v(g)$, the same proof as in \cite{bouveret2016characterizing} would carry over to this setting.
We show examples in which the two generalizations fail when at most one extra copy is made (i.e., for $k=1$).
For the first generalization, the example below shows that breaking ties arbitrarily may fail.
\begin{example}
Consider the valuations and (virtual) allocation presented at the beginning of the section.
Assume ties are broken such that the picking sequences in the two rounds are $a,c,b$ and $b,a,c$, respectively.
The first run results in $(A_a,A_b,A_c) = (\{x\}, \{z\}, \{y\})$. 
In the second run, since the picking sequence is $b,a,c$, before agent $c$ gets to pick the allocation would be $(A_a,A_b,A_c) = (\{x,z\}, \{z,x\}, \{y\})$ which leaves no item for $c$ to pick in the last round.
\end{example}

For the second generalization, consider the following example, in which the value guarantees do not hold, regardless of tie-breaking.
\begin{example}\label{example:no_reduction_to_order}
\begin{eqnarray*}
    v_a :& x \succ y \succ z \\
    v_b :& z \succ y \succ x \\
    v_c :& z \succ y \succ x
\end{eqnarray*}
with the allocation
\begin{eqnarray*}
    A'_a = \{1,2,3\}, \qquad
    A'_b = \{1\}, \qquad
    A'_c = \{2,3\}
\end{eqnarray*}
\textit{Agents which hold good 1 pick:}
$(A_a,A_b,A_c) = (\{x\}, \{z\}, \emptyset)$. \\
\textit{Agents which hold 2 pick:}
$(A_a,A_b,A_c) = (\{x,y\}, \{z\}, \{z\})$. \\
\textit{Agents which hold 3 pick:}
Agent a cannot pick another good.    
\end{example}

\end{document}